\tikzstyle{edge} = [color=black,opacity=.15,line cap=round, line join=round, line width=8pt]
\theoremstyle{plain}
\newtheorem{theorem}{Theorem}[section]
\newtheorem{lemma}[theorem]{Lemma}
\newtheorem{corollary}[theorem]{Corollary}
\newtheorem{proposition}[theorem]{Proposition}
\newenvironment{proofofclaim}[1][Proof]{\noindent\textbf{#1.} }{\hfill\ensuremath{\square}}
\newtheorem{myclaim}[theorem]{Claim}
\newtheorem{observation}[theorem]{Observation}
\newtheorem{rrule}{Reduction Rule}[section]
\crefname{rrule}{Reduction Rule}{Reduction Rules}
\crefname{cond}{Condition}{Conditions}
\crefname{myclaim}{Claim}{Claims}
\crefname{step}{Step}{Steps}
\theoremstyle{definition}
\newtheorem{definition}{Definition}
\theoremstyle{remark}
\theoremstyle{plain}
\newcommand{\problemdef}[3]{
	\begin{center}
		\begin{minipage}{0.95\textwidth}
			\textsc{#1}
			
			\vspace{3pt}
			
			\setlength{\tabcolsep}{3pt}
			\begin{tabularx}{\textwidth}{@{}lX@{}}
				\textbf{Input:} 	& #2 \\
				\textbf{Question:} 	& #3
			\end{tabularx}
		\end{minipage}
	\end{center}
}
\newcommand{\Match}{\textsc{Matching}\xspace}
\newcommand{\BipMatch}{\textsc{Bipartite Matching}\xspace}
\newcommand{\paramEnv}[1]{``{#1}''}
\DeclareMathOperator{\lmv}{\#lmv}
\DeclareMathOperator{\rmv}{\#rmv}
\DeclareMathOperator{\noAug}{\#aug}
\DeclareMathOperator{\tab}{Tab}
\newcommand{\distPara}{k}
\newcommand{\solSize}{s}
\title{The Power of Linear-Time Data Reduction \\ for Maximum Matching}
\author[1]{George B. Mertzios}
\author[1,2]{André Nichterlein\thanks{Supported by a postdoc fellowship of the German Academic Exchange Service (DAAD) while at Durham University.}}
\author[2]{Rolf Niedermeier}
\affil[1]{School of Engineering and Computing Sciences, Durham University, UK, 
\texttt{george.mertzios@durham.ac.uk}}
\affil[2]{Institut f\"ur Softwaretechnik und Theoretische Informatik,  TU Berlin, Germany,
 \texttt{\{andre.nichterlein,rolf.niedermeier\}@tu-berlin.de}}
\date{}
\begin{document}

\maketitle

\thispagestyle{scrheadings}
\cfoot{}
\ohead{}
\ifoot{}

\begin{abstract}
Finding maximum-cardinality matchings in undirected graphs is arguably one of the most central 
graph primitives. 
For $m$-edge and $n$-vertex graphs, it is well-known to be solvable in $O(m\sqrt{n})$~time; however, for several applications this running time is still too slow. 
We investigate how linear-time (and almost linear-time) data reduction (used 
as preprocessing) can alleviate the situation.
More specifically, we focus on \emph{linear-time kernelization}. 
We start a deeper and systematic study
both for general graphs and for bipartite graphs. Our data reduction algorithms easily 
comply (in form of preprocessing) with every solution strategy (exact, approximate, heuristic), thus making them attractive in various settings.
\end{abstract}

\section{Introduction}

``Matching is a powerful piece of algorithmic magic''~\cite{Ski10}. 
In the \textsc{Maximum Matching} problem, given an undirected graph, one has to compute a maximum-cardinality set of nonoverlapping edges. 
Maximum matching is arguably among the most fundamental graph-algorithmic primitives allowing for a polynomial-time algorithm. 
More specifically, on an $n$-vertex and $m$-edge graph a maximum matching can be found in $O(m\sqrt{n})$ time~\cite{MV80}. 
Improving this upper time bound %
resisted decades of research. 
Recently, however, \citet{DP14} presented a linear-time algorithm that computes a~$(1-\epsilon )$-approximate maximum-weight matching, 
where the running time dependency on $\epsilon$ is $\epsilon^{-1}\log(\epsilon^{-1})$.
For the unweighted case, the~$O(m \sqrt{n})$ algorithm of \citet{MV80} implies a linear-time~$(1-\epsilon)$-approximation, where in this case the running time dependency on~$\epsilon$ is~$\epsilon^{-1}$~\cite{DP14}.
We take a different route: 
First, we do not give up the quest for optimal solutions.
Second, we focus on efficient---more specifically, linear-time executable---data reduction rules, that is, not solving an instance but significantly shrinking its size before actually solving the problem.\footnote{Doing so, however, we focus on the unweighted case.}
In the context of decision problems and parameterized algorithmics this approach is known as kernelization; this is a particularly active area of algorithmic research on NP-hard problems.

The spirit behind our approach is thus closer to the identification of efficiently solvable special cases of \textsc{Maximum Matching}. 
There is quite some body of work in this direction.
For instance, since an augmenting path can be found in linear time~\cite{GT85}, the standard augmenting path-based algorithm runs in $O(s(n+m))$~time, where~$s$ is the number of edges in the maximum matching.
\citet{Yus13} developed an $O(rn^2\log n)$-time algorithm, where $r$ is the difference between maximum and minimum degree of the input graph. 
Moreover, there are linear-time algorithms for computing maximum matchings in special graph classes, including convex bipartite~\cite{SY96}, strongly chordal~\cite{DK98}, chordal bipartite graphs~\cite{Cha96}, and cocomparability graphs~\cite{MNN17}. 

All this and the more general spirit of ``parameterization for polynomial-time solvable problems'' (also referred to as ``FPT~in~P'' or ``FPTP'' for short)~\cite{GMN17} forms the starting point of our research.  
Remarkably, \citet{FLPSW18} recently developed an algorithm to compute a maximum matching in graphs of treewidth~$k$ in $O(k^4 n\log^2 n)$ randomized time. 
Afterwards, \citet{IOO18} provided an elegant algorithm computing a maximum matching in graphs of treedepth~$\ell$ in $O(\ell \cdot m)$ time. 
This implies an~$O(k^2 n \log n)$-time algorithm where~$k$ is the treewidth, since $m \in O(k n)$ and $\ell \le (k+1)\log n$~\cite{NO12}.
\citet{CDP19} provided an $O(r^4 n + m)$-time algorithm where~$r$ is the modular-width.
\citet{KN18} improved this to $O(r^2 \log(r) n + m)$ time.

Following the paradigm of \emph{kernelization}, that is, provably effective and efficient data reduction, we provide a systematic exploration of the power of not only polynomial-time but actually linear-time data reduction for \textsc{Maximum Matching}. 
Thus, our aim (fitting within FPTP) is to devise problem kernels that are computable in linear time. 
In other words, the fundamental question we pose is whether there is a very efficient preprocessing that provably shrinks the input instance, where the effectiveness is measured by employing some parameters.
The philosophy behind is that if we can design linear-time data reduction algorithms, then we may employ them for free before afterwards employing any super-linear-time solving algorithm.
We believe that this sort of question deserves deeper investigation and we initiate it based on the \textsc{Maximum Matching} problem.
In fact, in follow-up work we demonstrated that such linear-time data reduction rules can significantly speed-up state-of-the-art solvers for \Match~\cite{KNNZ18}.

As kernelization is usually defined for decision problems, we use in the remainder of the paper the \emph{decision version} of \textsc{Maximum Matching}. 
In the rest of the paper we call this decision version \Match. 
In a nutshell, a kernelization of a decision problem instance is an algorithm that produces an equivalent instance whose size can solely be upper-bounded by a function in the parameter (preferably a polynomial function).
The focus on decision problems is justified by the fact that all our results, although formulated for the decision version, in a straightforward way extend to the corresponding optimization version (as also done in our follow-up work~\cite{KNNZ18}).
\problemdef{(\textsc{Maximum-Cardinality}) \Match}
	{An undirected graph~$G=(V,E)$ and a nonnegative integer~$s$.}
	{Is there a size-$s$ subset~$M_G \subseteq E$ of nonoverlapping (i.e.~disjoint) edges?}
Note that, for any polynomial-time solvable problem, solving the given instance and returning a trivial yes- or no-instance always produces a constant-size kernel in polynomial time.
Hence, we are looking for kernelization algorithms that are \emph{faster} than the algorithms solving the problem.
The best we can hope for is linear time.
For NP-hard problems, each polynomial-time kernelization algorithm is faster than any solution algorithm, unless P=NP.
While the focus of classical kernelization for NP-hard problems is mostly on improving the \emph{size} of the kernel, we particularly emphasize that for polynomially solvable problems it is mandatory to also focus on the \emph{running time} of the kernelization algorithm. 
Indeed, we can consider linear-time kernelization as the holy grail and this drives our research when studying kernelization for \Match.

\paragraph{Our contributions.}
We present three kernels for \Match (see \cref{tab:kernel-results} for an overview). %
\begin{table}
	\caption{
		Our kernelization results.
	}
	\label{tab:kernel-results}
	\begin{tabularx}{\textwidth}{lXlX}
		\toprule
		Parameter~$k$					& running time 	& kernel size & \\
		\midrule
		\multicolumn{4}{l}{\textbf{Results for \Match}} \\
		Feedback edge number		& $O(n+m)$ time & $O(k)$ vertices and edges & (\cref{thm:fes-lin-kernel}) \\
		Feedback vertex number		& $O(kn)$ time  & $2^{O(k)}$ vertices and edges & (\cref{thm:fvs-kernel}) \\
		\midrule
		\multicolumn{4}{l}{\textbf{Results for \BipMatch}} \\
		Distance to chain graphs 	& $O(n+m)$ time & $O(k^3)$ vertices & (\cref{thm:cubic-kernel-chain-graphs}) \\
		\bottomrule
	\end{tabularx}
\end{table}
All our parameterizations can be categorized as ``distance to triviality''~\cite{Cai03,GHN04}. 
They are motivated as follows. 
First, note that it is important that the parameters we exploit can be computed, or well approximated (within constant factors), in linear time regardless of the parameter value.
Next, note that maximum-cardinality matchings can be trivially found in linear time on trees (or forests). 
That is why we consider the edge deletion distance (feedback edge number) and vertex deletion distance (feedback vertex number) to forests.
Notably, there is a trivial linear-time algorithm for computing the feedback edge number and there is a linear-time factor-4 approximation algorithm for the feedback vertex number~\cite{BGNR98}.
We mention in passing that the parameter vertex cover number, which is lower-bounded by the feedback vertex number, has been frequently studied for kernelization. %
In particular, \citet{GP13,GMN17} provided a linear-time computable quadratic-size kernel for \Match with respect to the parameter solution size (or equivalently vertex cover number). 
Coming to bipartite graphs, we parameterize by the vertex deletion distance to chain graphs which is motivated as follows. 
First, chain graphs form one of the most obvious easy cases for bipartite graphs where 
\Match can be solved in linear time~\cite{SY96}. 
Second, we show that the vertex deletion distance of any bipartite graph to a chain graph 
can be 4-approximated in linear time. %
Moreover, vertex deletion distance to chain graphs lower-bounds the vertex cover number of a bipartite graph.
 
An overview of our main results is given in \cref{tab:kernel-results}.
We study kernelization for \Match parameterized by the feedback vertex number, 
that is, the vertex deletion distance to a forest (see \cref{sec:general-case}). 
As a warm-up we first show that a subset of our data reduction rules for the ``feedback vertex set kernel'' also yields a linear-time computable linear-size kernel for the typically much larger parameter feedback edge number (see \cref{sec:fes-kernel}). 
As for \BipMatch no faster algorithm is known than on general graphs, 
we kernelize \BipMatch with respect to the vertex deletion distance to chain graphs (see \cref{sec:bipartite-case}).

Seen from a high level, our two main results (\cref{thm:fvs-kernel,thm:cubic-kernel-chain-graphs}, see \cref{tab:kernel-results}) employ the same 
algorithmic strategy, namely upper-bounding 
(as a function of the parameter) the number of neighbors in the appropriate vertex deletion set (also called modulator)~$X$; 
that is, in the feedback vertex set or in the deletion set to chain graphs, respectively. 
To achieve this we develop new \emph{``irrelevant edge techniques''} tailored to these two kernelization problems. 
More specifically, whenever a vertex~$v$ of the deletion set~$X$ has large degree, then we efficiently detect edges incident to $v$ whose removal does not change the size of the maximum matching. 
Then the remaining graph can be further shrunk by scenario-specific data reduction rules. 
While this approach of removing irrelevant edges is natural, the technical details and the proofs of correctness become quite technical and combinatorially challenging. 
Note that there exists a trivial~$O(km)$-time solving (not only kernelization) algorithm, where~$k$ is the feedback vertex number. 
Our kernel has size $2^{O(k)}$. Therefore, only if~$k = o(\log n)$ our kernelization algorithm provably shrinks the initial instance. 
However, our result is still relevant: 
First, our data reduction rules might assist in proving a polynomial upper bound---so our result is a first step in this direction. 
Second, the running time~$O(kn)$ of our kernelization algorithm is a kind of ``half way'' between~$O(km)$ (which could be as bad as~$O(k^2n)$) and~$O(n+m)$ (which is best possible).
Finally, note that this work focuses on theoretical and worst-case analysis; in practice, our kernelization algorithm might achieve much better upper bounds on real-world input instances.
In fact, in experiments using the kernelization with respect to the feedback edge number, the observed kernels were always significantly smaller than the theoretical bound~\cite{KNNZ18}.

As a technical side remark, we emphasize that in order to achieve a linear-time kernelization algorithm, we often need to use suitable data structures and to 
carefully design the appropriate data reduction rules to be exhaustively applicable in linear time, making this form of ``algorithm engineering'' much more 
relevant than in the classical setting of mere polynomial-time data reduction rules.

\section{Preliminaries and basic observations} \label{sec:prelim}

\paragraph{Notation and Observations.} 
We use standard notation from graph theory. 
A \emph{feedback vertex (edge) set} of a graph~$G$ is a set~$X$ of vertices (edges) such that~$G-X$ is a tree or forest.
The \emph{feedback vertex (edge) number} denotes the size of a minimum feedback vertex (edge) set.
All paths we consider are simple paths. 
Two paths in a graph are called \emph{internally vertex-disjoint} if they are either completely vertex-disjoint or they overlap only in their endpoints. 
A \emph{matching} in a graph is a set of pairwise disjoint edges.
Let~$G = (V,E)$ be a graph and let~$M \subseteq E$ be a matching in~$G$.
The degree of a vertex is denoted by~$\deg(v)$.
A vertex~$v \in V$ is called \emph{matched} with respect to~$M$ if there is an edge in~$M$ containing~$v$, otherwise~$v$ is called \emph{free} with respect to~$M$.
If the matching~$M$ is clear from the context, then we omit ``with respect to~$M$''.
An \emph{alternating path} with respect to~$M$ is a path in~$G$ such that every second edge of the path is in~$M$.
An \emph{augmenting path} is an alternating path whose endpoints are free.
It is well known that a matching~$M$ is maximum if and only if there is no augmenting path for it. %
Let~$M \subseteq E$ and~$M' \subseteq E$ be two matchings in~$G$. 
We denote by~$G(M,M') := (V, M \bigtriangleup M')$ the graph containing only the edges in the symmetric difference of~$M$ and~$M'$, that is, $M \bigtriangleup M' := M \cup M' \setminus (M \cap M')$.
Observe that every vertex in~$G(M,M')$ has degree at most two.

For a matching~$M \subseteq E$ for~$G$ we denote by~$M^{\max}_G(M)$ a maximum matching in~$G$ with the largest possible overlap (in number of edges) with~$M$. That is, $M^{\max}_G(M)$ is a maximum matching in~$G$ such that for each maximum matching~$M'$ for~$G$ it holds that ${|M \bigtriangleup M'| \ge |M \bigtriangleup M^{\max}_G(M)|}$.
Observe that if~$M$ is a maximum matching for~$G$, then~$M^{\max}_G(M) = M$.
Furthermore observe that~$G(M,M^{\max}_G(M))$ consists of only odd-length paths and isolated vertices, and each of these paths is an augmenting path for~$M$. 
Moreover the paths in~${G(M, \allowbreak M^{\max}_G(M))}$ are as short as possible:
\begin{observation}\label{obs:shortest-augmenting-paths}
	For any path~$v_1, v_2, \ldots, v_p$ in~$G(M,M^{\max}_G(M))$ it holds that~$\{v_{2i-1},v_{2j}\} \notin E$ for every~$1 \le i < j \le p/2$.
\end{observation}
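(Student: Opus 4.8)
The plan is to argue by contradiction, exploiting the defining extremal property of $M^{\max}_G(M)$: among all maximum matchings it has the largest overlap with~$M$. Writing $M' := M^{\max}_G(M)$ for brevity, I would first pin down the alternation pattern along $P = v_1, \ldots, v_p$. Since $P$ is an augmenting path for $M$ in $G(M,M')$ and $v_1$ is free, the first edge $\{v_1,v_2\}$ lies in $M'$; hence the $M'$-edges along $P$ are exactly $\{v_{2t-1},v_{2t}\}$ for $1 \le t \le p/2$, while the $M$-edges along $P$ are exactly $\{v_{2t},v_{2t+1}\}$ for $1 \le t \le p/2-1$. In particular $p$ is even, each odd-indexed vertex $v_{2i-1}$ is the left endpoint of an $M'$-edge, and each even-indexed vertex $v_{2j}$ is the right endpoint of an $M'$-edge.

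Next, suppose toward a contradiction that $\{v_{2i-1},v_{2j}\} \in E$ for some $1 \le i < j \le p/2$. I would follow the subpath $v_{2i-1}, v_{2i}, \ldots, v_{2j}$ and close it with the chord $\{v_{2j},v_{2i-1}\}$, obtaining an even cycle $C$ on the $2(j-i+1)$ vertices $v_{2i-1}, \ldots, v_{2j}$. The key point is that $C$ alternates with respect to $M'$: in cyclic order its edges are $\{v_{2i-1},v_{2i}\} \in M'$, then $\{v_{2i},v_{2i+1}\} \notin M'$, and so on up to $\{v_{2j-1},v_{2j}\} \in M'$, and finally the chord $\notin M'$, so $M'$-edges and non-$M'$-edges strictly alternate around the even cycle and the pattern closes up consistently. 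Since $C$ is an even $M'$-alternating cycle, $M^\ast := M' \bigtriangleup C$ is again a matching with $|M^\ast| = |M'|$, and hence is itself a maximum matching.

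Finally I would compare the overlaps with $M$. Passing from $M'$ to $M^\ast$ removes exactly the $M'$-edges of $C$, namely $\{v_{2i-1},v_{2i}\}, \ldots, \{v_{2j-1},v_{2j}\}$, none of which lie in $M$, and adds the complementary edges of $C$, namely the $M$-edges $\{v_{2i},v_{2i+1}\}, \ldots, \{v_{2j-2},v_{2j-1}\}$ together with the chord. The chord itself is not in $M$, since in $M$ every vertex of $P$ is matched only to a neighbour on $P$, whereas $v_{2i-1}$ and $v_{2j}$ are non-adjacent on $P$ (their index difference $2(j-i)+1 \ge 3$). Thus exactly $j-i \ge 1$ of the added edges lie in $M$ while no removed edge did, giving $|M^\ast \cap M| = |M' \cap M| + (j-i) > |M' \cap M|$. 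This contradicts the choice of $M' = M^{\max}_G(M)$ as a maximum matching maximizing the overlap with~$M$, which finishes the proof.

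I expect the only delicate points to be bookkeeping rather than conceptual: verifying that the $M'$-alternation of $C$ closes consistently (guaranteed by $C$ having even length), confirming precisely which edges differ between $M'$ and $M^\ast$, and handling the boundary cases $i=1$ (where $v_1$ is free) and $j = p/2$ (where $v_{2j}=v_p$ is free). In those cases the relevant endpoint carries no $M$-edge at all, so the chord still avoids $M$ and the count $j-i \ge 1$ is unaffected.
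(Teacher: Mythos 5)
Your proof is correct and is essentially the paper's argument in different packaging: the paper shortcuts the augmenting path to $v_1,\ldots,v_{2i-1},v_{2j},\ldots,v_p$ and re-augments along this shorter path, whereas you flip the $M'$-alternating even cycle formed by the chord together with the subpath $v_{2i-1},\ldots,v_{2j}$ --- the resulting maximum matching $M^\ast$ is the very same, and both versions contradict the extremality of $M^{\max}_G(M)$ in the same way. Your write-up just makes the overlap bookkeeping explicit where the paper's one-line proof leaves it implicit.
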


\begin{proof}
	Assume that~$\{v_{2i-1},v_{2j}\} \in E$.
	Then~$v_1, v_2, \ldots, v_{2i-2},v_{2i-1},v_{2j},v_{2j+1}, \allowbreak\ldots, v_p$ is a shorter path which is also an augmenting path for~$M$ in~$G$.
	The corresponding maximum matching~$M'$ satisfies~$|M \bigtriangleup M^{\max}_G(M)| > |M \bigtriangleup M'|$, a contradiction to the definition of~$M^{\max}_G(M)$. \qed
\end{proof}

It is easy to see that removing~$k$ vertices in a graph can reduce the maximum matching size by at most~$k$:

\begin{observation}\label{obs:MatchingSizeGandG-X}
	Let~$G = (V,E)$ be a graph with a maximum matching~$M_G$, let~$X \subseteq V$ be a vertex subset of size~$k$, and let~$M_{G-X}$ be a maximum matching for~$G-X$.
	Then, $|M_{G-X}| \le |M_G| \le |M_{G-X}| + \distPara$.
\end{observation}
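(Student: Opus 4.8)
The plan is to establish the two inequalities separately, each by a short combinatorial argument that relates matchings in $G$ to matchings in $G-X$. Both directions are elementary; the core idea is simply that $G-X$ is obtained from $G$ by deleting $k$ vertices, and that each deleted vertex can destroy at most one matching edge.

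For the lower bound $|M_{G-X}| \le |M_G|$, I would observe that $G-X$ is a subgraph of $G$, so every matching in $G-X$ is also a matching in $G$. In particular $M_{G-X}$ is a matching in $G$, and since $M_G$ is a maximum matching in $G$ its cardinality dominates that of every matching in $G$. This yields $|M_{G-X}| \le |M_G|$ at once.

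For the upper bound $|M_G| \le |M_{G-X}| + k$, I would start from the maximum matching $M_G$ and delete from it every edge that is incident to a vertex of $X$. Because $M_G$ is a matching, each of the $k$ vertices of $X$ lies in at most one edge of $M_G$, so at most $|X| = k$ edges are removed. The surviving edges avoid all of $X$ and therefore form a matching of $G-X$ of size at least $|M_G| - k$. As $M_{G-X}$ is a maximum matching in $G-X$, we get $|M_{G-X}| \ge |M_G| - k$, which rearranges to the claimed inequality.

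The only point that requires a moment of care---and thus the mild ``obstacle''---is the counting step in the upper bound: one must use the matching property in two ways at once, namely to bound the number of deleted edges by $k$ (each vertex of $X$ is in at most one edge) and simultaneously to guarantee that the set of surviving edges is genuinely a matching of $G-X$ rather than merely a subset of $E$ that might still touch $X$. Once this is made explicit, both bounds follow immediately and no further machinery is needed.
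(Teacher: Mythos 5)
Your proof is correct, and since the paper states this as an observation without any written proof, your argument is exactly the standard one the authors implicitly rely on: the first inequality from $M_{G-X}$ being a matching of $G$, the second from the fact that deleting the at most $k$ edges of $M_G$ touching $X$ leaves a matching of $G-X$ of size at least $|M_G|-k$. Nothing is missing.
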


\paragraph{Kernelization.}
A \emph{parameterized problem} is a set of instances~$(I,k)$ where~$I \in\Sigma^*$ for a finite alphabet $\Sigma$, and~$k\in \mathbb{N}$ is the \emph{parameter}.
We say that two instances~$(I,k)$ and $(I',k')$ of parameterized problems~$P$ and~$P'$ are \emph{equivalent} if $(I,k)$ is a yes-instance for~$P$ if and only if $(I',k')$ is a yes-instance for~$P'$. 
A \emph{kernelization} is an algorithm that, given an instance~$(I,k)$ of a parameterized problem~$P$, computes in polynomial time an equivalent instance~$(I',k')$ of~$P$ (the \emph{kernel}) such that $|I'|+k'\leq f(k)$ for some  computable function~$f$. %
We say that~$f$ measures the \emph{size} of the kernel, and if~$f(k)\in k^{O(1)}$, then we say that $P$ admits a polynomial kernel. 
Typically, a kernel is achieved by applying polynomial-time executable data reduction rules.
We call a data reduction rule~$\mathcal{R}$ \emph{correct} if the new instance~$(I',k')$ that results from applying~$\mathcal{R}$ to~$(I,k)$ is equivalent to~$(I,k)$.
An instance is called \emph{reduced} with respect to some data reduction rule if further application of this rule has no effect on the instance.

\section{Kernelization for Matching on General Graphs} \label{sec:general-case}

In this section, we investigate the possibility of efficient and effective preprocessing for \Match.
As a warm-up, we first present in \cref{sec:fes-kernel} a simple, linear-size kernel for \Match with respect to the parameter \paramEnv{feedback edge number}.
Exploiting the data reduction rules and ideas used for this kernel, we then present in \cref{sec:fvs-kernel} the main result of this section: an exponential-size kernel for the almost always significantly smaller parameter \paramEnv{feedback vertex number}.

\subsection{Warm-up: Parameter feedback edge number} \label{sec:fes-kernel}
We provide a linear-time computable linear-size kernel for \Match parameterized by the feedback edge number, that is, the size of a minimum feedback edge set.
Observe that a minimum feedback edge set can be computed in linear time via a simple depth-first search or breadth-first search.
The kernel is based on the next two simple data reduction rules due to \citet{KS81}. 
These rules deal with vertices of degree at most two.

\begin{rrule}\label{rule:deg-zero-one-vertices}
	Let~$v \in V$.
	If~$\deg(v) = 0$, then delete~$v$.
	If~$\deg(v) = 1$, then delete~$v$ and its neighbor and decrease the solution size~$\solSize$ by one ($v$ is matched with its neighbor).
\end{rrule}

\begin{rrule}\label{rule:deg-two-vertices}
	Let~$v$ be a vertex of degree two and let~$u,w$ be its neighbors.
	Then remove~$v$, merge~$u$ and~$w$, and decrease the solution size~$\solSize$ by one.
\end{rrule}

The correctness was stated by \citet{KS81}. For the sake of completeness, we give a proof.

\begin{lemma}\label[lemma]{lem:rule-deg-0-1-2-correct}
	\cref{rule:deg-zero-one-vertices,rule:deg-two-vertices} are correct.
\end{lemma}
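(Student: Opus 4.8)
The plan is to establish, for each rule, the equivalence of the decision instances by showing that the maximum matching size changes by exactly the amount by which $\solSize$ is decreased. Concretely, if a rule transforms $(G,\solSize)$ into $(G',\solSize')$ by deleting or merging vertices and decreasing the solution size by $c$ (with $c=0$ for an isolated vertex and $c=1$ otherwise, so that $\solSize' = \solSize - c$), I would prove that the size of a maximum matching of $G$ equals the size of a maximum matching of $G'$ plus $c$. Since $G$ has a matching of size $\solSize$ precisely when its maximum matching has size at least $\solSize$, this identity immediately gives ``$G$ has a size-$\solSize$ matching'' $\iff$ ``$G'$ has a size-$\solSize'$ matching'', i.e.\ correctness in the sense defined above. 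Throughout I would rely on the standard exchange and augmenting-path arguments recalled in the preliminaries.

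For \cref{rule:deg-zero-one-vertices} the two cases are straightforward. If $\deg(v)=0$, then no matching edge can touch $v$, so $G$ and $G-v$ have the same maximum matching size while $\solSize$ is unchanged. For a degree-one vertex $v$ with neighbor $u$, the key claim is that some maximum matching contains the edge $\{v,u\}$: starting from any maximum matching $M$ in which $v$ is free, the vertex $u$ must be matched (otherwise $\{v,u\}$ would be an augmenting path for $M$), say to $w$; replacing $\{u,w\}$ by $\{u,v\}$ yields a maximum matching through $\{v,u\}$. Deleting $v$ and $u$ then decreases the maximum matching size by exactly one, matching the decrement of $\solSize$, and the converse direction follows since any matching of $G-\{v,u\}$ extends by $\{v,u\}$.

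The substantial case is \cref{rule:deg-two-vertices}, where $v$ has neighbors $u,w$ and we delete $v$, identify $u$ and $w$ into a single vertex $z$ (discarding any resulting loop or parallel edges), and decrease $\solSize$ by one. I would prove the two inequalities separately. For the upper bound on the maximum matching of $G$, I first argue (again by an exchange along the single edge incident to a free $v$) that some maximum matching $M$ of $G$ matches $v$ to $u$ or $w$; removing that edge and relabelling the at most one remaining matching edge incident to $\{u,w\}$ as an edge at $z$ produces a matching of $G'$ of size $|M|-1$. For the reverse direction I take a matching $M'$ of $G'$ and distinguish whether $z$ is free or matched: if $z$ is free, I add $\{v,u\}$; if $z$ is matched via an edge that corresponds in $G$ to $\{u,x\}$ (resp.\ $\{w,x\}$), I realize that edge and additionally match $v$ to the still-free neighbor $w$ (resp.\ $u$). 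Each case yields a matching of $G$ of size $|M'|+1$.

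I expect the main obstacle to be the bookkeeping for the identification step in \cref{rule:deg-two-vertices}: one must verify that after merging $u$ and $w$ at most one matching edge can be incident to $z$, that loops and parallel edges created by the merge are irrelevant to matchings (so that working in the simple graph $G'$ is legitimate), and that the chosen preimage edge $\{u,x\}$ or $\{w,x\}$ always exists and leaves exactly one of $u,w$ free to pair with $v$. These are precisely the points where a careless argument could silently assume that $u$ and $w$ are nonadjacent or that the merge preserves edge identities, so I would make the case distinction on the status of $z$ explicit and check each case against the simple-graph convention.
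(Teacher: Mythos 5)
Your proposal is correct and follows essentially the same route as the paper's proof: both directions are handled by direct exchange arguments showing that each rule decreases the maximum matching size by exactly the amount subtracted from $\solSize$. Your extra care about the merge step (normalizing $M$ so that $v$ is matched first, and checking that loops/parallel edges created by identifying $u$ and $w$ are harmless) is a minor and sound variation on the paper's case analysis, which treats the free-$v$ case directly instead.
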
 

\begin{proof}
	If~$v$ has degree zero, then clearly~$v$ cannot be in any matching and we can remove~$v$.
	
	If~$v$ has degree one, then let~$u$ be its single neighbor. 
	Let~$M$ be a maximum matching of size at least~$\solSize$ for~$G$. 
	Then~$u$ is matched in~$M$ since otherwise adding the edge~$\{u,v\}$ would increase the size of the matching.
	Thus, a maximum matching in~$G' = G - u -v$ has size at least~$\solSize-1$.
	Conversely, a maximum matching of size~$\solSize-1$ in~$G'$ can easily be extended by the edge~$\{u,v\}$ to a maximum matching of size~$\solSize$ in~$G$.
	
	If~$v$ has degree two, then let~$u$ and~$w$ be its two neighbors.
	Let~$M$ be a maximum matching of size at least~$\solSize$. 
	If~$v$ is matched in~$M$ (i.e.~either with the edge $\{u,v\}$ or with the edge $\{v,w\}$), then deleting~$v$ and merging~$u$ with~$w$ decreases the size of~$M$ by one. 
	Similarly, if~$v$ is not matched in~$M$, then both $u$ and~$w$ are matched in~$M$, since otherwise adding the edge~$\{u,v\}$ (resp.~$\{v,w\}$) would increase the size of the matching, a contradiction.
	Thus, in this case, deleting~$v$ and merging~$u$ with~$w$ decreases again the size of~$M$ by one ($M$~looses either the edge incident to~$v$ or one of the edges incident to~$u$ and~$w$).
	Hence, the resulting graph~$G''$ has a maximum matching of size at least~$\solSize - 1$.
	Conversely, let~$M''$ be a matching of size at least~$\solSize - 1$ for~$G''$.
	If the merged vertex~$vw$ is free, then~$M := M'' \cup \{\{u,v\}\}$ is a matching of size~$\solSize$ in~$G$.
	Otherwise, $vw$ is matched to some vertex~$y$ in~$M''$.
	Then matching~$y$ in~$G$ with either~$v$ or~$w$ (at least one of the two vertices is a neighbor of~$y$) and matching~$u$ with the other vertex yields a matching of size at least~$\solSize$ for~$G$.  \qed
\end{proof}

While it is easy to exhaustively apply \cref{rule:deg-zero-one-vertices} in linear time, applying \cref{rule:deg-two-vertices} exhaustively in linear time is nontrivial~\cite{BK09a}.
Note that applying \cref{rule:deg-two-vertices} might create new degree-one vertices and thus \cref{rule:deg-zero-one-vertices} might become applicable again.
To show our problem kernel, in the following theorem it is sufficient to first apply \cref{rule:deg-zero-one-vertices} exhaustively and afterwards apply \cref{rule:deg-two-vertices} exhaustively.

\begin{theorem}\label{thm:fes-lin-kernel}
	\Match admits a linear-time computable linear-size kernel with respect to the parameter \paramEnv{feedback edge number}~$k$.
\end{theorem}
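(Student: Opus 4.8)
The plan is to apply the linear-time data reduction rules (Reduction Rules~\ref{rule:deg-zero-one-vertices} and~\ref{rule:deg-two-vertices-restricted}) exhaustively and then argue that the reduced graph has only $O(k)$ vertices and edges, where $k$ is the feedback edge number. By \cref{lem:rule-deg-0-1-2-restricted-lin-time} the exhaustive application runs in $O(n+m)$ time, and by \cref{lem:rule-deg-0-1-2-correct} the rules are correct, so the resulting instance is equivalent. A minimum feedback edge set $F$ with $|F| = k$ can be computed in linear time via depth-first search, so the only remaining task is the size bound on the reduced graph.

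\textbf{First I would} set up the structural accounting. Let $G'$ be the reduced graph and let $F'$ be a feedback edge set of $G'$ obtained from $F$ (deletions and merges do not increase the feedback edge number, so $|F'| \le k$). Deleting the edges of $F'$ leaves a forest $T = G' - F'$. The key point is that in $G'$ every vertex has degree at least three \emph{except} possibly for those vertices incident to the feedback edges: after exhaustive application of \cref{rule:deg-zero-one-vertices}, there are no vertices of degree $0$ or $1$; and after \cref{rule:deg-two-vertices-restricted}, any degree-two vertex must have a neighbor of degree at least three. I would bound the number of low-degree vertices in $T$ by charging them to the endpoints of $F'$.

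\textbf{The main step} is the standard forest-counting argument. In the forest $T$, classify the vertices by degree. The number of leaves and degree-one/degree-two vertices of $T$ can be controlled because every such low-degree vertex of $T$ that is not forced by the reduction rules must be incident to an edge of $F'$ (otherwise its degree in $G'$ equals its degree in $T$, contradicting the degree lower bounds guaranteed by the reduced form). Since $|F'| \le k$, only $O(k)$ vertices receive extra degree from feedback edges, so $T$ has $O(k)$ leaves. A forest in which all but $O(k)$ vertices have degree at least three has $O(k)$ vertices in total: formally, if $\ell$ is the number of leaves and $L$ the number of non-leaf low-degree exceptions, then the number of degree-$\ge 3$ vertices is at most $\ell + 2L$ (by summing degrees $\sum_v \deg_T(v) = 2|E(T)| \le 2(n-1)$ and using that high-degree vertices contribute at least $3$ each), giving $|V(G')| = O(k)$. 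Then $|E(G')| \le |E(T)| + |F'| = O(k)$ as well.

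\textbf{The hard part} will be making the degree bookkeeping fully precise, in particular handling the merge operation of \cref{rule:deg-two-vertices-restricted}: a merge can create multi-edges or change degrees in ways that must be tracked so that the claim ``every degree-two vertex of the reduced graph has a high-degree neighbor'' genuinely holds at the fixed point, and so that the inherited feedback edge set $F'$ is correctly accounted for. Once the invariant that the reduced graph is simple and has the stated degree structure is nailed down, the counting is routine. I would therefore devote most of the care to verifying that after exhaustive reduction no degree-zero/one vertex remains and that each surviving degree-two vertex is adjacent to a vertex of degree at least three, since this is exactly what converts the feedback-edge bound into the linear vertex-and-edge bound.
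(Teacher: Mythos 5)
Your proposal is correct and follows essentially the same route as the paper's proof: exhaustively apply \cref{rule:deg-zero-one-vertices,rule:deg-two-vertices-restricted} in linear time via \cref{lem:rule-deg-0-1-2-restricted-lin-time}, then classify the vertices of the forest obtained by removing a feedback edge set by their degree, charge leaves (and hence degree-$\ge 3$ forest vertices) to the $\le 2k$ feedback-edge endpoints, and bound the degree-two vertices by their forced adjacency to a high-degree vertex or a feedback edge. The paper carries out exactly this accounting with explicit constants ($12k$ vertices, $13k$ edges), so the only remaining work in your write-up is the routine bookkeeping you already identified.
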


\begin{proof}
	Let $G$ be the input graph. First we apply \cref{rule:deg-zero-one-vertices} to $G$ exhaustively, obtaining graph $G_1 = (V_1, E_1)$, and then we apply \cref{rule:deg-two-vertices} to~$G_1$ exhaustively, obtaining graph ~$G_2 = (V_2, E_2)$. 
	Note that both~$G_1$ and~$G_2$ can be computed in linear time~\cite{BK09a}.
	We will prove that~$G_2$ has at most~$6 \distPara$ vertices and~$7 \distPara$ edges.
	Denote with~$X_1 \subseteq E_1$ and~$X_2 \subseteq E_2$ the minimum feedback edge sets for~$G_1$ and~$G_2$ respectively. 
	Note that $|X_2| \le |X_1| \le \distPara$.
For any graph $H$, denote with~$V^1_{H}$, $V^2_{H}$, and~$V^{\ge 3}_{H}$ the vertices of $H$ 
that have degree one, two, and more than two, respectively (in our case $H$ will be replaced 
by~$G_1 - X_1$ or~$G_2 - X_2$, respectively). 
Observe that all vertices in~$G_1$ have degree at least two, since $G_1$ is reduced with respect to \cref{rule:deg-zero-one-vertices}.
	Thus $|V^1_{G_1-X_1}| \le 2\distPara$, as each leaf in~$G_1-X_1$ has to be incident to an edge in~$X_1$.
	Next, since~$G_1-X_1$ is a forest, we have that~$|V^{\ge 3}_{G_1-X_1}| < |V^1_{G_1-X_1}|$, 
and thus~$|V^{\ge 3}_{G_1-X_1}| < 2\distPara$.
	Note that the number of degree-two vertices in~$G_1$ cannot be upper-bounded by a function of~$\distPara$.
	However, observe that the exhaustive application of \cref{rule:deg-two-vertices} to~$G_1$ removes all vertices that have degree-two in~$G_1$ and possibly merges some of the remaining vertices.
	Thus, $G_2$ contains no vertices with degree two and thus, $|V^2_{G_2-X_2}| \le 2 \distPara$.
	Altogether, the number of vertices in~$G_2$ is at most $|V^1_{G_1-X_1}| + |V^2_{G_2-X_2}| + |V^{\ge 3}_{G_1-X_1}| \le 6 \distPara$.
	Since~$G_2-X_2$ is a forest, it follows that~$G_2$ has at most~$|V_2|+\distPara \le 7\distPara$ edges. \qed 
\end{proof}

Applying the~$O(m\sqrt{n})$-time algorithm for \Match~\cite{MV80} on the above kernel yields the following.

\begin{corollary}\label[corollary]{cor:fes-alg}
 	\Match can be solved in~$O(n + m + k^{1.5})$ time, where~$k$ is the feedback edge number.
\end{corollary}

\subsection{Parameter feedback vertex number} \label{sec:fvs-kernel} 
We next provide for \Match a kernel of size~$2^{O(k)}$ computable in~$O(kn)$ time where~$k$ is the \paramEnv{feedback vertex number}.
Using a known linear-time factor 4-approximation algorithm~\cite{BGNR98}, we can compute an approximate feedback vertex set and use it in our kernelization algorithm.

Roughly speaking, our kernelization algorithm extends the linear-time computable kernel with respect to the parameter \paramEnv{feedback edge set}.
Thus, \cref{rule:deg-zero-one-vertices,rule:deg-two-vertices} play an important role in the kernelization.
Compared to the other kernels presented in this paper, the kernel presented here comes at the price of higher running time~$O(kn)$ and bigger kernel size (exponential).
It remains open whether \Match parameterized by the \paramEnv{feedback vertex number} admits a linear-time computable kernel (possibly of exponential size), or whether it admits a polynomial kernel computable in~$O(kn)$ time.

Subsequently, we describe our kernelization algorithm which keeps in the kernel all vertices in the given feedback vertex set~$X$ and shrinks the size of~$G-X$.
Before doing so, we need some further notation.
In this section, we assume that each tree is rooted at some arbitrary (but fixed) vertex such that we can refer to the parent and children of a vertex.
A leaf in~$G-X$ is called a \emph{bottommost leaf} either if it has no siblings or if all its siblings are also leaves. (Here, bottommost refers to the subtree with the root being the parent of the considered leaf.)
The outline of the algorithm is as follows (we assume throughout this section that~$k < \log n$ since otherwise the input instance is already a kernel of size~$O(2^k)$):
\begin{enumerate}
	\item \label[step]{step:reduce-rule-1-2} Reduce~$G$ wrt.~\cref{rule:deg-two-vertices,rule:deg-zero-one-vertices}. %
	\item \label[step]{step:max-matching-G-X} Compute a maximum matching~$M_{G-X}$ in~$G-X$ 
		(where $X$ is a feedback vertex set that is computed by the linear-time 4-approximation algorithm~\cite{BGNR98}).
	\item \label[step]{step:only-leaves-free} Modify~$M_{G-X}$ in linear time such that only the leaves of~$G-X$ are free.
	\item \label[step]{step:bound-free-leaves} Upper-bound the number of free leaves in~$G-X$ by~$k^2$ (\cref{sec:kernel-fvs-steps1-to-4}).
	\item \label[step]{step:bound-bottommost-leaves} Upper-bound the number of bottommost leaves in~$G-X$ by~$O(k^2 2^k)$
	(\cref{sec:kernel-fvs-step5}).
	\item \label[step]{step:bound-long-paths} Upper-bound the degree of each vertex in~$X$ by~$O(k^2 2^k)$. Then, use \cref{rule:deg-two-vertices,rule:deg-zero-one-vertices} to provide the kernel of size~$2^{O(k)}$ 
	(\cref{sec:kernel-fvs-step6}).
\end{enumerate}
Whenever we reduce the graph at some step, we also show that the reduction is correct. 
That is, the given instance is a yes-instance if and only if the reduced one is a yes-instance.
The correctness of our kernelization algorithm then follows by the correctness of each step. 
We discuss in the following the details of each step.

\subsubsection{\cref{step:only-leaves-free,step:max-matching-G-X,step:reduce-rule-1-2,step:bound-free-leaves}} 
\label{sec:kernel-fvs-steps1-to-4}

In this subsection, we first discuss the straightforward \cref{step:only-leaves-free,step:max-matching-G-X,step:reduce-rule-1-2} and then turn to \cref{step:bound-free-leaves}.

\paragraph{\Cref{step:only-leaves-free,step:max-matching-G-X,step:reduce-rule-1-2}.}
As in \cref{sec:fes-kernel}, we perform \cref{step:reduce-rule-1-2} in linear time by first applying \cref{rule:deg-zero-one-vertices} and then \cref{rule:deg-two-vertices} using the algorithm due to \citet{BK09a}.
By \cref{lem:rule-deg-0-1-2-correct} this step is correct. 

A maximum matching~$M_{G-X}$ in \cref{step:max-matching-G-X} can be computed by repeatedly matching a free leaf to its neighbor and by removing both vertices from the graph (thus effectively applying \cref{rule:deg-zero-one-vertices} to~$G-X$).
Clearly, this can be done in linear time.

\cref{step:only-leaves-free} can be done in~$O(n)$ time by traversing each tree in~$G-X$ in a BFS manner starting from the root:
If a visited inner vertex~$v$ is free, then observe that all children are matched since~$M_{G-X}$ is maximum.
Pick an arbitrary child~$u$ of~$v$ and match it with~$v$. 
The vertex~$w$ that was previously matched to~$u$ is now free and since it is a child of~$u$, it will be visited in the future.
Observe that \cref{step:max-matching-G-X,step:only-leaves-free} do not change the graph but only the auxiliary matching~$M_{G-X}$, and thus the first three steps are correct.
The next observation summarizes the short discussion above.

\begin{observation}\label{obs:first-three-steps}
	\Cref{step:only-leaves-free,step:max-matching-G-X,step:reduce-rule-1-2} are correct and can be applied in linear time.
\end{observation}

\paragraph{\Cref{step:bound-free-leaves}.}
Our goal is to upper-bound the number of edges between vertices of~$X$ and~$V \setminus X$, since we can then use a simple analysis as for the parameter \paramEnv{feedback edge set}. 
Furthermore, recall that by \cref{obs:MatchingSizeGandG-X} the size of any maximum matching in~$G$ is at most~$k$ plus the size of~$M_{G-X}$.
Now, the crucial observation here is that, if a vertex~$x \in X$ has at least~$k$ neighbors $\{v_1, \ldots, v_k\}$ in~$V \setminus X$ which are free wrt.~$M_{G-X}$, 
then there exists a maximum matching where~$x$ is matched to one of $\{v_1, \ldots, v_k\}$ since at most~$k-1$ can be ``blocked'' by other matching edges.
Indeed, consider otherwise a maximum matching $M$ in which $x$ is not matched with any of $\{v_1, \ldots, v_k\}$. 
Then, since $|X|=k$, note that at most $k-1$ vertices among $\{v_1, \ldots, v_k\}$ are matched in $M$ with a vertex in $X$; 
suppose without loss of generality that $v_k$ is not matched with any vertex in $X$ (and thus $v_k$ is not matched at all in $M$). If $x$ is unmatched in $M$, 
then the matching $M\cup \{\{x,v_k\}\}$ has greater cardinality than $M$, a contradiction. 
Otherwise, if $x$ is matched in $M$ with a vertex $z$, then $M\cup\{\{x,v_k\}\} \setminus \{\{x,z\}\}$ is 
another maximum matching of $G$, in which $x$ is matched with a vertex among $\{v_1, \ldots, v_k\}$.
Formalizing this idea, we obtain the following data reduction rule.
\begin{rrule} \label{rule:k-free-neighbors-not-in-X}
	Let~$G = (V,E)$ be a graph, let $X \subseteq V$ be a vertex subset of size~$k$, and let $M_{G-X}$ be a maximum matching for~$G-X$.
	If there is a vertex~$x \in X$ with at least~$k$ free neighbors~$V_x = \{v_1, \ldots, v_k\} \subseteq V\setminus X$, then delete all edges from~$x$ to vertices in~$V \setminus V_x$.
\end{rrule}

We first show the correctness and then the running time of \cref{rule:k-free-neighbors-not-in-X}.

\begin{lemma} \label[lemma]{lem:rule-k-free-neighbors-correct}
	\cref{rule:k-free-neighbors-not-in-X} is correct. %
\end{lemma}
\begin{proof}
	Denote by~$s$ the size of a maximum matching in the input graph~$G = (V,E)$ and by~$s'$ the size of a maximum matching in the new graph~$G' = (V',E')$, where some edges incident to~$x$ are deleted. 
	We need to show that~$s = s'$.
	Since any matching in~$G'$ is also a matching in~$G$, we easily obtain~$s \ge s'$.

	It remains to show~$s \le s'$.
	To this end, let~$M_G := M^{\max}_G(M_{G-X})$ be a maximum matching for~$G$ with the maximum overlap with~$M_{G-X}$ (see \cref{sec:prelim}).
	If~$x$ is free wrt.~$M_G$ or if~$x$ is matched to a vertex~$v$ that is also in~$G'$ a neighbor of~$x$, then~$M_G$ is also a matching in~$G'$ ($M_G \subseteq E'$) and thus we have~$s \le s'$.
	Hence, consider the remaining case where~$x$ is matched to some vertex~$v$ such that~$\{v,x\} \notin E'$, that is, the edge~$\{v,x\}$ was deleted by \cref{rule:k-free-neighbors-not-in-X}.
	Hence, $x$ has~$k$ neighbors~$v_1, \ldots, v_k$ in~$V \setminus X$ such that each of these neighbors is free wrt.~$M_{G-X}$ and none of the edges~$\{v_i,x\}, i \in [k]$, was deleted.
	Observe that by the choice of~$M_G$, the graph $G(M_{G-X}, M_G)$ (the graph over vertex set~$V$ and the edges that are either in~$M_{G-X}$ or in~$M_G$, see \cref{sec:prelim}) contains exactly~$s - |M_{G-X}|$ paths of length at least one. 
	Each of these paths is an augmenting path for~$M_{G-X}$.
	By \cref{obs:MatchingSizeGandG-X}, we have~$s - |M_{G-X}| \le k$.
	Observe that~$\{v,x\}$ is an edge in one of these augmenting paths; denote this path with~$P$. 
	Thus, there are at most~$k-1$ augmenting paths in~$G(M_{G-X},M_G)$ that do not contain~$x$.
	Also, each of these paths contains exactly two vertices that are free wrt.~$M_{G-X}$: the endpoints of the path.
	This means that no vertex in~$X$ is an inner vertex on such a path.
	Furthermore, since~$M_{G-X}$ is a maximum matching, it follows that for each path at most one of these two endpoints is in~$V \setminus X$.
	Hence, at most~$k-1$ vertices of $v_1, \ldots, v_k$ are contained in the~$k-1$ augmenting paths of~$G(M_{G-X},M_G)$ except~$P$.
	Consequently, one of these vertices, say~$v_i$, is free wrt.~$M_G$ and can be matched with~$x$.
	Thus, by reversing the augmentation along~$P$ and adding the edge~$\{v_i,x\}$ we obtain another matching~$M'_G$ of size~$s$.
	Observe that~$M'_G$ is a matching for~$G$ and for~$G'$ and thus we have~$s \le s'$.
	This completes the proof of correctness.
	\qed
\end{proof}

\begin{lemma} \label[lemma]{lem:rule-k-free-neighbors-lin-time}
	\cref{rule:k-free-neighbors-not-in-X} can be exhaustively applied in~$O(n+m)$ time.
\end{lemma}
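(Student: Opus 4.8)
The plan is to prove the two assertions of \cref{lem:rule-k-free-neighbors-correct} separately: first the \emph{correctness} of \cref{rule:k-free-neighbors-not-in-X}, i.e.\ that the deletions do not change the size of a maximum matching, and then the claimed running time. For correctness, write $G'$ for the reduced graph. Since $G'$ arises from $G$ by deleting edges only, every matching of $G'$ is a matching of $G$, so a maximum matching of $G'$ is no larger than one of $G$; the whole work lies in the reverse inequality. The starting point I would use is the simple but crucial observation that every deleted edge is incident to $x$, so that $G'-x = G-x$. I then distinguish whether some maximum matching of $G$ leaves $x$ free or every maximum matching of $G$ covers $x$. In the former case, such a maximum matching $M$ contains no edge incident to $x$ and hence no deleted edge, so $M$ is already a matching of $G'$ of the required size and we are done.

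The interesting case is when every maximum matching of $G$ covers $x$; equivalently, $G-x$ has a maximum matching of size one less than that of $G$. Here I would reduce the goal to a single statement: \emph{some} $v_i \in V_x$ is missed by some maximum matching of $G-x$. Granting this, if $N$ is a maximum matching of $G-x$ missing $v_i$, then $N \cup \{\{x,v_i\}\}$ is a matching of $G$ whose size equals that of a maximum matching of $G$ and which, using $G'-x = G-x$ together with $\{x,v_i\}\in E(G')$, lives entirely in $G'$. Thus everything boils down to proving this covering statement.

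To prove it I would argue by contradiction, assuming that every maximum matching of $G-x$ covers all of $v_1,\dots,v_k$, and exploit the matching $M_{G-X}$ together with \cref{obs:MatchingSizeGandG-X}. Fix a maximum matching $M_0$ of $G-x$ and consider the symmetric difference $D := M_{G-X}\bigtriangleup M_0$, whose components are alternating paths and cycles. Each $v_i$ is free in $M_{G-X}$ but covered in $M_0$, hence has degree one in $D$ and is an endpoint of a path component $P_i$. If some $P_i$ were not an augmenting path for $M_{G-X}$ (an even-length path or a cycle), then flipping $M_0$ along $P_i$ would yield another maximum matching of $G-x$ missing $v_i$, contradicting the assumption; so every $v_i$ is an endpoint of an augmenting path for $M_{G-X}$. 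Applying \cref{obs:MatchingSizeGandG-X} with deletion set $X\setminus\{x\}$ of size $k-1$ bounds the number of such augmenting paths by $k-1$. The decisive point, and the step I expect to be the main obstacle, is to show that each augmenting path can host at most one of the $v_i$ as an endpoint: every augmenting path must contain a vertex of $X\setminus\{x\}$ (otherwise it would lie in $G-X$ and contradict maximality of $M_{G-X}$ there), and such a vertex, being free in $M_{G-X}$, cannot be an \emph{internal} vertex of an $M_{G-X}$-augmenting path and must therefore occupy one of the two endpoints. Hence at most one endpoint per path is free for a $v_i$, giving at most $k-1$ of the $v_i$, which contradicts having all $k$ vertices $v_1,\dots,v_k$ as endpoints. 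This is precisely the rigorous form of the informal ``at most $k-1$ can be blocked'' argument, and the care needed to rule out even paths and cycles (via the flip) is the combinatorial heart of the proof.

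For the running time I would first precompute in $O(n)$ time which vertices are free with respect to the fixed matching $M_{G-X}$; note that deleting edges incident to $X$ never changes $G-X$, so $M_{G-X}$ remains a maximum matching of $G-X$ throughout and the free/matched status of every vertex is static. Then I process each of the $k$ vertices $x\in X$ once: scan its adjacency list, and if it has at least $k$ free neighbors in $V\setminus X$, retain $k$ of them as $V_x$ and delete all remaining incident edges. Afterwards $x$ has degree exactly $k$, so the rule no longer applies to it; moreover the relevant quantity, namely the number of free neighbors of a vertex in $V\setminus X$, is only ever decreased by deletions, so a single pass over $X$ already applies the rule exhaustively. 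The total cost is $O\bigl(n + \sum_{x\in X}\deg(x)\bigr) = O(n+m)$.
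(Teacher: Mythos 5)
Your proof is correct, and it reaches the same combinatorial core as the paper's argument --- decompose the symmetric difference of $M_{G-X}$ with a maximum matching into alternating components, bound the number of $M_{G-X}$-augmenting paths by $k-1$, and observe that each such path must spend one of its two endpoints on a (necessarily $M_{G-X}$-free) vertex of $X$, so that at most $k-1$ of the $v_i$ can be blocked --- but you package it genuinely differently. The paper works inside $G$ with the maximum-overlap matching $M^{\max}_G(M_{G-X})$, whose defining property guarantees that every component of the symmetric difference is an augmenting path, and then repairs the matching by reversing the augmentation along the path through $x$ and adding $\{x,v_i\}$. You instead reduce everything to the claim that some $v_i$ is exposed by some maximum matching of $G-x$ (using $G'-x=G-x$), take an \emph{arbitrary} maximum matching $M_0$ of $G-x$, and dispose of the even components by an explicit flip; the repaired matching is then simply $N\cup\{\{x,v_i\}\}$. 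This buys you independence from the max-overlap machinery (you never need $M^{\max}_G$ or \cref{obs:shortest-augmenting-paths}) at the price of the flip argument, and applying \cref{obs:MatchingSizeGandG-X} to $G-x$ with deletion set $X\setminus\{x\}$ yields the bound $k-1$ directly rather than as ``$k$ paths minus the one through $x$.'' Your running-time argument is also sound and slightly simpler than the paper's mark-then-sweep scheme: as you correctly note, deleting edges incident to $X$ never alters $G-X$, $M_{G-X}$, or the set of free neighbors of any other vertex of $X$, so a single pass over $X$ is already exhaustive and costs $O(n+m)$.
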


\begin{proof}
	We exhaustively apply the data reduction rule as follows.
	First, initialize for each vertex~$x \in X$ a counter with zero.
	Second, iterate over all free vertices in~$G-X$ in an arbitrary order.
	For each free vertex~$v \in V \setminus X$ iterate over its neighbors in~$X$.
	For each neighbor~$x \in X$ do the following: if the counter is less than~$k$, then increase the counter by one and mark the edge~$\{v,x\}$ (initially all edges are unmarked).
	Third, iterate over all vertices in~$X$.
	If the counter of the currently considered vertex~$x$ is~$k$, then delete all unmarked edges incident to~$x$.
	This completes the algorithm.
	Clearly, it deletes edges incident to a vertex~$x\in X$ if and only if~$x$ has~$k$ free neighbors in~$V \setminus X$ and the edges to these $k$~neighbors are kept.
	The running time is~$O(n+m)$: 
	When iterating over all free vertices in~$V \setminus X$ we consider each edge at most once.
	Furthermore, when iterating over the vertices in~$X$, we again consider each edge at most once. \qed
\end{proof}

To finish \cref{step:bound-free-leaves}, we exhaustively apply \cref{rule:k-free-neighbors-not-in-X} in linear time.
Afterwards, there are at most~$k^2$ free (wrt.\ to~$M_{G-X}$) leaves in~$G-X$ that have at least one neighbor in~$X$ since each of the~$k$ vertices in~$X$ is adjacent to at most~$k$ free leaves.
Thus, applying \cref{rule:deg-zero-one-vertices} we can remove the remaining free leaves that have no neighbor in~$X$. 
However, since for each degree-one vertex also its neighbor is removed, we might create new free leaves in~$G-X$ and need to again apply \cref{rule:k-free-neighbors-not-in-X} and update the matching (see \cref{step:only-leaves-free}).
This process of alternating application of \cref{rule:deg-zero-one-vertices,rule:k-free-neighbors-not-in-X} stops after at most~$k$ rounds since the neighborhood of each vertex in~$X$ can be changed by \cref{rule:k-free-neighbors-not-in-X} at most once.
This shows the running time $O(k(n+m))$.
We next show how to improve this to~$O(n+m)$.
In doing so, we arrive at the central proposition of this subsection, stating that \cref{step:only-leaves-free,step:max-matching-G-X,step:reduce-rule-1-2,step:bound-free-leaves} can be performed in linear time.

\begin{proposition}
\label[proposition]{prop:bound-free-leaves-lin-time}
	Given a matching instance~$(G,s)$ and a feedback vertex set~$X$, 
	\cref{alg:reduce-free-leaves-lin-time} computes in linear time an instance~$(G',s')$ with feedback vertex set~$X$ and a maximum matching~$M_{G'-X}$ in~$G'-X$ such that
the following holds.
	\begin{itemize}
		\item There is a matching of size~$s$ in~$G$ if and only if there is a matching of size~$s'$ in~$G'$.
		\item Each vertex in~$G'-X$ that is free wrt.~$M_{G'-X}$ is a leaf in~$G'-X$.
		\item There are at most~$k^2$ free leaves in~$G'-X$.  
	\end{itemize}
\end{proposition}

\begin{algorithm}[t!]\small
	\caption{
		An algorithm performing \cref{step:only-leaves-free,step:max-matching-G-X,step:reduce-rule-1-2,step:bound-free-leaves} in linear time.
	} 
	\label{alg:reduce-free-leaves-lin-time}
	\KwIn{A matching instance~$(G = (V,E),s)$ and a feedback vertex set~$X \subseteq V$ for~$G$ with $|X|=k$.}
	\KwOut{An equivalent matching instance~$(G',s')$ such that~$X$ is also a feedback vertex set for~$G'$ and a maximum matching~$M_{G'-X}$ for~$G'-X$ such that only at most~$k^2$ leaves in~$G'-X$ are free with respect to~$M_{G'-X}$.}
	\vspace{1em}
	Reduce~$G$ wrt.\ \cref{rule:deg-zero-one-vertices,rule:deg-two-vertices} \; \label{line:red-0-1-2}
	Compute a maximum matching~$M_{G-X}$ as described in \cref{step:only-leaves-free} \tcp*{see \cref{obs:first-three-steps}} \label{line:compute-M-G-X} 
	\lForEach(\label{line:initialize-counter}\tcp*[f]{$c(x)$ will store the number of free neighbors for~$x$}){$x \in X$}
	{
		$c(x) \gets $ 0 
	}
	\lForEach{$e \in E$}
	{
		marked($e$) $\gets$ False
	}
	$L \gets $ stack containing all free leaves in~$G-X$ in any order\; \label{line:initialize-leaf-stack}
	\While(\label{line:run-on-L}){$L$ is not empty}
	{
		$u \gets $ pop($L$)\;
		\ForEach(\label{line:start-check-k-free-neighbors}\tcp*[f]{Check if \cref{rule:k-free-neighbors-not-in-X} is applicable for~$x$}){$x \in N_G(u) \cap X$}
		{
			{
				$c(x) \gets c(x) + 1$, 
				marked($\{u,x\}$) $\gets$ True \label{line:mark-edge} \tcp*{fix~$u$ as free neighbor of~$x$}
			}
			\If(\label{line:enough-free-neighbors}\tcp*[f]{$x$ has enough free neighbors: apply \cref{rule:k-free-neighbors-not-in-X}}){$c(x) = k$}
			{
				\lForEach{$y \in N_G(x) \cap X$}
				{
					delete $\{x,y\}$
				}
				\ForEach{$v \in N_G(x) \setminus X$}
				{
					\If(\label{line:check-mark}){marked($\{x,v\}$) = False}
					{
						delete $\{x,v\}$\; \label{line:delete-edge-x-v}
						\tcc{Next deal with the case that~$v$ is a free leaf in~$G-X$ and~$x$ was the last neighbor of~$v$ in~$X$}
						\lIf{$\deg_G(v) = \deg_{G-X}(v) = 1$ and~$v$ is free}
						{
							push~$v$ on~$L$ \label{line:add-to-L-unmatched-deg-1-vertex}
						}
					}
				}
			}
		}
		\If(\label{line:deg-one-vertex}\tcp*[f]{$u$ has no neighbors in~$x$}){$\deg_G(u) = \deg_{G-X}(u) = 1$}
		{
			$v \gets $ neighbor of~$u$ in~$G-X$;
			$w \gets $ matched neighbor of~$v$ \label{line:def-w-v}\;
			delete~$u$ and~$v$ from~$G$ \label{line:delete-u-v} \tcp*{apply \cref{rule:deg-zero-one-vertices}}
			$M_{G-X} \gets M_{G-X} \setminus \{\{v,w\}\}$, $s \gets s-1$ \label{line:adjust-s}\tcp*{update~$M_{G-X}$ and~$s$}

			\tcc{augment along an arbitrary alternating path from~$w$ to a leaf in the subtree rooted in~$w$:}
			
			\While(\label{line:alternating-path-augment}){$w$ is not a leaf in~$G-X$}
			{
				$w' \gets $ arbitrary child of~$w$; $w'' \gets $ matched neighbor of~$w'$\; \label{line:select-child}
				$M_{G-X} \gets (M_{G-X} \setminus \{\{w',w''\}\}) \cup \{\{w,w'\}\}$ \; \label{line:augment}
				$w \gets w''$\; \label{line:update-w}
			}
			push~$w$ to~$L$ \label{line:push-w-to-L} \tcp*{$w$ is a free leaf, so add $w$ to the list of vertices to check}
		}
	}
	\KwRet{$(G,s)$ and~$M_{G-X}$}.
\end{algorithm}

Before proving \cref{prop:bound-free-leaves-lin-time}, we explain \cref{alg:reduce-free-leaves-lin-time} which reduces the graph with respect to \cref{rule:deg-zero-one-vertices,rule:k-free-neighbors-not-in-X} and updates the matching~$M_{G-X}$ as described in \cref{step:only-leaves-free}. %
The algorithm performs in \cref{line:red-0-1-2,line:compute-M-G-X} \cref{step:only-leaves-free,step:max-matching-G-X,step:reduce-rule-1-2}.
This can be done in linear time (see \cref{obs:first-three-steps}).
Next, \cref{rule:k-free-neighbors-not-in-X} is applied in \crefrange{line:start-check-k-free-neighbors}{line:add-to-L-unmatched-deg-1-vertex} using the approach described in the proof of \cref{lem:rule-k-free-neighbors-lin-time}:
For each vertex in~$x$ a counter~$c(x)$ is maintained.
When iterating over the free leaves in~$G-X$, these counters will be updated. 
If a counter~$c(x)$ reaches~$k$, then the algorithm knows that~$x$ has~$k$ fixed free neighbors and according to \cref{rule:k-free-neighbors-not-in-X} the edges to all other vertices can be deleted (see \cref{line:enough-free-neighbors}).
Observe that once the counter~$c(x)$ reaches~$k$, the vertex~$x$ will never be considered again by the algorithm since its only remaining neighbors are free leaves in~$G-X$ that already have been popped from the stack~$L$.
The only difference from the description in the proof of \cref{lem:rule-k-free-neighbors-lin-time} is that the algorithm reacts if some leaf~$v$ in~$G-X$ lost its last neighbor in~$X$ (see \cref{line:add-to-L-unmatched-deg-1-vertex}). %
If~$v$ is free, then add~$v$ to the stack~$L$ of unmatched degree-one vertices and defer dealing with~$v$ to a second stage of the algorithm (in \crefrange{line:deg-one-vertex}{line:push-w-to-L}).
(If~$v$ is matched, then we deal with~$v$ in~\cref{step:bound-long-paths}.)

\begin{figure}
	\begin{center}
		\tikzstyle{knoten}=[circle,draw=black,minimum size=18pt,inner sep=2pt]
		\begin{tikzpicture}[scale=0.75]
			\foreach[count=\i] \x / \y / \txt in {0/0/v,-1/1.5/u,-2/3/w,-3/1.5/{},-4/0/{},-3/-1.5/{},-2/-3/w''}
			{
				\node[knoten] (v\i) at (\x,\y) {$\txt$};
			}
			\foreach \i in {2,...,6}
			{
				\path (v\i) edge[-]  ($(v\i) + (-0.7,-1.05)$);
				\path (v\i) edge[-]  ($(v\i) + ( 0.0,-1.05)$);
				\path (v\i) edge[-]  ($(v\i) + ( 0.7,-1.05)$);
			}
			\foreach \i [remember=\i as \xi (initially 1)] in {2,...,7}
			{
				\ifthenelse{\isodd{\i}}
				{
					\path (v\i) edge[-,ultra thick]  (v\xi);
				}
				{
					\path (v\i) edge[-]  (v\xi);
				}
			}
			\tikzstyle{edge} = [color=black,opacity=.15,line cap=round, line join=round, line width=25pt]
			\begin{pgfonlayer}{background}
				\draw[edge] (v1.center) \foreach \i in {2,...,7}{ -- (v\i.center)};
			\end{pgfonlayer}
			
			\draw [->,snake=snake,line after snake=1mm] (1,0) -- (3,0);

			\begin{scope}[xshift = 8.5 cm]
				\foreach[count=\i] \x / \y / \txt in {-2/3/w,-3/1.5/{},-4/0/{},-3/-1.5/{},-2/-3/w''}
				{
					\node[knoten] (v\i) at (\x,\y) {$\txt$};
				}
				\foreach \i in {1,...,4}
				{
					\path (v\i) edge[-]  ($(v\i) + (-0.7,-1.05)$);
					\path (v\i) edge[-]  ($(v\i) + ( 0.0,-1.05)$);
					\path (v\i) edge[-]  ($(v\i) + ( 0.7,-1.05)$);
				}
				\foreach \i [remember=\i as \xi (initially 1)] in {2,...,5}
				{
					\ifthenelse{\isodd{\i}}
					{
						\path (v\i) edge[-]  (v\xi);
					}
					{
						\path (v\i) edge[-,ultra thick]  (v\xi);
					}
				}
				\tikzstyle{edge} = [color=black,opacity=.15,line cap=round, line join=round, line width=25pt]
				\begin{pgfonlayer}{background}
					\draw[edge] (v1.center) \foreach \i in {2,...,5}{ -- (v\i.center)};
				\end{pgfonlayer}
			\end{scope}

		\end{tikzpicture}
		
	\end{center}
	\caption{
		Dealing with new degree-one vertices occurring during the application of \cref{rule:k-free-neighbors-not-in-X} within \cref{step:bound-free-leaves}.
		Only vertices visited in the tree~$G-X$ in \crefrange{line:deg-one-vertex}{line:push-w-to-L} of \cref{alg:reduce-free-leaves-lin-time} are shown. 
		Further possible neighbors are indicated by edges.
		Left side: Vertex~$v$ is a free leaf in~$G-X$ (vertices in~$X$ are not illustrated). 
		The gray highlighted alternating path indicates where \cref{alg:reduce-free-leaves-lin-time} augments the maximum matching~$M_{G-X}$ in~$G-X$.
		Bold edges indicate edges in~$M_{G-X}$.
		Vertex~$w''$ is the leaf where the augmentation stops ($w''$ is matched, otherwise~$M_{G-X}$ would not be a maximum matching).
		Right side: Situation after \cref{alg:reduce-free-leaves-lin-time} augmentation.
		Vertex~$w''$ will be added to the list~$L$ and further processed.
	}
	\label{fig:step-4-illustration}
\end{figure}
We next discuss this second stage from \crefrange{line:deg-one-vertex}{line:push-w-to-L} (see \cref{fig:step-4-illustration} for an illustration):
Let~$u$ be an entry in~$L$ such that~$u$ has degree one in \cref{line:deg-one-vertex}, that is, $u$ is a free leaf in~$G-X$ and has no neighbors in~$X$.
Then, following~\cref{rule:deg-zero-one-vertices}, delete~$u$ and its neighbor~$v$ and decrease the solution size~$s$ by one (see \cref{line:delete-u-v,line:adjust-s}).
Let~$w$ denote the previously matched neighbor of~$v$.
Since~$v$ was removed, $w$ is now free.
If~$w$ is a leaf in~$G-X$, then we can simply add it to~$L$ and deal with it later.
If~$w$ is not a leaf, then we need to update~$M_{G-X}$ since only leaves are allowed to be free.
To this end, augment along an arbitrary alternating path from~$w$ to a leaf in the subtree with root~$w$ (see \cref{line:alternating-path-augment,line:select-child,line:augment,line:update-w}).
This is done as follows: 
Pick an arbitrary child~$w'$ of~$w$.
Let~$w''$ be the matched neighbor of~$w'$.
Observe that~$w''$ has to exist as if~$w'$ would be free, then~$\{w,w'\}$ could be added to~$M_{G-X}$; a contradiction to the maximality of~$M_{G-X}$.
Since~$w$ is the parent of~$w'$, it follows that~$w''$ is a child of~$w'$.
Now, remove~$\{w',w''\}$ from~$M_{G-X}$, add~$\{w',w\}$ and repeat the procedure with~$w''$ taking the role of~$w$.
Observe that the endpoint of this found alternating path, after augmentation, always is a free leaf.
Thus, this free leaf needs to be pushed to~$L$.
This completes the algorithm description.

The correctness of \cref{alg:reduce-free-leaves-lin-time} (stated in the next lemma) follows in a straightforward way from the above discussion.
For the formal proofs we introduce some notation.
We denote by~$G_i$ (respectively~$M_i$) the intermediate graph (respectively matching) stored by \cref{alg:reduce-free-leaves-lin-time} before the~$i^{\text{th}}$ iteration of the while loop in \cref{line:run-on-L}, that is, $G_1$ is the input graph and~$M_1$ is the initial matching computed in \cref{line:compute-M-G-X}.
The following observation is easy to see but useful in our proofs.

\begin{observation}\label{obs:alg1-subgraph}
	For each~$i \in \{1,\ldots,q\}$ where~$q$ is the number of iterations of the while loop in \cref{line:run-on-L}, we have that~$M_i$ is a maximum matching for~$G_i - X$.
	If~$i \ge 2$, then~$G_i$ is a subgraph of~$G_{i-1}$.
\end{observation}

\begin{lemma}\label{lem:alg-1-correct}
	\cref{alg:reduce-free-leaves-lin-time} is correct, that is, given a matching instance~$(G,s)$ and a feedback vertex set~$X$, it computes an instance~$(G',s')$ with feedback vertex set~$X$ and a maximum matching~$M_{G'-X}$ in~$G'-X$ such that:
	\begin{enumerate}
		\item There is a matching of size~$s$ in~$G$ if and only if there is a matching of size~$s'$ in~$G'$. \label{part:algo-1-correct}
		\item Each vertex in~$G'-X$ that is free wrt.~$M_{G'-X}$ is a leaf in~$G'-X$. \label{part:only-leaves-free}
		\item There are at most~$k^2$ free vertices in~$G'-X$.  \label{part:only-few-leaves-free}
	\end{enumerate}
\end{lemma}

\begin{proof} 
	\cref{obs:alg1-subgraph} implies that the returned graph~$G'$ is a subgraph of the input graph~$G$.
	Thus, $X$ is a feedback vertex set for both these graphs.
	Moreover, by  \cref{obs:alg1-subgraph}, $M_{G'-X}$ is a maximum matching for~$G'-X$.

	As to \ref{part:algo-1-correct}, observe that \cref{alg:reduce-free-leaves-lin-time} obtains~$G'$ from~$G$ by deleting edges in \cref{line:delete-edge-x-v} according to \cref{rule:k-free-neighbors-not-in-X} and by deleting vertices in \cref{line:delete-u-v} according to \cref{rule:deg-zero-one-vertices}.
	Thus, \ref{part:algo-1-correct} follows from the correctness of these data reduction rules (see \cref{lem:rule-deg-0-1-2-correct,lem:rule-k-free-neighbors-correct}).
	
	As to \ref{part:only-leaves-free}, observe that~$G-X$ is changed if and only if the matching~$M_{G-X}$ is changed accordingly (see \crefrange{line:deg-one-vertex}{line:push-w-to-L}).
	That is, after each deletion of vertices, the algorithm ensures that only leaves are free.
	Moreover, during the algorithm~$M_{G-X}$ is always a maximum matching for~$G-X$.
	
	As to \ref{part:only-few-leaves-free}, observe that any free leaf in~$G-X$ that is not removed needs to have a neighbor in~$X$ (see \cref{line:deg-one-vertex}).
	As \cref{rule:k-free-neighbors-not-in-X} is applied in \crefrange{line:start-check-k-free-neighbors}{line:add-to-L-unmatched-deg-1-vertex}, there are at most~$k^2$ such free leaves. 
	\qed
\end{proof}

We next show that \cref{alg:reduce-free-leaves-lin-time} runs in linear time.
To this end, we need a further technical statement. 

\begin{lemma}\label{lem:augmenting-paths-survive-algo1}
	In~$G_i$, let~$P$ be an even-length alternating path wrt.\ $M_i$ from a free leaf~$r$ to a matched inner vertex~$t$ of~$G_i-X$.
	Let~$u$ be the matched neighbor of~$t$.
	Then for each~$j \in \{1,\ldots, i\}$ there exists in~$G_j$ an even-length alternating path~$P'$ from~$t$ to a free leaf~$r'$ such that the neighbor of~$t$ on~$P'$ is either (i)~$u$, (ii) $t$'s parent, or (iii) a vertex not contained in~$G_i$.
\end{lemma}

\begin{proof}
	We prove the statement of the lemma by induction on~$i$.
	The base case~$i=1$ is trivial since~$G_1 = G$ and thus~$P' = P$.
	
	Now assume the statement is true for~$G_{i-1}$, $i \ge 2$. 
	We show that it holds for~$G_i$ as well.
	By \cref{obs:alg1-subgraph}, $G_{i}$ is a subgraph of~$G_{i-1}$ (and of~$G$). 
	Thus, the path~$P$ is also contained in~$G_{i-1}$ (and in~$G$).
	If~$r$ is a leaf in~$G_{i-1}-X$ and if~$M_i$ contains the same edges of~$P$ as~$M_{i-1}$, then~$P$ is an even-length augmenting path in~$G_{i-1}$ and the statement of the lemma follows from applying the induction hypothesis and \cref{obs:alg1-subgraph}. %
	Thus, assume that (a)~$r$ is not a leaf in~$G_{i-1}-X$ or (b)~$M_i$ does not contain the same edges of~$P$ as~$M_{i-1}$ (or both).
	
	We start with case (a) assuming that~$r$ is not a leaf in~$G_{i-1}-X$.
	Then in the~$(i-1)^{\text{st}}$ iteration of the while loop in \cref{line:run-on-L}, \cref{alg:reduce-free-leaves-lin-time} deleted the child~$r'$ of~$r$ and the child~$r''$ of~$r'$ in \cref{line:delete-u-v}.
	Moreover, $M_{i-1}$ contained the edge~$\{r,r'\}$ and~$r''$ was a free leaf in~$G_{i-1}-X$.
	Thus, extending~$P$ by the two vertices~$r',r''$ yields in~$G_{i-1}$ an even-length alternating path~$P^*$ from $t$ to the free leaf~$r''$ such that the neighbor of~$t$ on~$P^*$ is~$u$.
	Hence, the statement of the lemma follows from the induction hypothesis and \cref{obs:alg1-subgraph}.
	
	We next consider case (b), assuming that~$M_i$ and~$M_{i-1}$ do not contain the same edges of~$P$.
	Thus, in the~$(i-1)^{\text{st}}$ iteration of the while loop in \cref{line:run-on-L}, \cref{alg:reduce-free-leaves-lin-time} augmented along some alternating path in \cref{line:alternating-path-augment,line:select-child,line:augment,line:update-w}.
	Denote with~$Q$ this alternating path and let~$w_q$ be starting point of~$Q$, that is, $w_q$ is the vertex~$w$ in \cref{line:def-w-v}.
	Let~$v_Q,u_Q$ be the two deleted vertices  in \cref{line:def-w-v}.
	Let~$r_Q$ be the other endpoint of~$Q$, that is, $r_Q$ is a leaf in~$G_{i-1}$ and thus a free leaf in~$G_i$.
	Since~$M_i$ and~$M_{i-1}$ differ on~$P$, this implies that the two paths~$Q$ and~$P$ overlap.
	Let~$z$ be the vertex on~$P$ and on~$Q$ which is closest to~$r$.
	If~$z = r = r_q$, then~$P$ is a subpath of~$Q$ and in~$G_{i-1}$ there is an alternating path~$P^*$ from~$t$ to the free leaf~$u_Q$.
	(Here, $P^*$ is the part of~$Q$ that is not contained in~$P$.)
	Since the alternating path built in~\cref{line:alternating-path-augment,line:select-child,line:augment,line:update-w} is only extended by selecting child vertices, this implies that~$w_q = t$ or~$w_q$ is an ancestor of~$t$.
	Thus, the neighbor of~$t$ in~$P^*$ is either~$t$'s parent or~$v_Q$, that is, a child of~$t$ not contained in~$G_i$.
	Hence, the statement of the lemma follows from the induction hypothesis and \cref{obs:alg1-subgraph}.
	
	It remains to consider the case that~$z \neq r$.
	Let~$z_Q$ ($z_P$) be the neighbor of~$z$ that is on~$Q$ but not on~$P$ (on~$P$ but not on~$Q$); similarly let~$z_{PQ}$ be the neighbor of~$z$ that is on both~$P$ and~$Q$.
	Since~$Q$ is an alternating path either~$\{z,z_{Q}\}$ or~$\{z,z_{PQ}\}$ is in~$M_{i-1}$.
	
	First consider the case that~$\{z,z_{Q}\}$ is in~$M_{i-1}$.
	Then, since both the subpath of~$Q$ from~$z$ to~$u_Q$ and the subpath of~$P$ from~$z$ to~$r$ are alternating, we obtain an augmenting path from~$u_Q$ over~$z$ to~$r$.
	This is a contradiction to the maximality of~$M_{i-1}$.
	
	Second, consider the case that~$\{z,z_{PQ}\}$ is in~$M_{i-1}$.
	Thus, (after augmenting~$Q$) the edge~$\{z,z_{PQ}\}$ is not in~$M_{i}$.
	Moreover, as~$\{z,z_{Q}\}$ is in~$M_{i}$, the edge~$\{z,z_{P}\}$ is also not in~$M_i$.
	This contradicts the fact that~$P$ is an alternating path.
	\qed
\end{proof}

\begin{lemma}\label{lem:alg-1-lin-time}
	\Cref{alg:reduce-free-leaves-lin-time} runs in~$O(n+m)$ time.
\end{lemma}

\begin{proof}
	By \cref{obs:first-three-steps}, \cref{step:only-leaves-free,step:max-matching-G-X,step:reduce-rule-1-2} in \cref{line:red-0-1-2,line:compute-M-G-X} can be executed in linear time.
	Moreover, it is easy to execute \crefrange{line:initialize-counter}{line:initialize-leaf-stack} in one sweep over the graph, that is, in linear time.
	It remains to show that \crefrange{line:run-on-L}{line:push-w-to-L} run in linear time.
	To this end, we prove that each edge in~$E$ is being processed at most two times in \crefrange{line:run-on-L}{line:push-w-to-L}. 
	
	Start with the edges with at least one endpoint in~$X$.
	These edges will be inspected at most twice by the algorithm:
	Once, when the edge is marked (see \cref{line:mark-edge}).
	The second time is when the edge is checked and possibly deleted \cref{line:check-mark,line:delete-edge-x-v}.
	This shows that the first part (\crefrange{line:start-check-k-free-neighbors}{line:add-to-L-unmatched-deg-1-vertex}) runs in linear time.
	
	It remains to consider the edges within~$G-X$.
	To this end, observe that the algorithm performs two actions on the edges: deleting the edges (\cref{line:delete-u-v}) and finding and augmenting along an alternating path (\cref{line:alternating-path-augment,line:select-child,line:augment,line:update-w}).
	Clearly, after deleting an edge it will no longer be considered, so it remains to show that each edge is part of at most one alternating path in \cref{line:augment}.
	Assume toward a contradiction that the algorithm augments along an edge twice or more.
	From all the edges that are augmented twice or more let~$e \in E$ be one that is closest to the root of the tree containing~$e$, that is, there is no edge closer to a root. %
 	Let~$P_1$ and~$P_2$ be the first two augmenting paths containing~$e$. 
 	Assume without loss of generality  that the algorithm augmented along~$P_1$ in iteration~$i_1$ and along~$P_2$ in iteration~$i_2$ of the while loop in \cref{line:run-on-L} with~$i_1 \le i_2$.
	Let~$w_1$ and~$w_2$ be the two start points (the respective vertex~$w$ in \cref{line:def-w-v}) of~$P_1$ and~$P_2$ respectively.
	Let~$u_1$ and~$v_1$ ($u_2$ and~$v_2$) be the vertices deleted in \cref{line:delete-u-v} which in turn made~$w_1$ ($w_2$) free.
	Observe that~$e$ does not contain any of these four vertices~$u_1,v_1,u_2,v_2$ since before augmenting~$P_1$ ($P_2$) the vertices~$u_1$ and~$v_1$ ($u_2$ and~$v_2$) are deleted in \cref{line:delete-u-v}.
 	Since~$e$ is contained in both paths, either~$w_1$ is an ancestor of~$w_2$ or vice versa (or~$w_1 = w_2$).

 	Assume first that $w_2$ is an ancestor of~$w_1$.
 	Thus, $e = \{w_1,w_1'\}$ where~$w_1' \neq v_1$ and~$w_1'$ is a child of~$w_1$ (see \cref{line:select-child}).
 	Consider~$G_{i_2}$ and~$M_{i_2}$ before the augmentation along~$P_2$.
	Clearly, in~$G_{i_2}$ there is an alternating path of length two from~$w_2$ to the free leaf~$u_2$.
	Thus, by \cref{lem:augmenting-paths-survive-algo1}, in~$G_{i_1}$ there is an alternating path~$Q_1$ from~$w_2$ to a free leaf~$r$ such that~$r$ and~$w_1$ are not in the same subtree of~$w_2$.
	Moreover, by choice of~$e$ the two matchings~$M_{i_1}$ and~$M_{i_2}$ contain the same edges on the path from~$w_1$ to~$w_2$ in~$G-X$.
	Hence, there is an alternating path~$Q_2$ from~$w_1$ to~$w_2$ in~$G_{i_1}$.
	There is also an alternating path~$Q_3$ from~$w_1$ to the free leaf~$u_1$ in~$G_{i_1}$ (see \cref{line:def-w-v}).
	Combining~$Q_1, Q_2, Q_3$ gives an augmenting path from~$u_1$ to~$r$ in~$G_{i_1}$; a contradiction to the maximality of~$M_{i_1}$ (see \cref{obs:alg1-subgraph}).
	
	Next, consider the case that~$w_1 = w_2$.
	By choice of~$e$ we have that~$e = \{w_1, w'\}$ with~$w'$ being a child of~$w_1$ in~$G_{i_2}$ and~$w' \neq v_2$.
	Thus, after the augmentation along~$P_1$ the edge~$e$ is matched (see \cref{line:select-child}).
	This is a contradiction to the choice of~$P_2$ and the fact that~$\{w_2,v_2\} \in M_{i_2}$ (see \cref{line:def-w-v}).
	
	Finally, consider the case that~$w_1$ is an ancestor of~$w_2$.
	By choice of~$e$ we have that~$e = \{w_2, w_2'\}$ with~$w_2'$ being a child of~$w_2$ in~$G_{i_2}$ and~$w_2' \neq v_2$.
	From the argumentation used in the case~$w_1 = w_2$ above, we can infer that after augmenting~$P_1$ the edge~$e$ is not matched, thus~$e \notin M_{i_1+1}$ and~$e \in M_{i_1}$.
	Observe that in~$G_{i_2}$ there is a length-two alternating path from~$w_2$ to the free leaf~$u_2$.
	Thus, by \cref{lem:augmenting-paths-survive-algo1}, there is an even-length alternating path~$P$ from~$w_2$ to a free leaf in~$G_{i}$.
	Moreover, the (matched) neighbor~$w_2'$ of~$w_2$ in~$P$ is either (i)~$v_2$, (ii) the parent of~$w_2$, or (iii) a vertex not in~$G_{i_2}$.
	Since~$e \in M_{i_1}$, it follows that~$w_2'$ is the matched neighbor of~$w_2$ on~$P$.
	However, $w_2'$ is in~$G_{i_2}$, is neither the parent of~$w_2$ nor of~$v_2$, a contradiction.
	\qed
\end{proof}
\cref{prop:bound-free-leaves-lin-time} now follows from \cref{lem:alg-1-correct,lem:alg-1-lin-time}.

\subsubsection{\cref{step:bound-bottommost-leaves}}
\label{sec:kernel-fvs-step5}

In this step we reduce the graph in~$O(kn)$ time so that at most~$k^2 (2^k + 1)$ bottommost leaves will remain in the forest~$G-X$.
We will restrict ourselves to consider leaves that are matched with their parent vertex in~$M_{G-X}$ and that do not have a sibling.
We call these bottommost leaves \emph{interesting}. 
Any sibling of a bottommost leaf is by definition also a leaf.
Thus, at most one of these leaves (the bottommost leaf or one of its siblings) is matched with respect to~$M_{G-X}$ and all other leaves are free.
Recall that in the previous step we upper-bounded the number of free leaves with respect to~$M_{G-X}$ by~$k^2$.
Hence there are at most~$2 k^2$ bottommost leaves that are not interesting (each free leaf can be a bottommost leaf with a sibling matched to the parent).

Our general strategy for this step is to extend the idea behind \cref{rule:k-free-neighbors-not-in-X}:
We want to keep for each pair of vertices~$x,y \in X$ at most~$k$ different internally vertex-disjoint augmenting paths from~$x$ to~$y$. 
In this step, we only consider augmenting paths of the form~$x,u,v,y$ where~$v$ is a bottommost leaf and~$u$ is~$v$'s parent in~$G-X$.
Assume that the parent~$u$ of~$v$ is adjacent to some vertex~$x \in X$. 
Observe that in this case any augmenting path starting with the two vertices~$x$ and~$u$ has to continue to~$v$ and end in a neighbor of~$v$.
Thus, the edge~$\{x,u\}$ can be only used in augmenting paths of length three.
Furthermore, for different parent vertices~$u \neq u'$ the length-three augmenting paths are clearly internally vertex-disjoint.
If we do not need the edge~$\{x,u\}$ because we kept $k$ augmenting paths from~$x$ to each neighbor~$y \in N(v) \cap X$ already, then we can delete~$\{x,u\}$.
Furthermore, if we deleted the last edge from~$u$ to~$X$ (or $u$ had no neighbors in~$X$ in the beginning), then~$u$ is a degree-two vertex in~$G$ and can be removed by applying \cref{rule:deg-two-vertices}.
As the child~$v$ of~$u$ is a leaf in~$G-X$, it follows that~$v$ has at most~$k+1$ neighbors in~$G$.
We show below (\cref{lem:rule-deg-2-time}) that the application of \cref{rule:deg-two-vertices} to remove~$u$ takes~$O(k)$ time. 
As we remove at most~$n$ vertices, at most~$O(kn)$ time is spent on \cref{rule:deg-two-vertices} in this step.

We now show that, after a simple preprocessing, one application of \cref{rule:deg-two-vertices} in the algorithm above can indeed be performed in~$O(k)$ time.

\begin{lemma} \label[lemma]{lem:rule-deg-2-time}
	Let~$u$ be a leaf in the tree~$G-X$, $v$ be its parent, and let~$w$ be the parent of~$v$.
	If~$v$ has degree two in~$G$, then applying \cref{rule:deg-two-vertices} to~$v$ (deleting~$v$, merging~$u$ and~$v$, and setting~$s := s - 1$) can be done in~$O(k)$ time plus~$O(kn)$ time for an initial preprocessing.
\end{lemma}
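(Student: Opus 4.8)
The plan is to implement \cref{rule:deg-two-vertices} applied to~$v$ without ever scanning the (possibly large) neighborhood of the high-degree endpoint~$w$. The key structural observation is that~$u$ is a leaf of the forest~$G-X$, so its only neighbor inside~$G-X$ is its parent~$v$; every other neighbor of~$u$ lies in~$X$. Hence~$\deg_G(u) \le k+1$ and, in particular, $|N_G(u) \cap X| \le k$. Merging~$u$ and~$w$ therefore only requires transferring the at most~$k$ edges from~$u$ to~$X$ over to~$w$, while~$v$ (together with its two incident edges) is simply deleted.

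For the preprocessing I would build, once at the start of the step, for every vertex~$y \in V \setminus X$ a characteristic array~$\chi_y$ of length~$k$ indexed by the vertices of~$X$, with~$\chi_y[x] = 1$ if and only if~$\{x,y\} \in E$. All these arrays are initialised to zero in~$O(kn)$ time and filled by one pass over the edges between~$V \setminus X$ and~$X$ in~$O(m) = O(kn)$ time, so the preprocessing costs~$O(kn)$ overall. Simultaneously I would store the adjacency of each~$x \in X$ as a doubly-linked list and, for every edge~$\{x,y\}$, keep cross-pointers between its two occurrences (in~$x$'s list and in~$y$'s record), so that a known edge can be spliced out of both lists in~$O(1)$ time.

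The merge then proceeds as follows. Delete~$v$ and the edges~$\{u,v\}$ and~$\{v,w\}$, and set~$s := s-1$; this is~$O(1)$. Next iterate over the at most~$k$ vertices~$x \in N_G(u) \cap X$ and fold~$u$ into~$w$: if~$\chi_w[x] = 0$, set~$\chi_w[x] := 1$ and redirect~$x$'s incidence from~$u$ to~$w$ (using the cross-pointer, in~$O(1)$); if~$\chi_w[x] = 1$, then~$w$ is already adjacent to~$x$, so simply delete the edge~$\{x,u\}$ (again~$O(1)$). In either case no duplicate edge is created, so the graph stays simple. Since there are at most~$k$ such vertices~$x$ and each is handled in constant time, one application of the rule costs~$O(k)$. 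Its correctness is immediate, since the operation is exactly \cref{rule:deg-two-vertices} applied to the degree-two vertex~$v$ (remove~$v$, merge its neighbours~$u$ and~$w$, decrease~$s$), whose correctness is established in \cref{lem:rule-deg-0-1-2-correct}; the characteristic vectors only serve to realise the merge efficiently.

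The main obstacle --- and the whole point of the preprocessing --- is to avoid the naive~$O(\deg_G(u)+\deg_G(w))$ cost of merging two adjacency lists: because~$w$ may have up to~$\Theta(n)$ neighbours, I must fold the small side~$u$ into~$w$ and test the adjacency ``$x \in N_G(w)$'' in~$O(1)$ via~$\chi_w$ rather than by searching~$w$'s list. The delicate part is the bookkeeping: one has to argue that the arrays~$\chi_y$ and the cross-pointers remain consistent across all the~$O(n)$ merges performed during the whole step, so that the~$O(1)$-per-edge bound genuinely holds throughout and the claimed~$O(k)$ time per rule application (on top of the single~$O(kn)$ preprocessing) is achieved.
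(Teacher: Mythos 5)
Your proposal is correct and follows essentially the same route as the paper: an $O(kn)$-time preprocessing of a partial adjacency matrix between $X$ and the remaining vertices (your arrays $\chi_y$), combined with the observation that after deleting~$v$ the leaf~$u$ has all its remaining neighbors in~$X$, so the merge touches only $O(k)$ edges, each handled in $O(1)$ via the matrix. The extra bookkeeping you flag (cross-pointers, consistency of $\chi_w$ across merges) is a reasonable refinement that the paper leaves implicit, but it does not change the argument.
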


\begin{proof}
	The preprocessing is to simply create a partial adjacency matrix for~$G$ with the vertices in~$X$ in one dimension and~$V$ in the other dimension.
	This adjacency matrix has size~$O(kn)$ and can clearly be computed in~$O(kn)$ time.
	
	Now apply \cref{rule:deg-two-vertices} to~$v$.
	Deleting~$v$ takes constant time.
	To merge~$u$ and~$w$ iterate over all neighbors of~$u$.
	If a neighbor~$u'$ of~$u$ is already a neighbor of~$w$, then decrease the degree of~$u'$ by one, otherwise add~$u'$ to the neighborhood of~$w$.
	Then, relabel~$w$ to be the new merged vertex~$uw$.

	Since~$u$ is a leaf in~$G-X$ and its only neighbor in~$G-X$, namely~$v$, is deleted, it follows that all remaining neighbors of~$u$ are in~$X$.
	Thus, using the above adjacency matrix, one can check in constant time whether~$u'$ is a neighbor of~$w$.
	Hence, the above algorithm runs in~$O(\deg(u)) = O(k)$ time. \qed
\end{proof}

The above ideas are used in \cref{alg:bound-bottommost-leaves} which we use for this step (\cref{step:bound-bottommost-leaves}). 
\begin{algorithm}[t!]\small
	\caption{An algorithm performing \cref{step:bound-bottommost-leaves} in~$O(kn)$ time.} 
	\label{alg:bound-bottommost-leaves}
	\KwIn{A matching instance~$(G = (V,E),s)$, a feedback vertex set~$X \subseteq V$ of size~$k$ for~$G$ with~$k < \log n$, and a maximum matching~$M_{G-X}$ for~$G-X$ with at most~$k^2$ free vertices in~$G-X$ that are all leaves.}
	\KwOut{An equivalent matching instance~$(G',s')$ such that~$X$ is also a feedback vertex set for~$G'$ and $G'-X$ is a tree with at most~$k^2 (2^k + 1)$ bottommost leaves, and a maximum matching~$M_{G'-X}$ for~$G'-X$ with at most~$k^2$ free vertices in~$G'-X$ that are all leaves.}
	\vspace{1em}
	
	Fix an arbitrary bijection~$f \colon 2^X \rightarrow \{1,\ldots,2^{k}\}$ \label{line:bijection-2^X-N} \;
	\ForEach{$v \in V \setminus X$}
	{
		Set~$f_X(v) \gets f(N(v) \cap X)$ \label{line:determine-set-Y} \tcp*{The number $f_X(v) < n$ can be read in constant time.}
	}
	Initialize a table~$\tab$ of size~$k \cdot 2^k$ with~$\tab[x,f(Y)] \gets 0$ for all~$x \in X, \emptyset \subsetneq Y \subseteq X$ \label{line:exp-size-table-init}\;
	$P \gets{}$ list containing all parents of interesting bottommost leaves \label{line:compute-list-P} \;
	\While{$P$ is not empty}
	{
		$u \gets{}$pop$(P)$\;
		$v \gets{}$child vertex of~$u$ in~$G-X$\;
		\ForEach(\label{line:check-neighbor-of-P-vertex}){$x \in N(u) \cap X$}
		{
			\If(\label{line:access-table}){$\tab[x,f_X(v)] < k$}
			{
				$\tab[x,f_X(v)] \gets \tab[x,f_X(v)] + 1$ \label{line:update-table}
			}
			\Else
			{
				delete~$\{x,u\}$ \label{line:delete-u-x}\;
			}
		}
		\If{$u$ has now degree two in~$G$}
		{
			Apply~\cref{rule:deg-two-vertices} to~$u$ \label{line:apply-deg-two-rule} \tcp*{This decreases~$s$ by one.}
			$vw \gets{}$vertex resulting from merging~$v$ and the parent~$w$ of~$u$\;
			\If{$vw$ is now an interesting bottommost leaf}
			{
				add the parent of~$vw$ to~$P$ %
				\label{line:add-contracted-vertex-to-P}
			}
		}
	}
	\KwRet{$(G,s)$ and~$M_{G-X}$}.
\end{algorithm}
The algorithm is explained in the proof of the following proposition stating the correctness and the running time of \cref{alg:bound-bottommost-leaves}.

\begin{proposition}
\label[proposition]{prop:bound-bottommost-leaves}
	Let~$(G = (V,E),s)$ be a \Match instance, let $X \subseteq V$ be a feedback vertex set, and let~$M_{G-X}$ be a maximum matching for~$G-X$ with at most~$k^2$ free vertices in~$G-X$ that are all leaves.
	Then, \cref{alg:bound-bottommost-leaves} computes in~$O(kn)$ time an instance~$(G',s')$ with feedback vertex set~$X$ and a maximum matching~$M_{G'-X}$ in~$G'-X$ such that the following holds. 
	\begin{itemize}
		\item There is a matching of size~$s$ in~$G$ if and only if there is a matching of size~$s'$ in~$G'$.
		\item There are at most~$2k^2 (2^k + 1)$ bottommost leaves in~$G'-X$.
		\item There are at most~$k^2$ free vertices in~$G'-X$ and they are all leaves.
	\end{itemize}
\end{proposition}

\begin{proof}
	We start with describing the basic idea of the algorithm.
	To this end, let~$\{u,v\} \in E$ be an edge such that~$v$ is an interesting bottommost leaf, that is, $v$ has no siblings and is matched to its parent~$u$ by~$M_{G-X}$.
	Counting for each pair~$x \in N(u) \cap X$ and~$y \in N(v) \cap X$ one augmenting path in a simple worst-case analysis gives~$O(k^2)$ time per edge, which is too slow for our purposes.
	Instead, we count for each pair consisting of a vertex~$x \in N(u) \cap X$ and a set~$Y = N(v) \cap X$ one augmenting path.
	In this way, we know that for each~$y \in Y$ there is one augmenting path from~$x$ to~$y$ without iterating through all~$y \in Y$. 
	This comes at the price of considering up to~$k 2^k$ such pairs. 
	However, we will show that we can do the computations in~$O(k)$ time per considered edge in~$G-X$. %
	The main reason for this improved running time is a simple preprocessing that allows for a bottommost vertex~$v$ to determine~$N(v) \cap X$ in constant time.
	
	The preprocessing is as follows (see \crefrange{line:bijection-2^X-N}{line:determine-set-Y}):
	First, fix an arbitrary bijection~$f$ between the set of all subsets of~$X$ to the numbers~$\{1,2,\ldots,2^k\}$.
	This can be done for example by representing a set~$Y \subseteq X = \{x_1, \ldots, x_k\}$ by a length-$k$ binary string (a number) where the~$i^\text{th}$ position is 1 if and only if~$x_i \in Y$.
	Given a set~$Y \subseteq X$ such a number can be computed in~$O(k)$ time in a straightforward way.
	Thus, \crefrange{line:bijection-2^X-N}{line:determine-set-Y} can be performed in~$O(kn)$ time.
	Furthermore, since we assume that~$k < \log n$ (otherwise the input instance has already at most~$2^k$ vertices), we have that~$f(Y) < n$ for each~$Y \subseteq X$.
	Thus, reading and comparing these numbers can be done in constant time.
	Furthermore, in \cref{line:determine-set-Y} the algorithm precomputes for each vertex the number corresponding to its neighborhood in~$X$.
	
	After the preprocessing, the algorithm uses a table~$\tab$ where it counts an augmenting path from a vertex~$x \in X$ to a set~$Y \subseteq X$ whenever a bottommost leaf~$v$ has exactly~$Y$ as neighborhood in~$X$ and the parent of~$v$ is adjacent to~$x$ (see \crefrange{line:exp-size-table-init}{line:add-contracted-vertex-to-P}).
	To do this in~$O(kn)$ time, the algorithm proceeds as follows:
	First, it computes in \cref{line:compute-list-P} the set~$P$ which contains all parents of interesting bottommost leaves.
	Clearly, this can be done in linear time.
	Next, the algorithm processes the vertices in~$P$.
	Observe that further vertices might be added to~$P$ (see \cref{line:add-contracted-vertex-to-P}) during this processing.
	Let~$u$ be the currently processed vertex of~$P$, let~$v$ be its child vertex, and let~$Y$ be the neighborhood of~$v$ in~$X$.
	For each neighbor~$x \in N(u) \cap X$, the algorithm checks whether there are already~$k$ augmenting paths between~$x$ and~$Y$ with a table lookup in~$\tab$ (see \cref{line:access-table}).
	If not, then the table entry is incremented by one (see \cref{line:update-table}) since~$u$ and~$v$ provide another augmenting path.
	If yes, then the edge~$\{x,u\}$ is deleted in \cref{line:delete-u-x} (we show below that this does not change the maximum matching size).
	If~$u$ has degree two after processing all neighbors of~$u$ in~$X$, then, by applying \cref{rule:deg-two-vertices}, we can remove~$u$ and merge its two neighbors~$v$ and~$w$.
	It follows from \cref{lem:rule-deg-2-time} that this application of \cref{rule:deg-two-vertices} can be done in~$O(k)$ time.
	Hence, one iteration of the while loop requires~$O(k)$ time and thus \cref{alg:bound-bottommost-leaves} runs in~$O(kn)$ time.

	Recall that all vertices in~$G-X$ that are free wrt.\ $M_{G-X}$ are leaves.
	Thus, the changes to~$M_{G-X}$ by applying \cref{rule:deg-two-vertices} in \cref{line:apply-deg-two-rule} are as follows: 
	First, the edge~$\{u,v\}$ is removed and second the edge~$\{w,q\}$ is replaced by~$\{vw,q\}$ for some~$q \in V$.
	Hence, the matching~$M_{G-X}$ after running \cref{alg:bound-bottommost-leaves} has still at most~$k^2$ free vertices and all of them are leaves.
	
	It remains to prove that 
	\begin{itemize}
		\item[(a)] the deletion of the edge~$\{x,u\}$ in \cref{line:delete-u-x} results in an equivalent instance and 
		\item[(b)] that the resulting instance has at most~$2k^2(2^k+1)$ bottommost leaves.
	\end{itemize}
	First, we show (a).
	To this end, assume towards a contradiction that the new graph~$G' := G - \{x,u\}$ has a smaller maximum matching than~$G$ (clearly, $G'$ cannot have a larger maximum matching).
	Thus, any maximum matching~$M_G$ for~$G$ has to contain the edge~$\{x,u\}$.
	This implies that the child~$v$ of~$u$ in~$G-X$ is matched in~$M_G$ with one of its neighbors (except~$u$):
	If~$v$ is free wrt.~$M_G$, then deleting~$\{x,u\}$ from~$M_G$ and adding~$\{v,u\}$ yields another maximum matching not containing~$\{x,u\}$, a contradiction.
	Recall that~$N(v) = \{u\} \cup Y$ where~$Y \subseteq X$ since~$v$ is a leaf in~$G-X$.
	Thus, each maximum matching~$M_G$ for~$G$ contains for some~$y \in Y$ the edge~$\{v,y\}$.
	Observe that \cref{alg:bound-bottommost-leaves} deletes~$\{x,u\}$ only if there are at least~$k$ other interesting bottommost leaves~$v_1, \ldots, v_k$ in~$G-X$ such that their respective parent is adjacent to~$x$ and~$N(v_i) \cap X = Y$ (see \crefrange{line:check-neighbor-of-P-vertex}{line:delete-u-x}).
	Since~$|Y| \le k$, it follows by the pigeonhole principle that at least one of these vertices, say~$v_i$, is not matched to any vertex in~$Y$.
	Thus, since~$v_i$ is an interesting bottommost leaf, it is matched to its only remaining neighbor: its parent~$u_i$ in~$G-X$.
	This implies that there is another maximum matching
	$$M'_{G} := (M_G \setminus \{\{v,y\},\{x,u\},\{u_i,v_i\}\}) \cup \{\{v_i,y\},\{x,u_i\},\{u,v\}\},$$
	a contradiction to the assumption that all maximum matchings for~$G$ have to contain~$\{x,u\}$.
	
	We next show (b) that the resulting instance has at most~$2k^2 (2^k + 1)$ bottommost leaves.
	To this end, recall that there are at most~$2k^2$ bottommost leaves that are not interesting (see discussion at the beginning of this subsection).
	Hence, it remains to upper-bound the number of interesting bottommost leaves.
	Observe that each parent~$u$ of an interesting bottommost leaf has to be adjacent to a vertex in~$X$ since otherwise~$u$ would have been deleted in \cref{line:apply-deg-two-rule}.
	Furthermore, after running \cref{alg:bound-bottommost-leaves}, each vertex~$x\in X$ is adjacent to at most~$k 2^k$ parents of interesting bottommost leaves (see \crefrange{line:access-table}{line:delete-u-x}).
	Thus, the number of interesting bottommost leaves is at most~$k^2 2^k$.
	Hence, the number of bottommost leaves is upper-bounded by~$2k^2 (2^k + 1)$. \qed
\end{proof}

\subsubsection{\cref{step:bound-long-paths}}
\label{sec:kernel-fvs-step6}

In this subsection, we provide the final step of our kernelization algorithm.
Recall that in the previous steps we have upper-bounded the number of bottommost leaves in~$G-X$ by~$O(k^2 2^k)$.
We also computed a maximum matching~$M_{G-X}$ for~$G-X$ such that at most~$k^2$ vertices are free wrt.~$M_{G-X}$ and all free vertices are leaves in~$G-X$.
Using this, we next show how to reduce~$G$ to a graph of size~$O(k^3 2^k)$.
To this end we need some further notation.
A leaf in~$G-X$ that is not bottommost is called a \emph{pendant}.
We define~$T$ to be the \emph{pendant-free tree (forest) of~$G-X$}, that is, the tree (forest) obtained from~$G-X$ by removing all pendants.
The next observation shows that~$G-X$ is not much larger than~$T$.
This allows us to restrict ourselves on giving an upper bound on the size of~$T$ instead of~$G-X$.

\begin{observation}\
	Let~$G-X$ be as described above with vertex set~$V \setminus X$ and let~$T$ be the pendant-free tree (forest) of~$G-X$ with vertex set~$V_T$.
	Then,~$|V \setminus X| \le 2|V_T| + k^2$.
\end{observation}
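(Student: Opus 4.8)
The goal is to bound $|V \setminus X|$, the number of vertices in the forest $G-X$, in terms of $|V_T|$, the number of vertices in the pendant-free forest $T$. Since $T$ is obtained from $G-X$ by deleting all pendants (non-bottommost leaves), the difference $|V \setminus X| - |V_T|$ is exactly the number of pendants in $G-X$. So the plan is to count pendants and show there are at most $|V_T| + k^2$ of them.

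First I would classify the leaves of $G-X$ into bottommost leaves and pendants, as in the definitions given. Each pendant is a leaf whose parent has at least one non-leaf child, so the parent survives into $T$ (it is not a pendant itself, being an internal vertex of $G-X$, and it retains non-leaf children). The key combinatorial fact I want to exploit is that the parent of each pendant is a vertex of $T$; I would then charge each pendant to its parent in $T$. The subtlety is that a single parent in $T$ could have many pendant children, so a naive charging does not immediately give a linear bound. To handle this I would use the structure established in \cref{step:reduce-rule-1-2}: the instance is reduced with respect to \cref{rule:deg-two-vertices-restricted}, which limits how degree-two configurations can chain together, and I would combine this with the bounds already proven on free leaves.

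The cleaner route, which I expect to be the intended one, is to split pendants by whether they are free or matched with respect to $M_{G-X}$. By \cref{lem:bound-bottommost-leaves} (the output of the previous step), there are at most $k^2$ free vertices in $G-X$, all of them leaves; hence at most $k^2$ pendants are free. For the matched pendants, each is matched to its parent in $T$ (the parent is an internal vertex and so lies in $V_T$), and distinct matched pendants are matched to distinct parents because $M_{G-X}$ is a matching. This injects the set of matched pendants into $V_T$, giving at most $|V_T|$ matched pendants. Adding the two contributions yields at most $|V_T| + k^2$ pendants in total, so $|V \setminus X| = |V_T| + (\text{number of pendants}) \le 2|V_T| + k^2$, where the crude bound $|V_T| + k^2 \le |V_T| + |V_T| + k^2$ absorbs everything into the stated form (indeed even the sharper $|V_T| + k^2$ extra pendants gives the claimed inequality).

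The main obstacle I anticipate is making the charging of matched pendants to parents in $V_T$ fully rigorous: I must verify that the parent of a matched pendant is genuinely a vertex of the pendant-free forest $T$ (i.e.\ it is not itself removed as a pendant) and that a matched pendant's matching partner is indeed its parent rather than a vertex of $X$. The latter holds because a pendant is a leaf in $G-X$, so within $G-X$ its only neighbor is its parent; any matching edge of $M_{G-X}$ incident to it must therefore go to the parent. The former holds because the parent has a pendant child and is thus an internal (degree $\ge 2$ in $G-X$) vertex, so it is never a leaf and survives into $T$. Once these two points are nailed down, the injection and the count follow immediately.
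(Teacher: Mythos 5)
Your proposal is correct and follows essentially the same route as the paper: both arguments bound the number of pendants by $|V_T| + k^2$ by observing that at most one pendant per parent can be matched in $M_{G-X}$ (so the matched pendants inject into $V_T$ via their distinct parents, which are internal vertices and hence survive into $T$), while all remaining pendants are free and therefore number at most $k^2$ by the preceding step. Your free/matched split is just a slightly more explicit phrasing of the paper's ``all but at most $k^2$ pendants have pairwise different parents'' argument, and the worries in your first paragraph (naive charging, \cref{rule:deg-two-vertices-restricted}) are not needed.
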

\begin{proof}
	Observe that~$V \setminus X$ is the union of all pendants in~$G-X$ and~$V_T$.
	Thus, it suffices to show that~$G-X$ contains at most~$|V_T| + k^2$ pendants.
	To this end, recall that we have a maximum matching for~$G-X$ with at most~$k^2$ free leaves.
	Thus, there are at most~$k^2$ leaves in~$G-X$ that have a sibling which is also a leaf since from two leaves with the same parent at most one can be matched.
	Hence, all but at most~$k^2$ pendants in~$G-X$ have pairwise different parent vertices.
	Since all these parent vertices are in~$V_T$, it follows that the number of pendants in~$G-X$ is~$|V_T| + k^2$. \qed
\end{proof}
We use the following observation to provide an upper bound on the number of leaves of~$T$.

\begin{observation}
\label[observation]{obs:pendant-free-forest-leaves}
	Let~$F$ be a forest, let~$F'$ be the pendant-free forest of~$F$, and let~$B$ be the set of all bottommost leaves in~$F$.
	Then, the set of leaves in~$F'$ is exactly~$B$.
\end{observation}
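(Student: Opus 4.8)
The plan is to prove the set equality by establishing the two inclusions ``$B \subseteq$ leaves of $F'$'' and ``leaves of $F' \subseteq B$'' separately, arguing directly from the definitions. Recall that a bottommost leaf is a leaf all of whose siblings are leaves (equivalently, a leaf whose parent has only leaf children, including the case of a single child), a pendant is a leaf that is \emph{not} bottommost, and $F'$ is obtained from $F$ by deleting exactly the set of pendants of $F$ in one shot. The one structural fact I would isolate and use repeatedly is that deleting pendants removes only \emph{leaves} of $F$, so no internal vertex of $F$ is ever deleted; in particular the parent--child relations among surviving vertices are inherited unchanged from $F$.

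For the inclusion $B \subseteq \text{leaves}(F')$, let $\ell \in B$. Since $\ell$ is bottommost it is not a pendant, so $\ell$ survives into $F'$. Its parent $p$ is an internal vertex of $F$ (it has the child $\ell$), hence $p$ is not a leaf, hence not a pendant, so $p$ also survives. Thus in $F'$ the vertex $\ell$ still has the parent $p$ and, having had no children in $F$, still has no children in $F'$; therefore $\ell$ is a (non-root) leaf of $F'$. For the reverse inclusion, take a leaf $\ell$ of $F'$. Since $\ell$ survived the deletion, it is not a pendant of $F$, so it suffices to show that $\ell$ is already a leaf of $F$: any leaf of $F$ that is not a pendant is, by definition, bottommost. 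I would argue by contradiction that $\ell$ has no children in $F$. If $\ell$ had a child in $F$, then since $\ell$ is a leaf of $F'$ \emph{every} child of $\ell$ must have been deleted (a surviving child would give $\ell$ a child in $F'$), so every child of $\ell$ is a pendant and in particular a leaf of $F$. But then all children of $\ell$ are leaves, which makes each of them a bottommost leaf, contradicting that they are pendants. Hence $\ell$ has no children in $F$, i.e.\ $\ell$ is a leaf of $F$, and being a non-deleted leaf it lies in $B$.

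The step I expect to be the main obstacle is this second inclusion, specifically ruling out the possibility that a leaf of $F'$ was a formerly internal vertex of $F$ whose children were all deleted. The crux is the clean observation that a vertex whose children are all leaves cannot have those children deleted, because such children are precisely bottommost leaves and bottommost leaves are never pendants; this is what forces $\ell$ to have been a leaf of $F$ already. The only degenerate situation is a root with no children (an isolated vertex of $F$), which has no parent and for which the ``sibling'' condition is vacuous; such a vertex is handled directly as a trivial case (and does not arise in our application, where $F = G - X$ has been reduced with respect to \cref{rule:deg-zero-one-vertices}), so it does not affect the argument.
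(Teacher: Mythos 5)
Your proof is correct and follows essentially the same route as the paper's: the easy inclusion that bottommost leaves survive as leaves, and for the converse the key contradiction that a vertex of $F'$ which was internal in $F$ would need all its children deleted, yet those children—being pendants and hence leaves—would all be leaves and therefore bottommost, so they could not have been deleted. Your version is in fact slightly more careful on the minor points (the parent of a bottommost leaf surviving, and the isolated-root degeneracy), but the substance matches the paper's argument.
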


\begin{proof}
	First observe that each bottommost leaf of~$F$ is a leaf of~$F'$ since no bottommost leaf is removed and~$F'$ is a subgraph of~$F$.
	Thus, it remains to show that each leaf~$v$ in~$F'$ is a bottommost leaf in~$F$.
	
	We distinguish two cases of whether or not~$v$ is a leaf in~$F$: 
	First, assume that~$v$ is not a leaf in~$F$. 
	Thus, all of its child vertices have been removed. 
	Since we only remove pendants to obtain~$F'$ from~$F$ and since each pendant is a leaf, it follows that~$v$ is in~$F$ the parent of one or more leaves~$u_1, \ldots, u_\ell$. 
	Thus, by definition, all these leaves~$u_1, \ldots, u_\ell$ are bottommost leaves, a contradiction to the fact that they were deleted when creating~$F'$.
	
	Second, assume that~$v$ is a leaf in~$F$.
	If~$v$ is a bottommost leaf, then we are done. 
	Thus, assume that~$v$ is not a bottommost leaf and hence a pendant.
	However, since we remove all pendants to obtain~$F'$ from~$F$, it follows that~$v$ is not contained in~$F'$, a contradiction. \qed
\end{proof}

From \cref{obs:pendant-free-forest-leaves} it follows that the set~$B$ of bottommost leaves in~$G-X$ is exactly the set of leaves in~$T$.
In the previous step we reduced the graph such that~$|B| \le 2k^2 (2^k + 1)$ (see \cref{prop:bound-bottommost-leaves}). 
Thus, $T$ has at most~$2k^2 (2^k + 1)$ vertices of degree one and, since~$T$ is a tree (a forest), $T$ also has at most~$2k^2 (2^k + 1)$ vertices of degree at least three.
Let~$V_T^{2}$ be the vertices of degree two in~$T$ and let~$V_T^{\neq 2}$ be the remaining vertices in~$T$.
From the above it follows that~$|V_T^{\neq 2}| \le 4k^2 (2^k + 1)$.
Hence, it remains to upper-bound the size of~$V_T^{2}$.
To this end, we will upper-bound the degree of each vertex in~$X$ by~$O(k^2 2^k)$ and then use \cref{rule:deg-zero-one-vertices,rule:deg-two-vertices}. 
We will check for each edge~$\{x,v\} \in E$ with~$x \in X$ and~$v \in V \setminus X$ whether we ``need'' it.
This check will use the idea from the previous subsection where each vertex in~$X$ needs to reach each subset~$Y \subseteq X$ at most~$k$ times via an augmenting path.
Similarly as in the previous subsection, we want to keep ``enough'' of these augmenting paths. 
However, this time the augmenting paths might be long and different augmenting paths might overlap.
To still use the basic approach, we use the following lemma stating that we can still somehow replace augmenting paths.

\begin{lemma}\label[lemma]{lem:exchange-augmenting-paths}
	Let~$M_{G-X}$ be a maximum matching in the forest~$G-X$.
	Let~$P_{uv}$ be an augmenting path for~$M_{G-X}$ in~$G$ from~$u$ to~$v$.
	Let~$P_{wx}$, $P_{wy}$, and~$P_{wz}$ be three internally vertex-disjoint augmenting paths from~$w$ to~$x$, $y$, and~$z$, respectively, such that~$P_{uv}$ intersects all of them.
	Then, there exist two vertex-disjoint augmenting paths with endpoints~$u$, $v$, $w$, and one of the three vertices~$x$, $y$, and~$z$.
\end{lemma}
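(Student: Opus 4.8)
The plan is to reduce everything to the forest structure of $G-X$ and then to win by a parity-plus-pigeonhole argument that uses precisely the hypothesis of having \emph{three} paths from $w$. First I would record the key structural fact: since every vertex of $X$ is unmatched by $M_{G-X}$, no vertex of $X$ can be an \emph{internal} vertex of an augmenting path for $M_{G-X}$ (an internal vertex of an augmenting path is incident to a matching edge of the path, hence matched). Consequently, deleting from each of $P_{uv},P_{wx},P_{wy},P_{wz}$ its free endpoints that lie in $X$ leaves a path inside the forest $G-X$; call these forest \emph{cores} $Q_{uv},Q_{wx},Q_{wy},Q_{wz}$. Because the three paths from $w$ are internally vertex-disjoint, their cores are pairwise vertex-disjoint; and since $Q_{uv}$ and each $Q_{w\bullet}$ are paths in a forest, the intersection $S_\bullet:=Q_{uv}\cap Q_{w\bullet}$ is a single connected subpath. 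Hence $S_x,S_y,S_z$ are three pairwise-disjoint subpaths appearing in some linear order along $Q_{uv}$ (I write $a_c,b_c$ for the endpoints of $S_c$ along $Q_{uv}$, with $a_c$ nearer $u$).

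Next I would analyse the alternation at the two endpoints of a shared segment $S_c$. At such an endpoint three forest-edges meet: the edge \emph{into} $S_c$ and the two outgoing edges, one continuing along $Q_{uv}$ and one continuing along $Q_{wc}$. Using that the endpoint is matched and that its unique matching edge must be used by both cores passing through it, a short case check forces the edge into $S_c$ to be the matching edge and both outgoing edges to be non-matching. This has two consequences. The naive short-cut ``$u\to$ endpoint of $S_c\to w$'' fails, because it would place two non-matching edges in a row; but \emph{crossing through} the whole segment is always alternating. Concretely, I define a \emph{left-cross} at $S_c$ (follow $Q_{uv}$ from $u$ into $S_c$, traverse $S_c$, then leave along $Q_{wc}$) and a \emph{right-cross} at $S_c$ (the symmetric object from $v$); each is a valid alternating path. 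Tracking where the exit edge leads, I would show that the outcome depends only on the relative orientation $o_c\in\{+,-\}$ of $S_c$ on the two cores: a left-cross ends at $w$ if and only if $o_c=-$ (and at $c$ otherwise), while a right-cross ends at $w$ if and only if $o_c=+$ (and at $c$ otherwise).

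The construction then takes two of the three segments, performs a left-cross at the one nearer $u$ and a right-cross at the one nearer $v$, and arranges that $w$ is an endpoint of exactly one of the two resulting paths. Checking the four orientation combinations, the pair succeeds -- it produces two paths whose combined endpoints are $u,v,w$ and a single far endpoint in $\{x,y,z\}$, and which are vertex-disjoint because their $Q_{uv}$-portions $Q_{uv}[u,b_c]$ and $Q_{uv}[a_{c'},v]$ are separated by the segment ordering, while their $Q_{w\bullet}$-portions lie on distinct cores that meet only at $w$, which only one path uses -- precisely when the two chosen segments share the same orientation. This is exactly where three paths are needed: among $o_x,o_y,o_z\in\{+,-\}$ two must coincide by the pigeonhole principle, so an eligible ordered pair always exists, and its far endpoint is the promised vertex among $x,y,z$.

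The step I expect to be the main obstacle is the alternation/orientation bookkeeping: proving that through-segment crossings are alternating while endpoint short-cuts are not, and correctly determining, as a function of $o_c$, whether a cross terminates at $w$ or at the far endpoint. Once this dictionary is in place, the pigeonhole selection and the disjointness check are routine. I would also dispose of a few degenerate configurations (a segment endpoint coinciding with a core endpoint, or $u,v$ coinciding with some of $w,x,y,z$), but these only shorten the constructed paths and are handled by the same edge-type analysis.
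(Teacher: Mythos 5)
Your proposal is correct and follows essentially the same route as the paper's proof: the paper likewise uses that intersections of augmenting paths are connected subpaths of the forest $G-X$ starting and ending with matching edges, applies the pigeonhole principle to find two of the three paths from $w$ whose intersections with $P_{uv}$ have ``fitting parity'' (your orientation $o_c$), and then splices $u\,\text{--}\,P_{uv}\,\text{--}\,P_{wx}\,\text{--}\,x$ and $w\,\text{--}\,P_{wy}\,\text{--}\,P_{uv}\,\text{--}\,v$ exactly as in your left-cross/right-cross construction. The only cosmetic difference is that the paper tracks alternation via a global odd/even labelling of each path rather than your local edge-type analysis at segment endpoints.
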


\begin{proof}
	Label the vertices in~$P_{uv}$ alternating as \emph{odd} or \emph{even} with respect to~$P_{uv}$ so that no two consecutive vertices have the same label, $u$ is \emph{odd}, and~$v$ is \emph{even}.
	Analogously, label the vertices in~$P_{wx}$, $P_{wy}$, and $P_{wz}$ as odd and even with respect to~$P_{wx}$, $P_{wy}$, and $P_{wz}$, respectively, so that~$w$ is always odd.
	Since all these paths are augmenting, it follows that each edge from an even vertex to its succeeding odd vertex is in the matching~$M_{G-X}$ and each edge from an odd vertex to its succeeding even vertex is not in the matching.
	Observe that~$P_{uv}$ intersects each of the other paths at least at two consecutive vertices, since every second edge must be an edge in~$M_{G-X}$.
	Since~$G-X$ is a forest and all vertices in~$X$ are free with respect to~$M_{G-X}$, it follows that the intersection of two augmenting paths is connected and thus a path.
	Since~$P_{uv}$ intersects the three augmenting paths from~$w$, it follows that at least two of these paths, say~$P_{wx}$ and~$P_{wy}$, have a ``fitting parity'', that is, in the intersections of~$P_{uv}$ with~$P_{wx}$ and with~$P_{wy}$ the even vertices with respect to~$P_{uv}$ are either even or odd with respect to \emph{both}~$P_{wx}$ and~$P_{wy}$.
	
	Assume without loss of generality that in the intersections of the paths the vertices have the same label with respect to the three paths (if the labels differ, then revert the ordering of the vertices in~$P_{uv}$, that is, exchange the names of~$u$ and~$v$ and change all labels on~$P_{uv}$ to their opposite).
	Denote with~$v^1_s$ and~$v^1_t$ the first and the last vertex in the intersection of~$P_{uv}$ and~$P_{wx}$.
	Analogously, denote with~$v^2_s$ and~$v^2_t$ the first and the last vertex in the intersection of~$P_{uv}$ and~$P_{wy}$.
	Assume without loss of generality that~$P_{uv}$ intersects first with~$P_{wx}$ and then with~$P_{wy}$.
	Observe that~$v^1_s$ and~$v^2_s$ are even vertices and~$v^1_t$ and~$v^2_t$ are odd vertices since the intersections have to start and end with edges in~$M_{G-X}$ (see \cref{fig:augmenting-paths-intersection} for an illustration).
	\begin{figure}
		\begin{center}
			\tikzstyle{knoten}=[circle,draw=black,minimum size=18pt,inner sep=2pt]
			\begin{tikzpicture}[scale=0.75]
				\foreach \i / \txt in {1/u,2/{v^1_s},3/{},4/{},5/{v^1_t},6/{v^2_s},7/{},8/{},9/{v^2_t},10/{v}}
				{
					\ifthenelse{\i<6}{
						\node[knoten] (v\i) at (1.5*\i, 0) {$\txt$};
					}
					{
						\node[knoten] (v\i) at (1.5*\i + 1, 0) {$\txt$};
					}
				}
				\node[knoten] (v11) at (6.5, 2) {$w$};
				\node[knoten] (v12) at (7.5, -1.5) {$x$};
				\node[knoten] (v13) at (14.5, -1.5) {$y$};
				
				\foreach \i / \j in {1/2,3/4,5/6,7/8,9/10,11/2,11/6,12/5,13/9}
				{
					\path (v\i) edge[dashed,-]  (v\j);
				}
				\foreach \i / \j in {2/3,4/5,6/7,8/9}
				{
					\path (v\i) edge[ultra thick,-]  (v\j);
				}
				
				\tikzstyle{edge} = [color=black,opacity=.15,line cap=round, line join=round, line width=25pt]
				\begin{pgfonlayer}{background}
					\draw[edge, line width=30pt] (v1.center) -- (v10.center) -- cycle;

					\draw[edge] (v11.center) -- (v2.center) -- (v5.center) -- (v12.center);

					\draw[edge] (v11.center) -- (v6.center) -- (v9.center) -- (v13.center);
					
					\draw[edge, line width=10pt] (v1.center) -- (v5.center) -- (v12.center);
					\draw[edge, line width=10pt] (v11.center) -- (v6.center) -- (v10.center);
				\end{pgfonlayer}
			\end{tikzpicture}
		\end{center}
		\caption{
			The situation in the proof of \cref{lem:exchange-augmenting-paths}.
			The augmenting path from~$u$ to~$v$ intersects the two augmenting paths~$P_{wx}$ and~$P_{wy}$ from~$w$ to~$x$ and~$y$,  respectively. Bold edges indicate edges in the matching, dashed edges indicate odd-length alternating paths starting with the first and last edge not being in the matching.
			The gray paths in the background highlight the different augmenting paths: the initial paths from~$u$ to~$v$, $w$ to~$x$, and~$x$ to~$y$ as well as the new paths from~$u$ to~$x$ and~$w$ to~$v$ as postulated by \cref{lem:exchange-augmenting-paths}.
		}
		\label{fig:augmenting-paths-intersection}
	\end{figure}
	For an arbitrary path~$P$ and for two arbitrary vertices~$p_1,p_2$ of $P$, denote by~$p_1-P-p_2$ the subpath of $P$ from~$p_1$ to~$p_2$.
	Observe that~$u-P_{uv}-v^1_t-P_{wx}-x$ and~$w-P_{wy}-v^2_t-P_{uv}-v$ are vertex-disjoint augmenting paths. \qed
\end{proof}

\paragraph{Algorithm description.}
We now provide the algorithm for \cref{step:bound-long-paths} (see \cref{alg:final-fvs-kernel-step} for pseudocode).
The algorithm uses the same preprocessing (see \crefrange{line:bijection-2^X-N}{line:determine-set-Y}) as \cref{alg:bound-bottommost-leaves}. 
Thus, the algorithm can determine whether two vertices have the same neighborhood in~$X$ in constant time. 
As in \cref{alg:bound-bottommost-leaves}, \cref{alg:final-fvs-kernel-step} uses a table~$\tab$ which has an entry for each vertex~$x \in X$ and each set~$Y \subseteq X$.
The table is filled in such a way that the algorithm detected for each~$y \in Y$ at least~$\tab[x,Y]$ internally vertex-disjoint augmenting paths from~$x$ to~$y$.
\begin{algorithm}[t!]\small
	\caption{An algorithm for computing \cref{step:bound-long-paths} in~$O(kn)$ time.} %
	\label{alg:final-fvs-kernel-step}
	\KwIn{A matching instance~$(G = (V,E),s)$, a feedback vertex set~$X \subseteq V$ of size~$k$ for~$G$ with~$k < \log n$ and at most~$k^2 (2^k+1)$ bottommost leaves in~$G-X$, and a maximum matching~$M_{G-X}$ for~$G-X$ with at most~$k^2$ free vertices in~$G-X$ that are all leaves.}
	\KwOut{An equivalent matching instance~$(G',s')$ such that~$G'$ contains at most~$O(k^3 2^k)$ vertices and edges.}

	Fix an arbitrary bijection~$f \colon 2^X \rightarrow \{1,\ldots,2^{k}\}$ \label{line:bijection-2^X-N-II} \;
	\ForEach{$v \in V \setminus X$}
	{
		Set~$f_X(v) \gets f(N(v) \cap X)$ \label{line:determine-set-Y-II} \tcp*{The number $f_X(v) < n$ can be read in constant time.}
	}
	Initialize a table~$\tab$ of size~$k \cdot 2^k$ with~$\tab[x,f(Y)] \gets 0$ for~$x \in X, \emptyset \subsetneq Y \subseteq X$\;
	$T \gets{}$pendant-free tree (forest) of~$G-X$\;
	$V_T^{\ge 3} \gets{}$vertices in~$T$ with degree~$\geq 3$ \label{line:get-deg-three-vertices-in-T}\;

	\ForEach(\label{line:bound-degree-of-x-loop}){$x \in X$}
	{
		\ForEach{$v \in N(x) \setminus X$}
		{
			\If(\label{line:test-keep-edge}\tcp*[f]{Is~$\{x,v\}$ needed for an augmenting path?}){Keep-Edge($x,v$)${}={}$false}
			{
				delete~$\{x,v\}$ \label{line:delete-x-v}
			}
		}
	}
	Exhaustively apply first \cref{rule:deg-zero-one-vertices} and then \cref{rule:deg-two-vertices} \label{line:apply-reduction-rules}\;
	\KwRet{$(G,s)$}.
	
	\smallskip  

	\Fn(\label{line:function-start}){Keep-Edge($x \in X, v \in V\setminus X$)}
	{
		\lIf{$v$ is free wrt.~$M_{G-X}$ or $v \in V_T^{\ge 3}$}
		{
			\KwRet{true} \label{line:v-is-free-or-split-vertex}
		}
		$w \gets{}$ matched neighbor of~$v$ in~$M_{G-X}$ \label{line:chose-w} \;
		\lIf{$w \in V_T^{\ge 3}$ or~$w$ is adjacent to a free leaf in~$G-X$}
		{
			\KwRet{true} \label{line:w-is-adjacent-to-free-or-split-vertex}
		}
		\If(\label{line:table-lookup}){$w$ has at least one neighbor in~$X$ and $\tab[x,f_X(w)] < 6k^2$}
		{
			$\tab[x,f_X(w)] \gets \tab[x,f_X(w)] + 1$\label{line:table-entry-increase} \; 
			\KwRet{true} \label{line:table-entry-increase-return}
		}
		\ForEach(\label{line:extend-alternating-path}){$u \in N(w) \setminus \{v\}$ that is matched wrt.~$M_{G-X}$ and fulfills~$\{u,x\} \notin E$}
		{
			\lIf(\label{line:extend-alternating-with-u}) {Keep-Edge($u,x$)${}={}$true}
			{
				\KwRet{true}
			}
		}
		\KwRet{false} \label{line:function-end}
	}
\end{algorithm}
The main part of the algorithm is the boolean function `Keep-Edge' in \crefrange{line:function-start}{line:function-end} which makes the decision on whether to delete an edge~$\{x,v\}$ for~$v\in V\setminus X$ and~$x \in X$.
The function works as follows for edge~$\{x,v\}$:
Starting at~$v$ the graph will be explored along possible augmenting paths until a ``reason'' for keeping the edge~$\{x,v\}$ is found or no further exploration is possible (see \cref{fig:step-6-keep-edge-explore} for an illustration).
\begin{figure}[t!]
	\begin{center}
		\tikzstyle{knoten}=[circle,draw=black,minimum size=18pt,inner sep=2pt]
		\begin{tikzpicture}[]
			\foreach \i in {1,...,10}
			{
				\node[knoten] (v\i) at (\i, 2) {$v_{\i}$};
			}
			\path (v1) edge[-,ultra thick]  ($(v1) + (-1, 1)$);
			\path (v1) edge[-]  ($(v1) + (-1, 0)$);
			\path (v1) edge[-]  ($(v1) + (-1,-1)$);

			\foreach \i [remember=\i as \xi (initially 1)] in {2,...,10}
			{
				\ifthenelse{\isodd{\i}}
				{
					\path (v\i) edge[-,ultra thick]  (v\xi);
				}
				{
					\path (v\i) edge[-]  (v\xi);
				}
			}
			
			\node[knoten] (x) at (5, 0) {$x$};
			\node[knoten] (y) at (8, 0) {$y$};
			\node[knoten] (z) at (9, 0) {$z$};
			
			\path (x) edge[-]  (v3);
			\path (x) edge[-]  (v4);
			\path (y) edge[-]  (v5);
			\path (z) edge[-]  (v5);
			\path (x) edge[-]  (v8);

			\tikzstyle{edge} = [color=black,opacity=.15,line cap=round, line join=round, line width=20pt]
			\begin{pgfonlayer}{background}
				\draw[edge] (x.center) \foreach \i in {3,2,1}{ -- (v\i.center)};
				\draw[edge] (x.center) \foreach \i in {4,5}{ -- (v\i.center)};
				\draw[edge] (x.center) \foreach \i in {8,9,10}{ -- (v\i.center)};
			\end{pgfonlayer}

		\end{tikzpicture}
	\end{center}
	\caption{ 
		Illustration of the graph exploration of the function Keep-Edge in \cref{alg:final-fvs-kernel-step}:
		The vertices $x$ and $y$ are vertices in the feedback vertex set~$X$. 
		The vertices~$v_1, \ldots, v_{10}$ are part of~$G-X$ where~$v_{10}$ is a free leaf.
		The matching~$M_{G-X}$ is denoted by the thick edges.
		Three alternating paths are highlighted; each path represents an exploration of Keep-Edge from~$x$ that returns true:
		First, the path via~$v_3$ ends in~$v_1$---a vertex with degree more than two in~$G-X$ (see \cref{line:v-is-free-or-split-vertex}).
		The second path via~$v_4$ ends in~$v_5$---a vertex connected to two vertices in~$X$ (here we assume that there are less than~$6k^2$ paths from~$x$ to vertices adjacent to~$y$ and~$z$; see \crefrange{line:table-lookup}{line:table-entry-increase-return}).
		The third path via~$v_8$ ends in the free leaf~$v_{10}$ (see \cref{line:v-is-free-or-split-vertex}).
	}
	\label{fig:step-6-keep-edge-explore}
\end{figure}

If the vertex~$v$ is free wrt.~$M_{G-X}$, then~$\{x,v\}$ is an augmenting path and we keep~$\{x,v\}$ (see \cref{line:v-is-free-or-split-vertex}).
Observe that in \cref{step:bound-free-leaves} (see \cref{prop:bound-free-leaves-lin-time}) we upper-bounded the number of free vertices by~$k^2$ and all these vertices are leaves.
Thus, we keep a bounded number of edges incident to~$x$ because the corresponding augmenting paths can end at a free leaf.
We provide the exact bound below when discussing the size of the graph returned by~\cref{alg:final-fvs-kernel-step}.
In \cref{line:v-is-free-or-split-vertex}, the algorithm also stops exploring the graph and keeps the edge~$\{x,v\}$ if~$v$ has degree at least three in~$T$.
The reason is to keep the graph exploration simple by following only degree-two vertices in~$T$. 
This ensures that the running time for exploring the graph from~$x$ does not exceed~$O(n)$.
Since the number of vertices in~$T$ with degree at least three is bounded (see discussion after \cref{obs:pendant-free-forest-leaves}), it follows that only a bounded number of such edges~$\{x,v\}$ are kept.

If~$v$ is not free wrt.~$M_{G-X}$, then it is matched with some vertex~$w$.
If~$w$ is adjacent to some leaf~$u$ in~$G-X$ that is free wrt.~$M_{G-X}$, then the path~$x,v,w,u$ is an augmenting path.
Thus, the algorithm keeps in this case the edge~$\{x,v\}$, see \cref{line:w-is-adjacent-to-free-or-split-vertex}.
Again, since the number of free leaves is bounded, only a bounded number of edges incident to~$x$ will be kept.
If~$w$ has degree at least three in~$T$, then the algorithm stops the graph exploration here and keeps the edge~$\{x,v\}$, see \cref{line:w-is-adjacent-to-free-or-split-vertex}.
Again, this is to keep the running time at~$O(kn)$ overall.

Let~$Y \subseteq X$ denote the neighborhood of~$w$ in~$X$. 
Thus the partial augmenting path~$x,v,w$ can be extended to each vertex in~$Y$.
Thus, if the algorithm did not yet find~$6k^2$ paths from~$x$ to vertices whose neighborhood in~$X$ is also~$Y$, then the table entry~$\tab[x,f_X(w)]$ (where $f_X(w)$ encodes the set~$Y = N(w) \cap X$) is increased by one and the edge~$\{x,v\}$ will be kept (see \cref{line:table-entry-increase,line:table-entry-increase-return}). 
(Here we need~$6k^2$ paths since these paths might be long and intersect with many other augmenting paths, see proof of \cref{prop:final-fvs-kernel-step} for the details of why~$6k^2$ is enough.) 
If the algorithm already found~$6k^2$ ``augmenting paths'' from~$x$ to~$Y$, then the neighborhood of~$w$ in~$X$ is irrelevant for~$x$ and the algorithm continues. 

In \cref{line:extend-alternating-path}, all above discussed cases to keep the edge~$\{x,v\}$ do not apply and the algorithm extends the partial augmenting part~$x,v,w$ by considering the neighbors of~$w$ except~$v$.
Since the algorithm dealt with possible extensions to vertices in~$X$ in \cref{line:table-entry-increase,line:table-entry-increase-return,line:table-lookup} and with extensions to free vertices in~$G-X$ in \cref{line:v-is-free-or-split-vertex}, it follows that the next vertex on this path has to be a vertex~$u$ that is matched wrt.~$M_{G-X}$.
Furthermore, since we want to extend a partial augmenting path from~$x$, we require that~$u$ is not adjacent to~$x$: 
otherwise the length-one path~$x,u$ would be another, shorter partial augmenting path from~$x$ to~$u$ and we do not need the currently stored partial augmenting path.

\paragraph{Statements on \cref{alg:final-fvs-kernel-step}.}

To show that \cref{alg:final-fvs-kernel-step} indeed performs \cref{step:bound-long-paths}, we need further lemmas.
For each edge~$\{x,z\}$ with~$x \in X$ and~$z \in V \setminus X$ we denote by~$P(x,z)$ the induced subgraph of~$G-X$ on the vertices that are explored in the function Keep-Edge when called in \cref{line:test-keep-edge} with~$x$ and~$z$.
More precisely, we initialize~$P(x,z) := \emptyset$.
Whenever the algorithm reaches \cref{line:v-is-free-or-split-vertex}, we add~$v$ to~$P(x,z)$.
Furthermore, whenever the algorithm reaches \cref{line:table-lookup}, we add~$w$ to~$P(x,z)$.
Similarly, when the recursive call in \cref{line:extend-alternating-with-u} returns true, then we add~$u$ to~$P(x,z)$ in the recursive call (with~$u$ taking the role of~$v$).

We next show that~$P(x,z)$ is a path with at most one additional pendant.

\begin{lemma}\label[lemma]{lem:explore-in-paths}
	Let~$x \in X$ and~$z \in V \setminus X$ be two vertices such that~$\{x,z\} \in E$.
	Then, $P(x,z)$ is either a path or a tree with exactly one vertex~$z'$ having more than two neighbors in~$P(x,z)$.
	Furthermore, $z'$ has degree exactly three and $z$~is a neighbor of~$z'$. 
\end{lemma}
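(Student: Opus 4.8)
The plan is to exploit that $P(x,z)$ lives inside the forest $G-X$ and to track exactly which recursive step of \textsc{Keep-Edge} can make a vertex branch. First I would record that $P(x,z)$ is a \emph{tree}: every vertex added to it is reached by walking along edges of $G-X$ starting from~$z$ (each call moves from its entry vertex~$v$ to the unique matched neighbour~$w$ in \cref{line:chose-w} and then to a neighbour~$u$ of~$w$ in \cref{line:extend-alternating-path}), so the explored vertices form a connected set; since the induced subgraph of a forest on a connected vertex set is acyclic, $P(x,z)$ is a tree whose edges are exactly the traversed ones. Consequently a vertex attains degree $\ge 3$ in $P(x,z)$ only if the algorithm leaves it along at least two distinct edges, which can happen only at a vertex playing the role of~$w$ in the branching loop of \cref{line:extend-alternating-path}.

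Next I would classify the possible branching vertices. A vertex~$w$ (the matched partner of the current entry vertex~$v$) enters the branching loop only if the returns in \cref{line:v-is-free-or-split-vertex,line:w-is-adjacent-to-free-or-split-vertex} did not fire, i.e.\ $w \notin V_T^{\ge 3}$ and $w$ has no free-leaf neighbour; the first condition gives $\deg_T(w) \le 2$, so $w$ has at most two neighbours in the pendant-free forest~$T$. I would then show that every branch target~$u$ of~$w$ is an \emph{inner} vertex of $G-X$ and hence a $T$-neighbour of~$w$: the loop requires~$u$ to be matched, and a matched \emph{leaf} neighbour of~$w$ is impossible, since a leaf can only be matched to its unique neighbour~$w$, which is already matched to~$v$. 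Thus the branch targets of~$w$ form a subset of its at most two $T$-neighbours.

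The crux is then to separate the very first branching vertex from all later ones. For the top-level call the excluded neighbour is $v_0=z$, which may well be a \emph{pendant} (a matched leaf whose parent is~$w_0$); then $z\notin T$, both $T$-neighbours of~$w_0$ remain available as branch targets, and $\deg_{P(x,z)}(w_0)=3$ (two branches plus the edge to~$z$). If instead $z\in T$, then~$z$ occupies one of the at most two $T$-slots of~$w_0$, so at most one branch survives and $w_0$ has degree at most two. For any later entry vertex~$v_i$ (reached from the previous~$w_{i-1}$ through a tree edge and additionally matched to~$w_i$), I would observe that~$v_i$ has the two distinct neighbours $w_{i-1}\neq w_i$, hence is not a leaf, hence $v_i\in T$ and is a $T$-neighbour of~$w_i$; this consumes one of~$w_i$'s at most two $T$-slots, leaving at most one branch and forcing $\deg_{P(x,z)}(w_i)\le 2$. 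Since each $v_i$ is pushed directly to its \emph{unique} matched neighbour~$w_i$, and the forest structure prevents any further explored vertex from being adjacent to~$z$, the vertices~$z=v_0$ and all~$v_i$ have degree at most two, with~$z$ a leaf.

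Combining these observations yields the statement: $w_0$ is the only vertex that can have degree exceeding two, its degree is at most three, and it equals three exactly when~$z$ is a pendant, in which case~$z$ is one of its neighbours; otherwise $P(x,z)$ is a path. The main obstacle I anticipate is precisely the bookkeeping of the third paragraph: rigorously ruling out leaf branch targets and proving that every later entry vertex always consumes a $T$-neighbour of its matched partner, since this argument has to combine the ``only leaves are free'' invariant for~$M_{G-X}$ with the precise early-return conditions of \textsc{Keep-Edge} and with the definition of the pendant-free forest~$T$.
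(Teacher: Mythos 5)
Your proposal is correct and follows essentially the same route as the paper's proof: you use the early-return conditions of \textsc{Keep-Edge} to get $\deg_T(w)\le 2$ for every branching vertex, rule out leaf branch targets via the ``only leaves are free'' invariant, and observe that every entry vertex except possibly a pendant~$z$ occupies one of the at most two $T$-slots of its matched partner, so only $z$'s matched neighbour can reach degree three. The paper's version is terser (it does not explicitly index the $v_i,w_i$ or argue connectivity separately), but the substance is identical.
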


\begin{proof}
	We first show that all vertices in~$P(x,z)$ except~$z$ and its neighbor~$z'$ have degree at most two in~$P(x,z)$.
	Observe that having more vertices than~$z$ and~$z'$ in~$P(x,z)$ requires \cref{alg:final-fvs-kernel-step} to reach \cref{line:extend-alternating-path}.
	
	Let~$w$ be the currently last vertex when \cref{alg:final-fvs-kernel-step} continues the graph exploration in \cref{line:extend-alternating-path}.
	Observe that the algorithm therefore dealt with the case that~$w$ has degree at least three in the pendant-free tree~$T$ in \cref{line:w-is-adjacent-to-free-or-split-vertex}.
	Thus, $w$ is either a pendant leaf in~$G-X$ or~$w \notin V_T^{\geq3}$ (that is, $w$ has degree at most two in~$T$).
	In the first case, there is no candidate to continue and the graph exploration stops.
	In the second case, $w$ has degree at most two in~$T$.
	
	We next show that any candidate~$u$ for continuing the graph exploration in \cref{line:extend-alternating-with-u} is not a leaf in~$G-X$.
	Assume toward a contradiction that~$u$ is a leaf in~$G-X$.
	Since the parent~$w$ of~$u$ is matched with some vertex~$v \neq u$ (this is how~$w$ is chosen, see \cref{line:chose-w}), it follows that~$u$ is not matched.
	This implies that the function `Keep-Edge' would have returned true in \cref{line:w-is-adjacent-to-free-or-split-vertex} and would not have reached \cref{line:extend-alternating-path}, a contradiction.
	Thus, the graph exploration follows only vertices in~$T$.
	Furthermore, the above argumentation implies that~$w$ is not adjacent to a leaf unless this leaf is its predecessor~$v$ in the graph exploration.
	
	We now have two cases: Either~$w$ is not adjacent to a leaf in~$G-X$ or~$v=z$ is a leaf and~$w=z'$ is its matched neighbor.
	In the first case, $w$ has at most one neighbor~$u \neq v$ since~$w \notin V_T^{\geq3}$.
	Hence, $w$ has degree two in~$P(x,z)$.
	In the second case, $w=z'$ has at most two neighbors~$u \neq v$ and~$u' \neq v$.
	Thus, $z'$ has degree at most three. \qed
\end{proof}

For~$x \in X$ let
$$ \mathcal{P}_x := \{P(x,v) \mid \{x,v\} \in E \wedge v \in V \setminus X\}$$
be the union of all induced subgraphs that \cref{alg:final-fvs-kernel-step} explores from~$x$.

\begin{lemma}\label[lemma]{lem:explore-disjoint-parts}
	There exists a partition of~$\mathcal{P}_x$ into~$\mathcal{P}_x = \mathcal{P}^A_x \cup \mathcal{P}^B_x$ such that all graphs within~$\mathcal{P}^A_x$ and within~$\mathcal{P}^B_x$ are pairwise disjoint. 
\end{lemma}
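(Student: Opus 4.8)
The plan is to read the statement as asserting that the induced subgraphs in $\mathcal{P}_x$ can be \emph{edge}-colored with two colors so that equally-colored paths do not share an edge; this is the useful reading, since (as the construction allows) many paths may terminate at a common ``stopping'' vertex such as a free leaf or a $V_T^{\ge 3}$-vertex, so they can share that one vertex but never an edge. Equivalently, I want to show that the intersection graph whose vertices are the $P(x,v)$ and whose edges join two paths sharing an edge of $G-X$ is bipartite, and then take $\mathcal{P}^A_x,\mathcal{P}^B_x$ to be its two colour classes. By \cref{lem:explore-in-paths} each $P(x,v)$ is (essentially) a simple alternating path in the forest $G-X$ starting at its anchor $v\in N(x)$ with a matched edge, so I am really 2-colouring a family of alternating paths in a forest.

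The engine of the whole argument is the following structural fact, which I would isolate first: in any $P(x,v)$ every \emph{internal} (non-endpoint) vertex $c$ has $\deg_T(c)\le 2$. Indeed, a vertex of $V_T^{\ge 3}$ forces \texttt{Keep-Edge} to return immediately at \cref{line:v-is-free-or-split-vertex} or \cref{line:w-is-adjacent-to-free-or-split-vertex}, so it can occur only as the last explored vertex, i.e.\ as an endpoint. Two consequences follow. First, if two paths share an edge then, walking from that edge in both directions, at each vertex where they would diverge via two different edges that vertex would acquire tree-degree $\ge 3$ and hence be a forced endpoint; therefore their overlap is a single \emph{maximal common sub-path} $Q$, and each endpoint of $Q$ must be an endpoint of at least one of the two paths. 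Second, by the same degree bound, two paths can never ``merge'' at an interior vertex coming from two different sides and then continue together. I would record here the second constraint I will lean on: the interior ``$v$-type'' vertices of a path all lie outside $N(x)$ (the recursion forbids re-entering a neighbour of $x$), whereas every anchor lies in $N(x)$ and is an endpoint.

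With these tools I would define the partition by the direction of the \emph{first matched edge} relative to a fixed rooting of each tree of $G-X$: put $P(x,v)$ in $\mathcal{P}^B_x$ if its first matched edge descends (the anchor is the topmost vertex), and in $\mathcal{P}^A_x$ otherwise. For $\mathcal{P}^B_x$ the argument is clean: a descending first edge forces the whole path to stay inside the subtree rooted at the anchor $v$ (once the walk goes downward the alternating/matched structure keeps it downward), so if two such paths met then one anchor would be an ancestor of the other, the lower one $v$ would have to appear inside the other path, and the $N(x)$-versus-interior-$v$-type clash together with ``$v$'s matched partner cannot be both its parent and its child'' yields a contradiction; hence the paths in $\mathcal{P}^B_x$ are even pairwise vertex-disjoint. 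The remaining task is to show that the paths in $\mathcal{P}^A_x$ are pairwise edge-disjoint, and \textbf{this is the main obstacle}: two up-starting paths are unimodal (up to an apex, then down), so a priori they could share a sub-path near a common apex. I would resolve it by the maximal-common-sub-path analysis above: at each end of a putative shared sub-path $Q$ the divergence vertex is a forced endpoint of one of the two paths, which (since endpoints are anchors, free leaves, or $V_T^{\ge 3}$-vertices) forces one path to be contained in the other up to a shared endpoint; then the anchor constraint ($v_{P'}\in N(x)$ cannot be an interior $v$-type vertex of $P$) and the alternating parity of $Q$ drive a contradiction, so no shared edge survives. I expect the bookkeeping of these endpoint cases, rather than any single clever idea, to be where the real work lies; if the straightforward first-matched-edge partition proves too blunt in some case, the fallback is to show directly that no vertex is interior to more than two paths and to invoke that edge-overlapping subtrees of a forest have the Helly-type property needed for the intersection graph to be triangle-free and hence bipartite.
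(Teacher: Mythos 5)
Your proposal diverges from the paper in two ways that matter, and one of them is fatal. First, you deliberately weaken the statement to \emph{edge}-disjointness for one of the two classes. The paper means, proves, and later \emph{uses} pairwise \emph{vertex}-disjointness: \cref{lems:table-entry-corresponds-to-paths} extracts from the $6k^2$ table-counted alternating paths at least $3k^2$ pairwise vertex-disjoint ones precisely by taking the larger of the two classes, and these are fed into the exchange argument of \cref{lem:exchange-augmenting-paths}, which needs vertex-disjoint paths; the degree bound on the vertices of~$X$ in the proof of \cref{lem:final-fvs-kernel-step} likewise charges the ``return true'' events to distinct free leaves and distinct $V_T^{\ge 3}$-vertices. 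Edge-disjointness supports none of this, so even a completed version of your argument would not prove the lemma in the form the rest of the section relies on. Second, the one case you yourself identify as ``the main obstacle'' --- two up-starting paths sharing a sub-path near a common apex --- is exactly the case you do not finish: the maximal-common-sub-path and forced-endpoint bookkeeping is sketched but not carried out.

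The idea you are missing is the paper's choice of partition, which turns the whole proof into a short parity argument and avoids both your rooting and your apex analysis. Since $G-X$ is a forest it is bipartite with color classes $A,B$; put $P(x,v)$ into $\mathcal{P}^A_x$ or $\mathcal{P}^B_x$ according to the color class of the anchor~$v$. In any explored subgraph the vertices alternate between the two classes, so for two subgraphs whose anchors lie in the same class, every shared vertex in the anchor's class is a ``$v$-type'' vertex of both explorations (entered as the anchor or through the recursion in \cref{line:extend-alternating-with-u}, which only admits vertices~$u$ with $\{u,x\}\notin E$), and every shared vertex in the opposite class is a ``$w$-type'' vertex of both, i.e.\ the matched partner of the preceding $v$-type vertex. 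A shared $w$-type vertex forces its unique matched partner to be an earlier shared $v$-type vertex, so the first shared vertex is $v$-type in both explorations, and \cref{lem:explore-in-paths} then forces it to be one of the two anchors $p_1,q_1$ --- impossible, since anchors are neighbors of~$x$ and the recursion never enters a neighbor of~$x$. This delivers full vertex-disjointness. Your worry that several explorations could share a common stopping vertex is partly defused by the fact that a $w$-type vertex triggering the return in \cref{line:w-is-adjacent-to-free-or-split-vertex} is never added to $P(x,z)$ (only $w$'s reaching \cref{line:table-lookup} are), and a $v$-type stopping vertex has already passed the $\{u,x\}\notin E$ filter feeding the parity argument above.
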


\begin{proof}
	Since~$G-X$ is a tree (or forest), $G-X$ is also bipartite. 
	Let~$A$ and~$B$ be its two color classes (so~$A \cup B = V \setminus X$).
	We define the two parts~$\mathcal{P}^A_x$ and~$\mathcal{P}^B_x$ as follows:
	A subgraph~$P \in \mathcal{P}_x$ is in~$\mathcal{P}^A_x$ if the neighbor~$v$ of~$x$ in~$P$ is contained in~$A$, otherwise~$P$ is in~$\mathcal{P}^B_x$.
	
	We show that all subgraphs in~$\mathcal{P}^A_x$ and~$\mathcal{P}^B_x$ are pairwise vertex-disjoint.
	To this end, assume toward a contradiction that two graphs~$P, Q \in \mathcal{P}^A_x$ share some vertex.
	(The case~$P, Q \in \mathcal{P}^B_x$ is completely analogous.)
	Let~$p_1$ and~$q_1$ be the first vertex in~$P$ and~$Q$ respectively, that is,~$p_1$ and~$q_1$ are adjacent to~$x$ in~$G$.
	Observe that~$p_1 \neq q_1$. %
	Let~$u \neq x$ be the first vertex that is in~$P$ and in~$Q$.
	By \cref{lem:explore-in-paths}, $P$ and~$Q$ are paths or trees with at most one vertex of degree more than two and this vertex has degree three and is the neighbor of~$p_1$ or~$q_1$, respectively.
	This implies together with~$q_1,p_1 \in A$ that either~$u = p_1$ or~$u = q_1$.
	Assume without loss of generality that~$u = p_1$.
	Since~$p_1 \in A$ and~$q_1 \in A$ and~$u$ is a vertex in~$Q$, it follows that \cref{alg:final-fvs-kernel-step} followed~$u$ in the graph exploration from~$q_1$ in \cref{line:extend-alternating-with-u}.
	However, this is a contradiction since the algorithm checks in \cref{line:extend-alternating-path} whether the new vertex~$u$ in the path is not adjacent to~$x$.
	Thus, all subgraphs in~$\mathcal{P}^A_x$ and~$\mathcal{P}^B_x$ are pairwise vertex-disjoint. \qed
\end{proof}

We next show that if~$\tab[x,f(Y)] = 6k^2$ for some~$x\in X$ and~$Y \subseteq X$ (recall that~$f$ maps~$Y$ to a number, see \cref{line:bijection-2^X-N-II}), then there exist at least~$3k^2$ internally vertex-disjoint augmenting paths from~$x$ to~$Y$.

\begin{lemma} \label[lemma]{lems:table-entry-corresponds-to-paths}
	If in \cref{line:table-lookup} of \cref{alg:final-fvs-kernel-step} it holds for~$x \in X$ and~$Y \subseteq X$ that~$\tab[x,f(Y)] = 6k^2$, then there exist in~$G$ wrt.~$M_{G-X}$ at least~$3k^2$ alternating paths from~$x$ to vertices~$v_1, \ldots, v_{3k^2}$ such that all these paths are pairwise vertex-disjoint (except~$x$) and~$N(v_i) \cap X = N(w) \cap X$ for all~$i \in [3k^2]$.
\end{lemma}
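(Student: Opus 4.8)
The plan is to unwind what $\tab[x,f(Y)]=6k^2$ means in terms of completed calls of the function \textsf{Keep-Edge}, and then to extract a vertex-disjoint subfamily using the structural results \cref{lem:explore-in-paths,lem:explore-disjoint-parts}. First I would observe that whenever the table is incremented (\cref{line:table-entry-increase}) the call immediately returns \emph{true}, and this value is propagated all the way up the recursion (every recursive test has the form ``return true if the child returns true''), so any branch that returns \emph{false} performed no increment. Hence a single top-level call \textsf{Keep-Edge}$(x,v)$ issued from \cref{line:test-keep-edge} increments the whole table \emph{at most once}, namely on the last vertex $w^{(i)}$ it inspects at \cref{line:table-lookup}. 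Consequently, $\tab[x,f(Y)]=6k^2$ can only be reached through $6k^2$ distinct starting neighbours $v^{(1)},\dots,v^{(6k^2)}\in N(x)\setminus X$, and for each $i$ the exploration ended at a vertex $w^{(i)}$ with $f_X(w^{(i)})=f(Y)$, i.e.\ $N(w^{(i)})\cap X=Y$. Tracing the successful branch of the recursion backwards from $w^{(i)}$ to $x$ yields for each $i$ an alternating path $Q_i$ from $x$ to $w^{(i)}$ with respect to $M_{G-X}$: the first edge $\{x,v^{(i)}\}$ is non-matching (all of $X$ is free wrt.\ $M_{G-X}$), and the recursion alternates matching and non-matching edges by construction, since it always steps from a vertex to its matched partner $w$ (\cref{line:chose-w}) and then to a further \emph{matched} neighbour $u\neq v$ of $w$.

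The main obstacle is that these $6k^2$ alternating paths need not be pairwise vertex-disjoint: different explorations from $x$ may run through common tree vertices. This is exactly where I would invoke the already-established structure. By \cref{lem:explore-in-paths}, each exploration subgraph $P(x,v^{(i)})$ (which contains all vertices of $Q_i$, so $Q_i\subseteq P(x,v^{(i)})$) is a path, or a path with a single extra pendant attached at the neighbour of $v^{(i)}$. By \cref{lem:explore-disjoint-parts} the family $\mathcal{P}_x$ splits as $\mathcal{P}^A_x\cup\mathcal{P}^B_x$, with all members within $\mathcal{P}^A_x$, and all members within $\mathcal{P}^B_x$, pairwise vertex-disjoint (the split being by the colour class of the first vertex in the bipartition of the forest $G-X$). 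The $6k^2$ subgraphs $P(x,v^{(i)})$ all lie in $\mathcal{P}_x$, hence each in $\mathcal{P}^A_x$ or $\mathcal{P}^B_x$; by the pigeonhole principle at least $3k^2$ of them lie in the same part and are therefore pairwise vertex-disjoint. This is precisely the step that forces the threshold $6k^2 = 2\cdot 3k^2$.

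Finally I would collect the $3k^2$ alternating paths $Q_i$ whose exploration subgraphs fall into the chosen part. Since these subgraphs lie in $G-X$ they do not contain $x$, so the $Q_i$ share only the vertex $x$; being subpaths of pairwise vertex-disjoint subgraphs, they are pairwise vertex-disjoint except for $x$. In particular their endpoints $v_i:=w^{(i)}$ are distinct, each satisfies $N(v_i)\cap X=Y=N(w)\cap X$, and each $Q_i$ is an alternating path from $x$ to $v_i$ as claimed (delivering $3k^2$ such paths, in line with the indexing $v_1,\dots,v_{3k^2}$). I do not expect to need \cref{lem:exchange-augmenting-paths} in this lemma; its role is reserved for the subsequent correctness argument for the edge deletions. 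The only genuinely delicate point is the bookkeeping of the first paragraph—that each top-level call increments the table at most once and that the traced-back branch is a single alternating path—after which the result follows directly from \cref{lem:explore-in-paths,lem:explore-disjoint-parts} together with counting.
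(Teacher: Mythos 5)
Your proposal is correct and follows essentially the same route as the paper's proof: each table increment witnesses one alternating path from~$x$ to a vertex with neighborhood~$Y$ in~$X$, and \cref{lem:explore-disjoint-parts} then guarantees that at least half of the $6k^2$ paths are pairwise vertex-disjoint. You supply somewhat more bookkeeping than the paper (the ``at most one increment per top-level call'' observation and the explicit appeal to \cref{lem:explore-in-paths}), but the argument is the same, and your reading of the bound as $3k^2$ paths matches how the lemma is actually used later.
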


\begin{proof}
	Note that each time $\tab[x,f(Y)]$ is increased by one (see \cref{line:table-entry-increase}), the algorithm found a vertex~$w$ such that there is an alternating path~$P$ from~$x$ to~$w$ and~$N(w) \cap X = Y$.
	Furthermore, since the function Keep-Edge returns true in this case, the edge from~$x$ to its neighbor on~$P$ is not deleted in \cref{line:delete-x-v}.
	Thus, there exist at least~$6k^2$ alternating paths from~$x$ to vertices whose neighborhood in~$X$ is exactly~$Y$.
	By \cref{lem:explore-disjoint-parts}, it follows that at least half of these~$6k^2$ paths are vertex-disjoint. \qed
\end{proof}

The next lemma shows that \cref{alg:final-fvs-kernel-step} is correct and runs in~$O(kn)$ time. 

\begin{proposition}
\label[proposition]{prop:final-fvs-kernel-step}
	Let~$(G = (V,E),s)$ be a matching instance, let $X \subseteq V$ be a feedback vertex set of size~$k$ with~$k < \log n$ and at most~$2k^2 (2^k+1)$ bottommost leaves in~$G-X$, and let~$M_{G-X}$ be a maximum matching for~$G-X$ with at most~$k^2$ free vertices in~$G-X$ that are all leaves.
	Then, \cref{alg:final-fvs-kernel-step} computes in~$O(kn)$ time an equivalent instance~$(G',s')$ of size~$O(k^3 2^k)$.
\end{proposition}

\begin{proof}
	We split the proof into three claims, one for the correctness of the algorithm, one for the returned kernel size, and one for the running time.

	\begin{myclaim}
		The input instance~$(G,s)$ is a yes-instance if and only if the instance~$(G',s')$ produced by \cref{alg:final-fvs-kernel-step} is a yes-instance. 
	\end{myclaim}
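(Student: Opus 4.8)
The plan is to decompose the two kinds of modification that \cref{alg:final-fvs-kernel-step} performs and to dispatch the easy one with an already-proved correctness result. The algorithm first deletes certain edges $\{x,v\}$ with $x\in X$ in \cref{line:delete-x-v}, and only afterwards applies \cref{rule:deg-zero-one-vertices,rule:deg-two-vertices-restricted} exhaustively in \cref{line:apply-reduction-rules}. Writing $\nu(H)$ for the size of a maximum matching of a graph~$H$ and letting $G_1$ be the graph obtained from $G$ by performing \emph{only} the deletions of \cref{line:delete-x-v}, it suffices to show $\nu(G_1)=\nu(G)$: the subsequent reduction rules turn $(G_1,s)$ into an equivalent instance by \cref{lem:rule-deg-0-1-2-correct}, so composing the equivalences yields that $(G,s)$ and $(G',s')$ are equivalent. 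Since $G_1$ is a subgraph of $G$ on the same vertex set, $\nu(G_1)\le\nu(G)$ is immediate, and this inequality already gives the direction ``if $(G',s')$ is a yes-instance, then so is $(G,s)$''. Hence the whole content of the claim reduces to proving $\nu(G)\le\nu(G_1)$, that is, that no deleted edge is forced into every maximum matching.

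To establish $\nu(G)\le\nu(G_1)$ I would argue by contradiction. Assume every maximum matching of $G$ contains at least one edge deleted in \cref{line:delete-x-v}, and among all maximum matchings pick $M_G:=M^{\max}_G(M_{G-X})$, the one of largest overlap with $M_{G-X}$, subject to containing the fewest such deleted edges. By \cref{obs:MatchingSizeGandG-X} the symmetric difference $G(M_{G-X},M_G)$ consists of at most $k$ vertex-disjoint augmenting paths for $M_{G-X}$, and these are chordless by \cref{obs:shortest-augmenting-paths}. As every $x\in X$ is free with respect to $M_{G-X}$, a deleted edge $\{x,v\}\in M_G$ must be the first edge of one of these augmenting paths $P$, with $x$ as an endpoint. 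The aim is then to replace $P$ by a different augmenting path starting at $x$ that avoids $\{x,v\}$ and every other deleted edge, producing a maximum matching with strictly fewer deleted edges and thereby contradicting the minimality of $M_G$.

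The replacement is where the threshold $6k^2$ and \cref{lem:exchange-augmenting-paths} enter. Since $\{x,v\}$ was deleted, Keep-Edge$(x,v)$ returned \emph{false}; unwinding the recursion (which only recurses to matched neighbours not adjacent to $x$, so the explored vertices lie in $V\setminus X$ and essentially form a path by \cref{lem:explore-in-paths}) shows that every alternating extension of the partial path $x,v,\dots$ ends at a vertex whose neighbourhood class $Y\subseteq X$ had a saturated entry $\tab[x,f(Y)]=6k^2$ at the time of the call, while none of the ``keep'' terminations (a free vertex, a free leaf, or a high-degree vertex of $T$) applied. By \cref{lems:table-entry-corresponds-to-paths} each saturated entry certifies at least $3k^2$ augmenting paths from $x$ into the class $Y$ that are pairwise internally vertex-disjoint, using the two-colour partition of \cref{lem:explore-disjoint-parts}. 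I would then bound how many of these stored paths the augmentation structure of $M_G$ can obstruct: by \cref{lem:exchange-augmenting-paths}, whenever a single augmenting path of $G(M_{G-X},M_G)$ meets three of the stored internally vertex-disjoint augmenting paths from $x$, one already obtains the desired pair of vertex-disjoint augmenting paths, and hence the reroute. Consequently each of the at most $k$ augmenting paths of $G(M_{G-X},M_G)$ can block at most two stored paths, so at most $2k<3k^2$ are blocked and some stored augmenting path $Q$ from $x$ into $Y$ survives; augmenting $M_{G-X}$ along $Q$ together with all augmenting paths of $G(M_{G-X},M_G)$ except $P$ yields a maximum matching avoiding $\{x,v\}$ and all deleted edges on $P$.

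The main obstacle I anticipate is the bookkeeping guaranteeing that the rerouted matching is genuinely a matching in $G_1$: one must verify that the surviving stored path $Q$ uses only edges that the algorithm kept and that it is vertex-disjoint from the untouched augmenting paths, and one must apply \cref{lem:exchange-augmenting-paths} carefully in the case where $Q$ intersects $P$ itself rather than one of the other paths. The constant $6k^2$ is tuned precisely for this purpose, since it yields $3k^2$ pairwise internally vertex-disjoint stored paths while at most $2k$ of them can be blocked, so that the exchange always goes through.
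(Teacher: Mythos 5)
Your overall route is the paper's: split the algorithm's modifications into the edge deletions of \cref{line:delete-x-v} and the subsequent reduction rules, dispatch the latter via \cref{lem:rule-deg-0-1-2-correct}, and for the deletions argue by contradiction with $M_G=M^{\max}_G(M_{G-X})$, trace the augmenting path $P$ of $G(M_{G-X},M_G)$ starting at $x$ through the Keep-Edge exploration, conclude that the relevant table entry was saturated at $6k^2$, invoke \cref{lems:table-entry-corresponds-to-paths} to get $3k^2$ stored paths, and reroute using \cref{lem:exchange-augmenting-paths}. Two remarks. First, a minor one: for Keep-Edge to actually trace $P$ to its last inner vertex you need that no odd vertex of $P$ other than $v$ is adjacent to $x$ (otherwise the recursion in \cref{line:extend-alternating-path} refuses to continue along $P$); the paper derives this from the minimality of $P$, and your sketch takes it for granted.

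The genuine gap is in your final counting step. You claim each of the at most $k$ paths of $G(M_{G-X},M_G)$ ``blocks'' at most two stored paths, on the grounds that if one path $P'$ meets three stored paths then \cref{lem:exchange-augmenting-paths} immediately yields the reroute. It does not: the lemma produces two vertex-disjoint augmenting paths built from pieces of $P'$ and of the stored paths, but to assemble a new maximum matching you must also keep the remaining (up to $k-1$) augmenting paths of $G(M_{G-X},M_G)$ untouched, and nothing so far guarantees the two exchanged paths are vertex-disjoint from those. So ``at most $2k$ blocked'' does not follow, and the argument stalls exactly at the obstacle you name in your last paragraph without resolving it. The paper closes this with a specific combinatorial observation you are missing: for any two stored paths $Q_1,Q_2$ intersected by the same $P'$, no further path $P''$ of $G(M_{G-X},M_G)$ can intersect both --- otherwise the vertex-disjointness of $P',P''$ and the internal vertex-disjointness of $Q_1,Q_2$ would force a cycle inside the forest $G-X$. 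Hence each $P'$ either has at most $k+1$ stored paths in its intersection set, giving a completely untouched stored path by $3k^2 > 3k+k^2$ (direct reroute), or it meets three stored paths that no other path of $G(M_{G-X},M_G)$ touches, and only then is the exchange of \cref{lem:exchange-augmenting-paths} compatible with the rest of the augmentation structure. This forest-based pigeonhole is the missing idea; without it the choice of the constant $6k^2$ is not actually ``tuned'' to make the exchange go through.
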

	
	\begin{proofofclaim}
		Observe that the algorithm changes the input graph only in two lines: \cref{line:delete-x-v,line:apply-reduction-rules}.
		By \cref{lem:rule-deg-0-1-2-correct}, applying \cref{rule:deg-zero-one-vertices,rule:deg-two-vertices} yields an equivalent instance.
		Thus, it remains to show that deleting the edges in \cref{line:delete-x-v} is correct, that is, it does not change the size of a maximum matching.
		To this end, observe that deleting edges does not increase the size of a maximum matching.
		Thus, we need to show that the size of the maximum matching does not decrease.
		Assume toward a contradiction that it does.
		
		Let~$\{x,v\}$ be the edge whose deletion decreased the maximum matching size.
		Redefine~$G$ to be the graph before the deletion of~$\{x,v\}$ and~$G'$ to be the graph after the deletion of~$\{x,v\}$.
		Recall that \cref{alg:final-fvs-kernel-step} gets as additional input a maximum matching~$M_{G-X}$ for~$G-X$.
		Let~$M_G := M^{\max}_G(M_{G-X})$ be a maximum matching for~$G$ with the largest possible overlap with~$M_{G-X}$ and let~$G^M := G(M_{G-X},M_G) = (V, M_{G-X} \bigtriangleup M_G)$ (see \cref{sec:prelim}).
		Since~$\{x,v\} \in M_G \setminus M_{G-X}$ and~$x$ is free wrt.~$M_{G-X}$, it follows that there is a path~$P$ in~$G^M$ with one endpoint being~$x$.

		Recall (see \cref{sec:prelim}) that since~$P$ is a path in~$G^M$ it follows that~$P$ is an augmenting path for~$M_{G-X}$.
		Since all vertices in~$X$ are free wrt.~$M_{G-X}$, it follows that all vertices in~$P$ except the endpoints are in~$V \setminus X$.
		Let~$z$ be the second endpoint of this path~$P$.
		We call a vertex on~$P$ an even (odd) vertex if it has an even (odd) distance to~$x$ on~$P$. (So~$x$ is an even vertex and~$v$ and~$z$ are odd vertices).
		Observe that~$v$ is the only odd vertex in~$P$ adjacent to~$x$: 
		Otherwise there would be another augmenting path from~$x$ to~$z$ which only uses vertices from~$P$.
		This would imply the existence of another maximum matching that does not use~$\{x,v\}$, a contradiction.
		
		Let~$u$ be the neighbor of~$z$ in~$P$.
		Since no odd vertex on~$P$ except~$v$ is adjacent to~$x$, it follows that the graph exploration in the function Keep-Edge starting from~$x$ and~$v$ in \cref{line:test-keep-edge} either reached~$u$ or returned true before.
		If~$z \in V \setminus X$, then in both cases, the function Keep-Edge would have returned true in \cref{line:test-keep-edge} and \cref{alg:final-fvs-kernel-step} would not have deleted~$\{x,v\}$, a contradiction. 
		Thus, assume that~$z \in X$.
		Therefore, the function Keep-Edge considered the vertex~$u$ in \cref{line:table-lookup} but did not keep the edge~$\{x,v\}$.
		Thus, when considering~$u$, it holds that~$\tab[x,f_X(u)] = 6k^2$, where~$f_X(u)$ encodes~$Y := N(u) \cap X$ and~$z \in Y$.
		
		By \cref{lems:table-entry-corresponds-to-paths}, it follows that there are~$3k^2$ pairwise vertex-disjoint (except~$x$) alternating paths from~$x$ to vertices~$u_1, \ldots, u_{3k^2}$ with~$N(u_i) \cap X = Y$.
		Thus, there is a set~$\mathcal{Q}$ of~$3k^2$ internally vertex-disjoint paths from~$x$ to~$y$ in~$G$.
		If one of the paths~$Q \in \mathcal{Q}$ does not intersect any path in~$G^M$, then reverting the augmentation along~$P$ and augmenting along~$Q$ would result in another maximum matching not containing~$\{x,v\}$, a contradiction.
		Thus, assume that each path in~$\mathcal{Q}$ intersects at least one path in~$G^M$.
		
		For each two paths~$Q_1,Q_2 \in \mathcal{Q}$ that intersect the same path~$P'$ in~$G^M$ it holds that each further path~$P''$ in~$G^M$ can intersect at most one of~$Q_1$ and~$Q_2$:
		Assume toward a contradiction that~$P''$ does intersect both~$Q_1$ and~$Q_2$. 
		Since no path in~$G^M$ except~$P$ contains~$x$ and~$z$ it follows that all intersections between the paths are within~$G - X$.
		Since~$P'$ and~$P''$ are vertex-disjoint and~$Q_1$ and~$Q_2$ are internally vertex-disjoint, it follows that there is a cycle in~$G-X$, a contradiction to the fact that~$X$ is a feedback vertex set.
		
		Since~$3k^2 > 3k + k^2$, it follows from the pigeon hole principle that there is a path~$P' \in G^M$ that intersects at least three paths~$Q_1, Q_2, Q_3 \in \mathcal{Q}$ such that no further path in~$G^M$ intersects them.
		We can now apply \cref{lem:exchange-augmenting-paths} and obtain two vertex-disjoint augmenting paths~$Q$ and~$Q'$.
		Thus, reverting the augmentation along~$P$ and~$P''$ and augmenting along~$Q$ and~$Q''$ yields another maximum matching for~$G$ which does not contain~$\{x,v\}$, a contradiction.
	\end{proofofclaim}

	\begin{myclaim}
		The graph~$G'$ returned by \cref{alg:final-fvs-kernel-step} has~$O(k^3 2^k)$ vertices and edges. 
	\end{myclaim}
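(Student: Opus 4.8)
The plan is to bound separately the number of edges incident to the feedback vertex set $X$ and the number of vertices of the forest $G'-X$, and then add everything up. The first bound is the crux and rests on the exploration-path machinery already established, whereas the second reuses the counting behind \cref{thm:fes-lin-kernel}.

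First I would show that $\deg_{G'}(x) = O(k^2 2^k)$ for every $x \in X$. Each surviving edge $\{x,v\}$ corresponds to its exploration subgraph $P(x,v)$, which by \cref{lem:explore-in-paths} is essentially a path, and by \cref{lem:explore-disjoint-parts} the family $\mathcal{P}_x$ of all these subgraphs splits into two subfamilies each consisting of pairwise vertex-disjoint subgraphs. Hence within one subfamily distinct kept edges have distinct \emph{terminal} vertices (the vertex at which \texttt{Keep-Edge} returns true), so that, up to a factor two for the two subfamilies, it suffices to count terminal vertices grouped by the reason that triggered the acceptance. A kept edge is charged to one of: a free vertex (\cref{line:v-is-free-or-split-vertex}), of which there are at most $k^2$; a vertex of $V_T^{\ge 3}$ reached as $v$ or as its matched partner $w$ (\crefrange{line:v-is-free-or-split-vertex}{line:w-is-adjacent-to-free-or-split-vertex}), of which there are at most $|V_T^{\ge 3}|$; the parent of a free leaf (\cref{line:w-is-adjacent-to-free-or-split-vertex}), again at most $k^2$ since each free leaf has a unique parent; or a table increment (\cref{line:table-entry-increase}), whose total number is capped directly at $6k^2$ per set $Y\subseteq X$ and hence at $6k^2 2^k$. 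Since the leaves of $T$ are exactly the bottommost leaves by \cref{obs:pendant-free-forest-leaves}, so that $|V_T^{\ge 3}|$ is at most the number of leaves of $T$, i.e.\ at most $k^2(2^k+1)$, each contribution is $O(k^2 2^k)$. Summing over the $k$ vertices of $X$ gives $O(k^3 2^k)$ edges between $X$ and $V'\setminus X$.

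Next I would bound $|V'\setminus X|$. By the observation preceding \cref{obs:pendant-free-forest-leaves}, $|V'\setminus X| \le 2|V_T| + k^2$, so it suffices to bound the pendant-free forest $T$. Its $V_T^{\neq 2}$ part is $O(k^2 2^k)$, since its leaves are the at most $k^2(2^k+1)$ bottommost leaves and, by the forest property, $|V_T^{\ge 3}|$ is at most the number of leaves. To bound $V_T^2$ I would argue as in \cref{thm:fes-lin-kernel}, exploiting that $G'$ is reduced with respect to \cref{rule:deg-zero-one-vertices,rule:deg-two-vertices-restricted}: every pendant surviving in $G'-X$ must carry an edge to $X$ (otherwise \cref{rule:deg-zero-one-vertices} would delete it), so there are at most $O(k^3 2^k)$ pendants; and no three consecutive degree-two-in-$G'$ vertices can lie on a maximal degree-two path of $T$, since \cref{rule:deg-two-vertices-restricted} would apply to the middle one. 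Charging each degree-two vertex of $T$ either to an incident $X$-edge, to a pendant it parents, or to one of the $O(k^2 2^k)$ endpoints delimiting the $O(k^2 2^k)$ maximal degree-two paths (within each such path only $O(1)$ unmerged pure degree-two vertices survive) yields $|V_T^2| = O(k^3 2^k)$, and therefore $|V'\setminus X| = O(k^3 2^k)$.

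Combining the two bounds finishes the claim: $|V'| = k + O(k^3 2^k) = O(k^3 2^k)$ vertices, while the edges split into the $O(k^3 2^k)$ edges into $X$, the at most $\binom{k}{2}$ edges inside $X$, and the fewer than $|V'\setminus X|$ edges of the forest $G'-X$, for a total of $O(k^3 2^k)$ edges. The main obstacle is the degree bound on the vertices of $X$: because of the recursion in \texttt{Keep-Edge} a kept edge is not justified locally, so the disjointness supplied by \cref{lem:explore-disjoint-parts} is exactly what lets one count terminal vertices without double-counting, and one must check that each of the three ``keep'' reasons together with the $6k^2$ table cap contributes only $O(k^2 2^k)$.
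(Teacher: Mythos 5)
Your proposal is correct and follows essentially the same route as the paper: it bounds $\deg_{G'}(x)$ for each $x\in X$ by charging kept edges, via the disjointness from \cref{lem:explore-disjoint-parts}, to the four acceptance reasons (free vertices, $V_T^{\ge 3}$, parents of free leaves, table increments capped at $6k^2$ per subset of $X$), and then counts the vertices of the reduced forest exactly as in \cref{thm:fes-lin-kernel} using \cref{rule:deg-zero-one-vertices,rule:deg-two-vertices-restricted}. The only cosmetic difference is that the paper bounds the $V_T^{\ge 3}$-triggered acceptances via a degree-sum over $T$ rather than by distinct terminal vertices, but both yield the same $O(k^2 2^k)$ per vertex of $X$.
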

	
	\begin{proofofclaim}
		We first show that each vertex~$x \in X$ has degree~$O(k^2 2^k)$ in~$G'$.
		To this end, we need to count the number of neighbors~$v \in N(x) \setminus X$ where the function Keep-Edge returns true in \cref{line:test-keep-edge}.
		By \cref{lem:explore-in-paths}, the function Keep-Edge explores the graph along one or two paths (essentially growing from one starting point into two directions).
		Recall that~$\mathcal{P}_x$ denotes the subgraphs induced by the graph exploration of Keep-Edge for the neighbors of~$x$.
		By \cref{lem:explore-disjoint-parts} there is a partition of~$\mathcal{P}_x$ into~$\mathcal{P}^A_x$ and~$\mathcal{P}^B_x$ such that within each part the subgraphs are pairwise vertex-disjoint. 
		We consider the two parts independently.
		We start with bounding the number of graphs in~$\mathcal{P}^A_x$ where the function `Keep-Edge' returned true (the analysis is completely analogous for~$\mathcal{P}^B_x$).
		
		Since all explored subgraphs are disjoint and all free vertices in~$G-X$ wrt.~$M_{G-X}$ are leaves, it follows that \cref{alg:final-fvs-kernel-step} returned at most~$k^2$ times true in \cref{line:w-is-adjacent-to-free-or-split-vertex} due to~$w$ being adjacent to a free leaf in~$G-X$.
		Also, the algorithm returns at most~$k^2$ times true in \cref{line:v-is-free-or-split-vertex} due to $v$~being free.
		Furthermore, the algorithm returns at most~$6k^2 \cdot 2^k$ times true in \cref{line:table-entry-increase-return}.
		Finally, we show that the algorithm returns at most~$8k^2 \cdot(2^k - 1)$ times true in \cref{line:v-is-free-or-split-vertex,line:w-is-adjacent-to-free-or-split-vertex}, respectively.
		It follows from the discussion below \cref{obs:pendant-free-forest-leaves} that~$T$, the pendent-free tree of~$G-X$, has at most~$2k^2 (2^k + 1)$ leaves (denoted by~$V_T^{1}$) and~$2k^2 (2^k + 1)$ vertices of degree at least three (denoted by~$V_T^{\geq 3}$).
		Let~$V_T$ be the vertices of~$T$.
		Since~$T$ is a tree (or forest), it has more vertices than edges and hence 
		$$ \sum_{v \in V_T} \deg_T(v) < 2 |V_T|$$ 
		which implies
		$$ \sum_{v \in V_T^{\geq 3}} \deg_T(v) < 2 \cdot |V_T^{\geq 3}| + |V_T^{1}|.$$
		Thus, \cref{alg:final-fvs-kernel-step} returns at most~$2 \cdot |V_T^{\geq 3}| + |V_T^{1}| < 6 k^2 (2^k + 1)$ times true in \cref{line:w-is-adjacent-to-free-or-split-vertex} due to~$w$ being a vertex in~$V_T^{\geq 3}$.
		Also, \cref{alg:final-fvs-kernel-step} returns at most~$|V_T^{\geq 3}| \le 2 k^2 (2^k + 1)$ times true in \cref{line:v-is-free-or-split-vertex} due to~$v$ being a vertex in~$V_T^{\geq 3}$.
		
		Summarizing, considering the graph explorations in~$\mathcal{P}^A_x$, \cref{alg:final-fvs-kernel-step} returns at most
		$$ k^2 + k^2 + 6k^2 \cdot 2^k + 8 k^2 (2^k + 1) \in O(k^2 2^k) $$
		times true in the function Keep-Edge.
		Analogously, considering the graph explorations in~$\mathcal{P}^A_x$, \cref{alg:final-fvs-kernel-step} also returned at most~$O(k^2 2^k)$ times true.
		Hence, each vertex~$x \in X$ has degree at most~$O(k^2 2^k)$ in~$G'$.
		
		We now show that the exhaustive application of first \cref{rule:deg-zero-one-vertices} and then \cref{rule:deg-two-vertices} indeed results in a kernel of the claimed size.
		To this end, denote with~$V^1_{G'-X}$, $V^2_{G'-X}$, and~$V^{\ge 3}_{G'-X}$ the vertices that have degree one, two, and at least three in~$G'-X$.
		We have $|V^1_{G'-X}| \in O(k^3 2^k)$ since each vertex in~$X$ has degree at most~$O(k^2 2^k)$ and~$G'$ is reduced wrt.\ \cref{rule:deg-zero-one-vertices}. 
		Next, since~$G'-X$ is a forest (or tree), we have~$|V^{\ge 3}_{G'-X}| < |V^1_{G'-X}|$ and thus~$|V^{\ge 3}_{G'-X}| \in O(k^3 2^k)$.
		Finally, each degree-two vertex in~$G'$ needs at least one neighbor of degree at least three since~$G'$ is reduced with respect to \cref{rule:deg-two-vertices}.
		Thus, each vertex in~$V^2_{G'-X}$ is either incident to a vertex in~$X$ or adjacent to one of the at most~$O(k^2 2^k)$ vertices in~$G'-X$ that have degree at least three.
		Thus, $|V^2_{G'-X}| \in O(k^3 2^k)$.
		Summarizing, $G'$ contains at most~$O(k^3 2^k)$ vertices and edges.
	\end{proofofclaim}

	\begin{myclaim}
		\cref{alg:final-fvs-kernel-step} runs in~$O(kn)$ time. 
	\end{myclaim}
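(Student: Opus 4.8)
The plan is to account separately for the three phases of \cref{alg:final-fvs-kernel-step}: the preprocessing (\crefrange{line:bijection-2^X-N-II}{line:get-deg-three-vertices-in-T}), the main double loop (\crefrange{line:bound-degree-of-x-loop}{line:delete-x-v}) together with all calls it triggers to the function Keep-Edge, and the final exhaustive application of the reduction rules (\cref{line:apply-reduction-rules}). First I would bound the preprocessing. Computing the numbers $f_X(v)$ for all $v \in V \setminus X$ takes $O(k)$ time each and hence $O(kn)$ time in total, exactly as in the proof of \cref{lem:bound-bottommost-leaves}; since $k < \log n$ we have $2^k < n$, so the table $\tab$ has size $k \cdot 2^k \in O(kn)$ and can be initialized within this bound. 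Computing the pendant-free forest $T$ and the set $V_T^{\geq 3}$ is a single traversal of $G-X$ and costs $O(n)$. Within the same budget I would set up the auxiliary structures that make every test inside Keep-Edge constant-time: the partial adjacency matrix of \cref{lem:rule-deg-2-time} (which answers ``$\{u,x\} \in E$?'' in $O(1)$ and is built in $O(kn)$ time), an array storing the $M_{G-X}$-partner of each vertex, a boolean flag for membership in $V_T^{\geq 3}$, and a boolean flag marking each vertex that is adjacent in $G-X$ to a free leaf; the last flag is obtained by scanning the at most $k^2$ free leaves and marking their parents, costing $O(k^2) \subseteq O(kn)$. Note also that every vertex has at most $k$ neighbours in $X$, so $m \in O(kn)$.

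Next I would argue that the processing of one current vertex $w$ inside a Keep-Edge call costs only $O(1)$. Each test in \cref{line:v-is-free-or-split-vertex,line:w-is-adjacent-to-free-or-split-vertex,line:table-lookup}, as well as retrieving $w$ in \cref{line:chose-w} and the increment in \cref{line:table-entry-increase}, is a constant-time lookup into the structures above. The only potentially expensive part is the loop in \cref{line:extend-alternating-path} over the neighbours $u \neq v$ of $w$. The key observation I would use here is that once the exploration reaches this loop, $w \notin V_T^{\geq 3}$ and $w$ is adjacent to no free leaf of $G-X$; since a leaf of $G-X$ can only be matched to its parent and $w$ is already matched to $v$, every pendant child of $w$ other than $v$ would be free, contradicting the free-leaf guard. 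Hence $w$ has no pendant neighbour to branch into, so all $G-X$-neighbours of $w$ other than $v$ lie in $T$, and there are at most two of them because $\deg_T(w) \le 2$. Iterating over these $O(1)$ candidates and checking for each whether it is matched and non-adjacent to $x$ therefore costs $O(1)$, giving $O(1)$ work per processed vertex.

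Then I would turn the per-vertex bound into a global one. For a fixed $x \in X$, \cref{lem:explore-in-paths} shows that the subgraph $P(x,v)$ explored by Keep-Edge is essentially a path (with at most one extra fork), so each of its vertices is processed a constant number of times within a single call and the cost of that call is $O(|V(P(x,v))|)$. By \cref{lem:explore-disjoint-parts}, the collection $\mathcal{P}_x$ of all these subgraphs splits into two families $\mathcal{P}^A_x$ and $\mathcal{P}^B_x$ whose members are pairwise vertex-disjoint; consequently every vertex of $V \setminus X$ lies in at most two of the explored subgraphs, so $\sum_{v} |V(P(x,v))| \le 2|V \setminus X| \in O(n)$. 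Thus all Keep-Edge calls issued for a fixed $x$ cost $O(n)$ in total, and summing over the $k$ vertices of $X$ yields $O(kn)$ for the whole main loop. Finally, the exhaustive application of \cref{rule:deg-zero-one-vertices,rule:deg-two-vertices-restricted} in \cref{line:apply-reduction-rules} runs in $O(n+m) \subseteq O(kn)$ time by \cref{lem:rule-deg-0-1-2-restricted-lin-time}. Adding the three contributions gives the claimed $O(kn)$ running time.

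I expect the main obstacle to be the $O(1)$-per-vertex claim of the second paragraph: it hinges both on realising all semantic tests inside Keep-Edge as constant-time table or matrix lookups and, more delicately, on the combinatorial argument that the branching loop in \cref{line:extend-alternating-path} only ever inspects a constant number of neighbours, which in turn relies on the free-leaf and $V_T^{\geq 3}$ guards having already filtered out the high-degree and pendant cases. Getting this interplay exactly right, so that no hidden factor of $k$ (for instance from blindly scanning all neighbours of $w$, including those in $X$) sneaks into the per-vertex cost, is where the care is needed; everything else is bookkeeping supported by \cref{lem:explore-in-paths,lem:explore-disjoint-parts}.
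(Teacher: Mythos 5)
Your proposal is correct and follows essentially the same route as the paper's proof: bound the preprocessing and the final rule application by $O(kn)$ and $O(n+m)$ respectively, and charge the Keep-Edge exploration for each fixed $x\in X$ to $O(n)$ via \cref{lem:explore-disjoint-parts} (each vertex of $G-X$ visited at most twice) together with constant-time table and adjacency lookups. Your extra care about why the loop in \cref{line:extend-alternating-path} inspects only $O(1)$ candidates (the free-leaf and $V_T^{\ge 3}$ guards) makes explicit a point the paper leaves terse, but the argument is the same.
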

	
	\begin{proofofclaim}
		First, observe that \crefrange{line:bijection-2^X-N-II}{line:get-deg-three-vertices-in-T} can be done in~$O(kn)$ time:
		The preprocessing and table initialization can be done in~$O(kn)$ time as discussed in \cref{sec:kernel-fvs-step5}.
		Furthermore, $T$ and~$V_T^{\ge 3}$ can clearly be computed in~$O(n+m) \le O(kn)$ time.
		Second, applying \cref{rule:deg-zero-one-vertices} in~$O(n+m)$ time is straightforward and \citet{BK09a} showed how to apply \cref{rule:deg-two-vertices} in~$O(n+m)$ time.
		Thus, it remains to show that each iteration of the foreach-loop in \cref{line:bound-degree-of-x-loop} can be done in~$O(n)$ time.

		By \cref{lem:explore-disjoint-parts}, the graphs~$\mathcal{P}_x$ explored from~$x$ can be partitioned into two parts such that within each part all subgraphs are vertex-disjoint. 
		Thus, each vertex in~$G-X$ is visited only twice during the execution of the function Keep-Edge.
		Furthermore, observe that in \cref{line:table-lookup,line:table-entry-increase} the table can be accessed in constant time.
		Thus, the function Keep-Edge only checks once whether a vertex in~$V \setminus X$ has a neighbor in~$X$, namely in \cref{line:extend-alternating-path}.
		This single check can be done in constant time.
		Since the rest of the computation is done on~$G-X$ which has less than~$|V \setminus X|$ edges, it follows that each iteration of the foreach-loop in \cref{line:bound-degree-of-x-loop} can indeed be done in~$O(n)$ time.
	\end{proofofclaim}
	
\medskip

This completes the proof of~\cref{prop:final-fvs-kernel-step}. \qed
\end{proof}

This completes the description of \cref{step:bound-long-paths}.
Combining \crefrange{step:reduce-rule-1-2}{step:bound-long-paths} we obtain our kernelization algorithm for the parameter \paramEnv{feedback vertex number}.
\begin{theorem}
\label{thm:fvs-kernel}
	\Match parameterized by the \emph{feedback vertex number}~$k$ admits a kernel of size~$2^{O(k)}$.
	It can be computed in~$O(kn)$ time.
\end{theorem}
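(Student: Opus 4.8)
The plan is to assemble the theorem from the three main lemmas of this subsection, which already do all the heavy lifting; what remains is to set up the feedback vertex set, dispose of a degenerate base case, and verify that sizes and running times compose correctly.

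First I would compute a feedback vertex set~$X$ of~$G$ in linear time using the factor-$4$ approximation algorithm of~\cite{BGNR98}, so that~$|X| \le 4k$. I write~$k' := |X| \le 4k$ and run every subsequent step with respect to~$X$. Since~$2^{O(k')} = 2^{O(k)}$ and~$O(k' n) = O(kn)$, replacing the exact parameter by the approximate one affects neither the claimed kernel size nor the claimed running time. Next I would dispose of the standing assumption~$k < \log n$ used throughout \crefrange{step:reduce-rule-1-2}{step:bound-long-paths}. Because~$X$ is a feedback vertex set of size~$k'$, the forest~$G-X$ has at most~$n-1$ edges and the vertices of~$X$ are incident to at most~$k'(n-1)$ further edges, so~$m = O(kn)$. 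Hence if~$k \ge \log n$, then~$n \le 2^k$ and~$m = O(k 2^k)$, so the input instance is already of size~$2^{O(k)}$ and may be returned unchanged; otherwise~$k < \log n$ holds and the steps apply.

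Second, I would invoke the three lemmas in sequence, checking that each output matches the next hypothesis. Applying \cref{lem:bound-free-leaves-lin-time} to~$(G,\solSize)$ and~$X$ yields, in linear time, an equivalent instance together with a maximum matching of~$G-X$ all of whose free vertices are leaves and at most~$(k')^2$ of which are free. This is exactly the input required by \cref{lem:bound-bottommost-leaves}, which produces in~$O(k'n)=O(kn)$ time an equivalent instance with at most~$(k')^2(2^{k'}+1)$ bottommost leaves while still keeping at most~$(k')^2$ free leaves. That output is in turn the input required by \cref{lem:final-fvs-kernel-step}, which produces in~$O(k'n)=O(kn)$ time an equivalent instance of size~$O((k')^3 2^{k'})$.

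Finally, I would compose the guarantees. Equivalence of the last instance to the original follows by transitivity, since each lemma preserves the answer. The total running time is~$O(kn)$ (linear for the approximation and for \cref{lem:bound-free-leaves-lin-time}, plus~$O(kn)$ for each of the other two lemmas), and the final size is~$O((k')^3 2^{k'}) = O((4k)^3 2^{4k}) = 2^{O(k)}$. The only genuinely non-routine points in this assembly are the two bookkeeping observations just made: that the constant-factor loss from approximating the feedback vertex number is harmless because it sits inside an exponential, and that the regime~$k \ge \log n$ is self-kernelizing. All the substantive combinatorial and running-time work is already contained in the three lemmas, so I expect no further obstacle here.
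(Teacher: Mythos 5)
Your proposal is correct and follows essentially the same route as the paper: compute a factor-$4$ approximate feedback vertex set, chain \cref{lem:bound-free-leaves-lin-time,lem:bound-bottommost-leaves,lem:final-fvs-kernel-step}, and compose the size and time bounds to get $O((4k)^3 2^{4k}) = 2^{O(k)}$ in $O(kn)$ time. Your explicit justifications that the constant-factor loss is absorbed by the exponential and that the regime $k \ge \log n$ is self-kernelizing (via $m = O(kn)$ since $G-X$ is a forest) are exactly the bookkeeping the paper handles by its standing assumption $k < \log n$.
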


\begin{proof}
	First, using the linear-time factor-four approximation of \citet{BGNR98}, we compute an approximate feedback vertex set~$X$ with~$|X| \le 4k$.
	Then, we apply \crefrange{step:reduce-rule-1-2}{step:bound-long-paths} using
	\cref{alg:final-fvs-kernel-step,alg:bound-bottommost-leaves,alg:reduce-free-leaves-lin-time}.
	By \cref{prop:bound-free-leaves-lin-time,prop:bound-bottommost-leaves,prop:final-fvs-kernel-step}, this can be done in~$O(kn)$ time and results in a kernel of size~$O((4k)^3 2^{4k}) = 2^{O(k)}$. \qed
\end{proof}
Applying the~$O(m\sqrt{n})$-time algorithm for \Match~\cite{MV80} on the kernel yields the following.

\begin{corollary}\label[corollary]{cor:fvs-alg}
	\Match can be solved in~$O(kn + 2^{O(k)})$ time, where~$k$ is the feedback vertex number.
\end{corollary}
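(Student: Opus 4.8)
The plan is to prove the corollary as a direct composition of \cref{thm:fvs-kernel} with the matching algorithm of \citet{MV80}: kernelize first, then solve the (tiny) kernel exactly. First I would record a size observation about the input. Since the feedback vertex number is~$k$, there is a set~$X$ of~$O(k)$ vertices (obtained, e.g., by the linear-time factor-4 approximation of \citet{BGNR98}) whose removal leaves a forest~$G-X$. That forest contributes at most~$n$ edges, while the edges incident to~$X$ number at most~$|X|\cdot n \in O(kn)$, so the whole graph has~$m \in O(kn)$ edges. Consequently, reading the input, computing~$X$, and every ``$O(n+m)$'' subroutine invoked by the kernelization all fit within the claimed~$O(kn)$ budget.

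Next I would invoke \cref{thm:fvs-kernel} to compute, in~$O(kn)$ time, an equivalent instance~$(G',\solSize')$ whose graph~$G'$ has~$n'$ vertices and~$m'$ edges with~$n',m' \in 2^{O(k)}$. The one piece of bookkeeping I would carry along is the total amount~$d$ by which the solution size is decreased during kernelization. The only rules that modify~$\solSize$ are \cref{rule:deg-zero-one-vertices,rule:deg-two-vertices-restricted}, and each application decreases~$\solSize$ by exactly one while forcing precisely one edge into every maximum matching; hence~$\solSize' = \solSize - d$ and, by the equivalence guaranteed by the theorem, the maximum matching size of~$G$ equals the maximum matching size of~$G'$ plus~$d$.

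Finally I would run the~$O(m'\sqrt{n'})$-time algorithm of \citet{MV80} on~$G'$ to compute a maximum matching of~$G'$ and its size~$\mu(G')$. Because~$n',m' \in 2^{O(k)}$, this costs~$O(m'\sqrt{n'}) \in 2^{O(k)}$ time. The original instance is then a yes-instance if and only if~$\mu(G') \ge \solSize'$, equivalently~$\mu(G')+d \ge \solSize$; and, for the optimization version, a maximum matching of~$G$ of size~$\mu(G')+d$ is recovered by lifting the kernel solution back through the reduction rules (re-inserting the~$d$ forced edges and undoing the degree-two merges). Adding the~$O(kn)$ kernelization time to the~$2^{O(k)}$ solving time gives the claimed~$O(kn + 2^{O(k)})$ bound. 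I should also note that the boundary case~$k \ge \log n$ needs no separate treatment, since there~$n \le 2^k$ forces~$O(kn) \subseteq 2^{O(k)}$ and the theorem returns the input itself as the kernel.

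The statement is essentially a running-time accounting once the kernel is in hand, so I do not expect a genuine obstacle; the only point that warrants care is the second step, where one must verify that the decrements contributed by \cref{rule:deg-zero-one-vertices,rule:deg-two-vertices-restricted} compose additively and independently of their application order, so that~$\mu(G) = \mu(G') + d$ holds globally. This follows from the per-rule correctness already established in \cref{lem:rule-deg-0-1-2-correct}, since each rule preserves the equivalence between instances and accounts for exactly one forced matching edge.
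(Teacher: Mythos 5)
Your proposal is correct and matches the paper's (implicit, one-line) argument exactly: apply \cref{thm:fvs-kernel} to get a $2^{O(k)}$-size kernel in $O(kn)$ time, then run the $O(m\sqrt{n})$-time algorithm of \citet{MV80} on it, for a total of $O(kn+2^{O(k)})$. Your additional bookkeeping (the additive offset~$d$ for recovering the optimization answer, and the $k\ge\log n$ boundary case) is sound elaboration the paper leaves implicit, with only the cosmetic imprecision that the unrestricted \cref{rule:deg-two-vertices} (not just its restricted variant) also decrements~$\solSize$ during the kernelization, which changes nothing since its correctness is likewise covered by \cref{lem:rule-deg-0-1-2-correct}.
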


\section{Kernelization for Matching on Bipartite Graphs} \label{sec:bipartite-case} 

In this section, we investigate the possibility of efficient and effective preprocessing for \BipMatch.
More specifically, we show a linear-time computable polynomial-size kernel with respect to the parameter \paramEnv{distance to chain graphs}.
In the first part of this section, we provide the definition of chain graphs and describe how to compute the parameter.
In the second part, we discuss the kernelization algorithm.

\paragraph{Definition and computation of the parameter.}
We first define chain graphs which are a subclass of bipartite graphs with special monotonicity properties. 

\begin{definition}[\cite{BLS99}]\label[definition]{def:chain}
Let $G=(A,B,E)$ be a bipartite graph. Then $G$ is a \emph{chain graph} if each of its two color classes $A$, $B$  admits a linear ordering wrt.~neighborhood inclusion, that is, $A = \{a_1, \ldots, a_\alpha\}$ and~$B = \{b_1, \ldots, b_\beta\}$ where~$N(a_i) \subseteq N(a_j)$ and~$N(b_i) \subseteq N(b_j)$ whenever~$i < j$.
\end{definition}

Observe that if the graph~$G$ contains twins, then there is more than one linear ordering wrt.\ neighborhood inclusion.
To avoid ambiguities, we fix for the vertices of the color class $A$ (resp.~$B$) in a chain graph $G=(A,B,E)$ one linear ordering~$\prec_{A}$ (resp.~$\prec_{B}$) such that, for two vertices $u,v\in A$ (resp.~$u,v\in B$), if $u \prec_{A} v$ (resp.~if $u \prec_{B} v$) then $N(u) \subseteq N(v)$. 
In the remainder of the section we consider a bipartite representation of a given chain graph $G=(A,B,E)$ where the vertices of $A$ (resp.~$B$) are ordered according to~$\prec_{A}$ (resp.~$\prec_{B}$) from left to right (resp.~from right to left), as illustrated in \cref{fig:chain-graph}. 
\begin{figure}
	\centering
	\begin{tikzpicture}[scale=1]
		\tikzstyle{knoten}=[circle,draw,minimum size=17pt,inner sep=2pt,fill=white]

		\def\n{7}
		\def\xScale{1.25}
		
		\foreach \i in {1,...,\n} {
			\node[knoten] (a\i) at (\xScale * \i,2.5) {$a_{\i}$};
			\node[knoten] (b\i) at (\xScale * \n - \xScale * \i + \xScale,0) {$b_{\i}$};
		}
		\foreach \i in {2,...,\n} {
			\node at (\xScale * \i - 0.5 * \xScale,2.5) {$\prec_A$};
			\node at (\xScale * \i - 0.5 * \xScale,0) {$\succ_B$};
		}
		\node[anchor=east] at (\xScale * \n + \xScale,2.5) {$A$};
		\node[anchor=east] at (\xScale * \n + \xScale,0) {$B$};

		\foreach[count=\nr] \x in {7,7,5,5,5,2,1}
			\foreach \y in {\n, ..., \x} {
				\path (a\nr) edge[-,draw=black!80] (b\y);
			}
		
		\foreach \x in {3, ..., 7}
		{
			\pgfmathtruncatemacro{\y}{10 - \x};
			\path (a\x) edge[-,very thick] (b\y);
		}
		
	\end{tikzpicture}
	\caption{
		A chain graph. 
		Note that the ordering $\prec_A$ of the vertices in~$A$ is going from left to right while the ordering~$\succ_B$ of the vertices in~$B$ is going from right to left.
		The reason for these two orderings being drawn in different directions is that a maximum matching can be drawn as parallel edges, see e.\,g.\ the bold edges.
		In fact, \cref{alg:chain-graphs-linear} computes such matchings with the matched edges being parallel to each other.
	}
	\label{fig:chain-graph}
\end{figure}

For simplicity of notation we use in the following $\prec$ to denote the orderings~$\prec_{A}$ and~$\prec_{B}$ whenever the color class is clear from the context. 
Note that we use the direction left/right to indicate the ordering~$\prec$.
That is, for a vertex~$a' \in A$ to the right (left) of~$a \in A$ we have~$a \prec a'$ ($a' \succ a$).
In contrast, for a vertex~$b' \in B$ to the right (left) of~$b  \in B$ we have~$b \succ b'$ ($b \prec b'$).

We next show that we have a constant-factor approximation for the parameter and the corresponding vertex subset working in linear time.
To this end, we use the following characterization of chain graphs.
Here, $2K_2$ denote the one-regular graph on four vertices (with disjoint two edges).

\begin{lemma}[\cite{BLS99}] \label[lemma]{lem:bipartite-chain-characterization} 
	A bipartite graph is a chain graph if and only if it does not contain an induced~$2K_2$.
\end{lemma}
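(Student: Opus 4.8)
The plan is to reduce the chain-graph property to a statement about pairwise comparability of neighborhoods, and then to observe that this pairwise condition is exactly the obstruction encoded by an induced~$2K_2$. First I would record that in a bipartite graph $G=(A,B,E)$ every edge joins $A$ to $B$, so any induced $2K_2$ must consist of two vertices $a_1,a_2\in A$ and $b_1,b_2\in B$ with $\{a_1,b_1\},\{a_2,b_2\}\in E$ and $\{a_1,b_2\},\{a_2,b_1\}\notin E$ (the pairs inside a color class are non-edges automatically). The whole proof then rests on the equivalence, for two vertices of the same color class, between ``$N(a_1)$ and $N(a_2)$ are $\subseteq$-incomparable'' and ``$a_1,a_2$ together with suitable $b_1,b_2$ induce a $2K_2$''.

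For the forward direction I would assume $G$ is a chain graph and argue by contradiction: given an induced $2K_2$ on $a_1,a_2,b_1,b_2$ as above, the linear order on $A$ from \cref{def:chain} gives, after possibly swapping the roles of $a_1$ and $a_2$, that $N(a_1)\subseteq N(a_2)$. Since $\{a_1,b_1\}\in E$ we have $b_1\in N(a_1)\subseteq N(a_2)$, hence $\{a_2,b_1\}\in E$, contradicting that the four vertices induce a $2K_2$. This direction is short and uses only the definition.

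For the backward direction I would assume $G$ has no induced $2K_2$ and show each color class admits the required order. Take any $a_1,a_2\in A$ and suppose their neighborhoods are incomparable; then there are $b_1\in N(a_1)\setminus N(a_2)$ and $b_2\in N(a_2)\setminus N(a_1)$, and these four vertices induce a $2K_2$, a contradiction. Hence $\{N(a)\mid a\in A\}$ is pairwise comparable under inclusion. The step I expect to need the most care is turning this pairwise comparability into a single linear order: I would sort the vertices of $A$ by $|N(\cdot)|$, breaking ties arbitrarily, and verify that for $i<j$ this yields $N(a_i)\subseteq N(a_j)$, since comparability together with $|N(a_i)|\le|N(a_j)|$ rules out $N(a_j)\subsetneq N(a_i)$. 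The identical argument applied to $B$ (the $2K_2$ condition is symmetric in the two color classes) produces the order on $B$, so $G$ satisfies \cref{def:chain}. The only genuinely non-mechanical point is this passage from ``every pair comparable'' to ``globally a chain'', which is why I single it out as the main obstacle; everything else is a direct translation between a missing edge and a missing neighbor.
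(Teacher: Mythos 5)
Your proof is correct and complete: both directions are sound, and you rightly flag (and then correctly handle) the only non-trivial step, namely passing from pairwise $\subseteq$-comparability of neighborhoods to a single linear order by sorting on $|N(\cdot)|$. Note that the paper itself gives no proof of this lemma---it is imported from the cited reference \cite{BLS99}---so there is nothing to compare against; your argument is the standard one for this folklore characterization.
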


\begin{lemma}
\label[lemma]{lem:linTimeApproxDistChainGraphs}
	There is a linear-time factor-4 approximation for the problem of deleting a minimum number of vertices in a bipartite graph in order to obtain a chain graph.
\end{lemma}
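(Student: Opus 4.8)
The plan is to reduce the problem to packing induced $2K_2$'s and then apply the standard ``delete a maximal family of pairwise-disjoint obstructions'' strategy. By \cref{lem:bipartite-chain-characterization} a bipartite graph is a chain graph if and only if it is $2K_2$-free, so deleting a vertex set $S$ turns $G=(A,B,E)$ into a chain graph exactly when $G-S$ contains no induced $2K_2$. I would therefore compute a \emph{maximal} family $\mathcal{F}=\{Z_1,\dots,Z_t\}$ of pairwise vertex-disjoint induced $2K_2$'s and output $S=\bigcup_i V(Z_i)$, which has size $4t$.

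The approximation guarantee is then immediate. Any vertex set whose deletion yields a chain graph must contain at least one vertex of each induced $2K_2$, in particular of each $Z_i$; since the $Z_i$ are vertex-disjoint these hitting vertices are distinct, so $\mathrm{OPT}\ge t$ and hence $|S|=4t\le 4\,\mathrm{OPT}$. Correctness of the output (that $G-S$ is a chain graph) follows from maximality of $\mathcal{F}$: an induced $2K_2$ in $G-S$ would be vertex-disjoint from every $Z_i$ and could be added to $\mathcal{F}$, a contradiction; thus $G-S$ is $2K_2$-free and a chain graph by \cref{lem:bipartite-chain-characterization}. The only real work is producing a maximal such family in $O(n+m)$ time.

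To do this I would interleave chain-graph peeling with obstruction extraction. Note first that, by \cref{def:chain}, a chain graph without isolated vertices always has a vertex adjacent to all vertices of the opposite class (the top of the inclusion order). Keeping the alive graph together with per-side degree buckets and the counts $t_A,t_B$ of alive vertices on each side, I repeatedly do the following: if some alive vertex is isolated, or is \emph{universal} (an $A$-vertex of alive-degree $t_B$, or a $B$-vertex of alive-degree $t_A$), then remove it (``peel''); otherwise the alive graph is not a chain graph, so I extract an induced $2K_2$, add it to $\mathcal{F}$, and delete its four vertices. The crucial correctness observation is that no peeled vertex can lie in an induced $2K_2$ of $G-S$: given such a $2K_2$ $Z=\{a,b,a',b'\}\subseteq G-S$, let $v$ be its first vertex removed; at that moment all four vertices of $Z$ are still alive, and in $Z$ the vertex $v$ has both a neighbor and a non-neighbor in the opposite class, so $v$ is neither isolated nor universal in the alive graph. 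Hence $v$ must have been removed by an extraction, i.e.\ $v\in S$, contradicting $Z\subseteq G-S$. Therefore $\mathcal{F}$ is maximal.

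For the running time I would charge all work to removed vertices and to edges. When peeling is stuck, pick a maximum alive-degree $A$-vertex $a$; since it is not universal, $\deg(a)<t_B$, and a $2K_2$ can be found in $O(\deg(a)+\deg(a'))$ time: mark $N(a)$ and scan the alive $B$-list for a non-neighbor $b'$ of $a$, let $a'$ be any neighbor of the non-isolated vertex $b'$, and observe that because $a$ has maximum alive-degree and $N(a)\ne N(a')$ there is some $b\in N(a)\setminus N(a')$, so $\{a,b,a',b'\}$ induces a $2K_2$. Every removal (peel or extraction) costs time linear in the current degrees of the removed vertices, and each vertex is removed exactly once, so this sums to $O(\sum_v \deg(v))=O(m)$; degree-bucket updates cost $O(1)$ per edge-endpoint deletion, again $O(m)$ overall, and bucket-sorting the initial degrees gives the $O(n+m)$ preprocessing. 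The main obstacle is precisely this amortized bookkeeping: I must guarantee that locating each obstruction, and in particular detecting the possibly many vertices that \emph{newly} become universal once the two $B$-vertices of an extracted $2K_2$ are deleted, never costs more than the degrees of the vertices actually removed. This is what forces the degree-bucket data structure tied to the decreasing thresholds $t_A,t_B$, rather than a naive rescan of the graph after each deletion.
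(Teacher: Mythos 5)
Your proposal is correct and follows essentially the same route as the paper: greedily extract pairwise vertex-disjoint induced $2K_2$'s (using the $2K_2$-free characterization for the factor-4 lower bound), peel isolated and universal vertices for free since they cannot occur in an induced $2K_2$, and locate each obstruction via a degree-extremal pivot whose cost is charged to the deleted vertices. The only cosmetic difference is that you pivot on a maximum-degree vertex and scan for a non-neighbor, whereas the paper pivots on a minimum-degree vertex and its neighbor; the two arguments are mirror images of one another.
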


\begin{proof}
	Let~$G=(A,B,E)$ be a bipartite graph. 
	We compute a set~$S \subseteq A \cup B$ such that~$G-S$ is a chain graph and~$S$ is at most four times larger than a minimum size of such a set.
	The algorithm iteratively tries to find a~$2K_2$ and deletes the four corresponding vertices until no further~$2K_2$ is found.
	Since in each~$2K_2$, by \cref{lem:bipartite-chain-characterization}, at least one vertex needs to be removed, the algorithm yields the claimed factor-4 approximation.
	
	The details of the algorithm are as follows:
	First, it initializes~$S = \emptyset$ and sorts the vertices in~$A$ and in~$B$ by their degree; the vertices in~$A = \{a_1, \ldots, a_\alpha\}$ in increasing order and the vertices in~$B = \{b_1, \ldots, b_\beta\}$ in decreasing order, that is,~$\deg(a_1) \le \ldots \le \deg(a_\alpha)$ and~$\deg(b_1) \ge \ldots \ge \deg(b_\beta)$.
	Since the degree of each vertex is at most~$\max\{\alpha, \beta\}$, this can be done in linear time with e.\,g.\ Bucket Sort. 
	At any stage the algorithm deletes all vertices of degree zero and all vertices which are adjacent to all vertices in the other partition.
	The deleted vertices are not added to~$S$ since these vertices cannot participate in a~$2K_2$.
	Next, the algorithm recursively processes the vertices in~$A$ in a nondecreasing order of their degrees.
	Let~$a \in A$ be a minimum-degree vertex and let~$b \in B$ be a neighbor of~$a$.
	Since~$b$ is not adjacent to all vertices in~$A$ (otherwise~$b$ would be deleted), there is a vertex~$a' \in A$ that is not adjacent to~$b$.
	Since~$\deg(a) \le \deg(a')$ it follows that~$a'$ has a neighbor~$b'$ that is not adjacent to~$a$.
	Hence, the four vertices~$a, a', b, b'$ induce only two edges: $\{a,b\}$ and~$\{a',b'\}$ and thus form a~$2K_2$.
	Thus, the algorithm adds the four vertices to~$S$, deletes them from the graph, and continues with a vertex in~$A$ that has minimum degree.

	As to the running time, we now show that, after the initial sorting, the algorithm considers each edge only twice:
	Selecting~$a$ and~$b$ as described above can be done in~$O(1)$ time.
	To select~$a'$, the algorithm simply iterates over all vertices in~$A$ until it finds a vertex that is not adjacent to~$b$. 
	In this way at most~$\deg(b)+1$ vertices are considered.
	Similarly, by iterating over the neighbors of~$a'$, one finds~$b'$.
	Hence, the edges incident to~$a$, $a'$, $b$, and $b'$ are used once to find the vertices and a second time when these vertices are deleted.
	Thus, using appropriate data structures, the algorithm runs in~$O(n+m)$ time. \qed
\end{proof}

\paragraph{Kernelization overview.}
In the rest of this section, we provide a linear-time computable kernel for \BipMatch with respect to the parameter vertex deletion distance $k$ to chain graphs. 
On a high level, our kernelization algorithm consists of two steps: 
First, we upper-bound by $O(k)$ the number of neighbors of each vertex in the deletion set. 
Second, we mark $O(k^2)$ special vertices and we use the monotonicity properties of chain graphs to upper-bound the number of vertices that lie between any two consecutive marked vertices, thus bounding the total size of the reduced graph to $O(k^3)$ vertices. 

\paragraph{Step 1.}
Let~$G = (A,B,E)$ be the bipartite input graph, where $V=A\cup B$, and let~$X \subseteq V$ be a vertex subset such that~$G-X$ is a chain graph. 
By \cref{lem:linTimeApproxDistChainGraphs}, we can compute an approximate~$X$ in linear time.
Our kernelization algorithm uses a specific maximum matching~$M_{G-X} \subseteq E$ in~$G-X$ with \cref{alg:chain-graphs-linear} where all edges in~$M_{G-X}$ are ``parallel'' and all matched vertices are consecutive in the ordering~$\prec_{A}$ and~$\prec_{B}$, see also \cref{fig:chain-graph}. 
\begin{algorithm}[t]\small
	\caption{ An algorithm computing a maximum matching~$M$ in the chain graph~$G$ such that all edges in~$M$ are ``parallel'' (see \cref{fig:chain-graph} for a visualization.) } 
	\label{alg:chain-graphs-linear}
	\KwIn{A chain graph~$G = (V,E)$, $V = A \cup B$, $A = \{a_1, \ldots, a_\alpha\}$ and $B = \{b_1, \ldots, b_\beta\}$ with~$N(a_i) \subseteq N(a_j)$ and $N(b_i) \subseteq N(b_j)$ for~$i < j$.}
	\KwOut{A maximum matching of~$G$ where all matched edges are parallel.}
	\vspace{1em}
	Compute the size~$\solSize$ of a maximum matching in~$G$ using an algorithm of~\citet{SY96}\;
	$M \gets \{ \{a_{\alpha - \solSize + 1}, b_\beta\}, \{a_{\alpha - \solSize + 2}, b_{\beta-1}\}, \ldots, \{a_{\alpha}, b_{\beta-\solSize+1}\}\}$ \;
	\KwRet{$M$}.
\end{algorithm}%
Since in convex graphs matching is linear-time solvable~\cite{SY96} and convex graphs are a superclass of chain graphs, this can be done in~$O(n+m)$ time.
We use~$M_{G-X}$ in our kernelization algorithm to obtain some local information about possible augmenting paths. 
For example, each augmenting path has at least one endpoint in~$X$.
Forming this into a data reduction rule, with $s$~denoting the size of a maximum matching, yields the following.

\begin{rrule}\label{rule:MatchingSizeBounds}
	If~$|M_{G-X}| \ge \solSize$, then return a trivial yes-instance; if~$\solSize > |M_{G-X}| + \distPara$, then return a trivial no-instance.
\end{rrule}
The correctness of \cref{rule:MatchingSizeBounds} follows from \cref{obs:MatchingSizeGandG-X}.

We will show next that there is a maximum matching~$M_G$ for~$G$ in which each vertex in~$X$ is either matched with another vertex in~$X$ or with a ``small-degree vertex'' in~$G-X$.
This means that an augmenting path starting at some vertex in~$X$ will ``enter'' the chain graph~$G-X$ in a small-degree vertex. 
We now formalize this concept.
Recall that $u \prec v$ implies $N(u) \subseteq N(v)$.
For a vertex~$x \in X$ we define~$N^{V \setminus X}_\text{small}(x)$ to be the set of the~$\distPara$ neighbors of~$x$ in~$V \setminus X$ with the smallest degree, formally,
$$ N^{V \setminus X}_\text{small}(x) := \{w \in N(x) \setminus X \mid \distPara > |\{u \in N(x) \setminus X \mid u \prec w\}|\}.$$ 

\begin{lemma}\label[lemma]{lem:XMatchedToSmallestDegree}
	Let~$G=(V,E)$ be a bipartite graph and let~$X \subseteq V$ be a vertex set such that~$G-X$ is a chain graph.
	Then, there exists a maximum matching~$M_G$ for~$G$ such that every matched vertex~$x \in X$ is matched to a vertex in~$N^{V \setminus X}_\text{\rm small}(x) \cup X$.
\end{lemma}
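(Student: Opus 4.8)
The plan is to begin with an arbitrary maximum matching and then repeatedly perform local exchanges that push the endpoints of $X$ inside $G-X$ towards smaller positions in the chain ordering, until every vertex of $X$ lands on one of its $\distPara$ smallest-degree neighbours. Concretely, for a maximum matching $M_G$ let $\pos(w)$ denote the position of $w$ in the linear order $\prec$ of its color class in $G-X$, and define the potential $\Phi(M_G) := \sum \pos(w)$, where the sum ranges over all edges $\{x,w\} \in M_G$ with $x \in X$ and $w \in V\setminus X$. I would take $M_G$ to be a maximum matching that minimises $\Phi$ and show that it already has the claimed property; since $\Phi$ is a non-negative integer, such a minimiser exists.

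Suppose towards a contradiction that some matched vertex $x \in X$ is matched to a vertex $w \in (V\setminus X) \setminus N^{V\setminus X}_\text{small}(x)$. By the definition of $N^{V\setminus X}_\text{small}(x)$ there are at least $\distPara$ neighbours of $x$ in $V \setminus X$ that precede $w$ in $\prec$; fix $\distPara$ of them, say $w_1 \prec \dots \prec w_\distPara \prec w$. Since $x$ lies on one side of the bipartition, all of $w_1, \ldots, w_\distPara, w$ lie in the same color class, so by \cref{def:chain} we have $N_{G-X}(w_i) \subseteq N_{G-X}(w)$ for every $i$. Because $|X \setminus \{x\}| = \distPara - 1$ and $x$ is matched to $w \neq w_i$, at most $\distPara - 1$ of the $w_i$ can be matched to vertices of $X$; hence some $w_j$ is either free or matched to a vertex inside $V \setminus X$.

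First I would handle the case that $w_j$ is free: replacing $\{x,w\}$ by $\{x,w_j\}$ yields a maximum matching whose only changed $X$-incident edge has partner $w_j \prec w$, so $\Phi$ strictly decreases, contradicting minimality. Otherwise $w_j$ is matched to some $z \in V \setminus X$, and the key observation is that $z \in N_{G-X}(w_j) \subseteq N_{G-X}(w)$, so $w$ is adjacent to $z$ as well. Thus replacing $\{x,w\}$ and $\{w_j,z\}$ by $\{x,w_j\}$ and $\{w,z\}$ again gives a maximum matching; as the new edge $\{w,z\}$ runs inside $V\setminus X$ and does not contribute to $\Phi$, only the partner of $x$ effectively changes (from $w$ to the $\prec$-smaller $w_j$), so $\Phi$ strictly decreases, again a contradiction. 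Either way no such $x$ exists, which proves the lemma. I expect the crux to be the second case: one must argue both that $w$ inherits adjacency to $z$ from the nested-neighbourhood property of chain graphs and that the rerouting leaves the matching maximum while provably lowering the potential. The enabling ingredient is the counting step producing an unobstructed $w_j$ — there are $\distPara$ candidates but only $\distPara-1$ other vertices of $X$ to block them — and verifying that the swap keeps the edge set a valid matching on the four distinct vertices $x, w_j, w, z$.
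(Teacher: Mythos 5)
Your proof is correct and follows essentially the same route as the paper: both arguments pick an extremal maximum matching, use the count of $\distPara$ small neighbours against the $\distPara-1$ other vertices of $X$ to find an unobstructed $w_j$, and perform the same two exchanges (direct rematch if $w_j$ is free, and the four-vertex swap via $N_{G-X}(w_j)\subseteq N_{G-X}(w)$ otherwise). The only cosmetic difference is the extremal criterion — you minimise a positional potential $\Phi$ whereas the paper maximises the number of vertices of $X$ already matched into $N^{V\setminus X}_{\text{small}}(x)\cup X$ — and either choice makes the exchange argument terminate.
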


\begin{proof}
	Assume, towards a contradiction, that there is no such matching~$M_G$. 
	Let~$M'_G$ be a maximum matching for~$G$ that maximizes the number of vertices~$x \in X$ that are matched to a vertex in~$N^{V \setminus X}_\text{small}(x) \cup X$, that is, let $M'_G$ maximize~$|\{x \in X \mid \{u,x\} \in M'_G \wedge u \in N^{V \setminus X}_\text{small}(x) \cup X\}|$.
	Let~$x \in X$ be a vertex that is not matched with any vertex in~$N^{V \setminus X}_\text{small}(x) \cup X$, that is, $x$ is matched to a vertex~$u \in V \setminus (N^{V \setminus X}_\text{small}(x) \cup X)$.
	If there is an unmatched vertex~$w \in N^{V \setminus X}_\text{small}(x)$ in~$M'_G$, then the matching~$M''_G := M'_G \cup \{\{x,w\}\} \setminus \{\{u,x\}\}$ is a maximum matching with more vertices $x \in X$ (compared to~$M'_G$) that are matched to a vertex in~$N^{V \setminus X}_\text{small}(x) \cup X$, a contradiction.
	Hence, assume that there is no free vertex in~$N^{V \setminus X}_\text{small}(x)$.
	Since~$|N^{V \setminus X}_\text{small}(x)|=|X|=k$, it follows that at least one vertex~$w \in N^{V \setminus X}_\text{small}(x)$ is matched to a vertex~$v \in V \setminus X$.
	Observe that, by definition of~$N^{V \setminus X}_\text{small}(x)$, we have~$N_{G-X}(w) \subseteq N_{G-X}(u)$.
	Thus, we have $\{u,v\} \in E$ and thus, $M''_G := M'_G \cup \{\{x,w\},\{u,v\}\} \setminus \{\{u,x\},\{w,v\}\}$ is a maximum matching with more vertices in~$X$ (compared to~$M'_G$) fulfilling the condition of the lemma, a contradiction. \qed
\end{proof}

Based on \cref{lem:XMatchedToSmallestDegree}, we can provide our next data reduction rule.

\begin{rrule}\label{rule:reduce-neighborhood-of-X}
	Let~$(G,s)$ be an instance reduced with respect to \cref{rule:MatchingSizeBounds} and let~$x \in X$.
	Then delete all edges between~$x$ and~$V \setminus N^{V \setminus X}_\text{\rm small}(x)$.
\end{rrule}

Clearly, \cref{rule:reduce-neighborhood-of-X} can be exhaustively applied in~$O(n+m)$ time by one iteration over~$A$ and~$B$ in the ordering~$\prec$.

\paragraph{Step 2.}
For the second step of our kernelization algorithm, we first mark a set~$K$ of~$O(k^2)$ vertices that are kept in the graph (and thus will end up in the kernel):
Keep all vertices of~$X$. 
For each vertex~$x \in X$ keep all vertices in~$N^{V \setminus X}_\text{small}(x)$ and if a kept vertex is matched wrt.~$M_{G-X}$, then keep also the vertex with which it is matched. 
Formally, we have:
$$ K:= X \cup \{v \mid \exists x \in X\colon v \in N^{V \setminus X}_\text{small}(x) \vee (\{u,v\} \in M_{G-X} \wedge u \in N^{V \setminus X}_\text{small}(x))\}.$$
Observe that exhaustively applying \cref{rule:reduce-neighborhood-of-X} ensures that~$K$ is of size at most~$2k^2$.

Next, we use the monotonicity properties of the chain graph to show that it suffices to keep for each vertex~$v \in K$ at most~$k$ vertices to the right and to the left of~$v$.
Consider an augmenting path~$P = x, a_1, b_1, \ldots, a_\ell, b_\ell, y$ from a vertex~$x \in B \cap X$ to a vertex~$y \in A \cap X$.
Observe that if~$a_1 \prec a_\ell$, then also~$\{b_1,a_\ell\} \in E$ and thus $P' = x, a_1, b_1, a_\ell,b_\ell,y$ is an augmenting path (see \cref{fig:chain-graph-augmenting-path-easy} for a visualization).
\begin{figure}
	\centering
	\begin{tikzpicture}[scale=1]
		\tikzstyle{knoten}=[circle,draw,minimum size=17pt,inner sep=2pt,fill=white]

		\def\n{7}
		\def\xScale{1.25}
		
		\foreach \i in {1,...,\n} {
			\node[knoten] (a\i) at (\xScale * \i,2.5) {$a_{\i}$};
			\node[knoten] (b\i) at (\xScale * \i,0) {$b_{\i}$};
		}
		
		\node[knoten] (x) at (0,0) {$x$};
		\node[knoten] (y) at (\xScale * \n + \xScale,2.5) {$y$};

		\foreach[count=\nr] \x in {2,3,5,6,7,7,7}
			\foreach \y in {1, ..., \x} {
				\path (a\nr) edge[-,draw=black!70] (b\y);
			}
		\path (x) edge[-,draw=black!70] (a1);
		\path (y) edge[-,draw=black!70] (b\n);
		
		\foreach \x in {1, ..., 7}
		{
			\path (a\x) edge[-,ultra thick] (b\x);
		}
		\tikzstyle{edge} = [color=black,opacity=.15,line cap=round, line join=round, line width=25pt]
		\begin{pgfonlayer}{background}
			\draw[edge, line width=10pt] (x.center) -- (a1.center) -- (b1.center) -- (a\n.center) -- (b\n.center) -- (y.center);
		\end{pgfonlayer}
	\end{tikzpicture}
	\caption{
		A chain graph with a maximum matching (thick edges) and two additional vertices~$x$ and~$y$.
		An augmenting path~$P = x, a_1, b_1, a_2, b_2, \ldots, a_7, b_7, y$ with~$a_1 \prec a_\ell$ implies that there is a shorter augmenting path~$P' = x, a_1, b_1, a_7, b_7, y$ of length five (indicated by the gray background) in the input graph since~$N(a_1) \subseteq N(a_7)$.
	}
	\label{fig:chain-graph-augmenting-path-easy}
\end{figure}
Furthermore, the vertices in the augmenting path~$P'$ are a subset of~$K \cup X$ and, thus, by keeping these vertices (and the edges between them), we also keep the augmenting path~$P'$ in our kernel.
Hence, it remains to consider the more complicated case that~$a_\ell \prec a_1$ (see \cref{fig:chain-graph-augmenting-path-difficult}).
\begin{figure}
	\centering
	\begin{tikzpicture}[scale=1]
		\tikzstyle{knoten}=[circle,draw,minimum size=17pt,inner sep=2pt,fill=white]

		\def\n{7}
		\def\xScale{1.25}
		
		\foreach \i in {1,...,\n} {
			\pgfmathtruncatemacro{\nr}{\n + 1 - \i}
			\node[knoten] (a\i) at (\xScale * \i,2.5) {$a_{\nr}$};
			\node[knoten] (b\i) at (\xScale * \i,0) {$b_{\nr}$};
		}
		
		\node[knoten] (x) at (0,2.5) {$y$};
		\node[knoten] (y) at (\xScale * \n + \xScale,0) {$x$};

		\foreach[count=\nr] \x in {2,3,5,6,7,7,7}
			\foreach \y in {1, ..., \x} {
				\path (a\nr) edge[-,draw=black!70] (b\y);
			}
		\path (x) edge[-,draw=black!70] (b1);
		\path (y) edge[-,draw=black!70] (a7);
		
		\foreach \x in {1, ..., 7}
		{
			\path (a\x) edge[-,ultra thick] (b\x);
		}
		\tikzstyle{edge} = [color=black,opacity=.15,line cap=round, line join=round, line width=25pt]
		\begin{pgfonlayer}{background}
			\draw[edge, line width=10pt] (x.center) -- (b1.center) -- (a1.center) %
			\foreach \i [remember=\i as \xi (initially 1)] in {2,3,5,7} {%
				-- (a\xi.center) -- (b\i.center) -- (a\i.center)%
			}
			-- (a\n.center) -- (y.center);
			
			\draw[rounded corners,dashed] ($(b5.north west)+(-0.25,0.25)$) rectangle ($(b6.south east)+(0.25,-0.25)$);
			\node at ($(b5.south east)+(0.4,-0.5)$) {$\rmv(a_4,M) = 2$};
			\draw[rounded corners,dashed] ($(a3.north west)+(-0.25,0.25)$) rectangle ($(a3.south east)+(0.25,-0.25)$);
			\node at ($(a3.north)+(0,0.4)$) {$\lmv(b_4,M) = 1$};
		\end{pgfonlayer}
	\end{tikzpicture}
	\caption{
		The graph from \cref{fig:chain-graph-augmenting-path-easy} with the only difference being that the positions of~$x$ and~$y$ are exchanged (and the vertex names are adjusted accordingly, so~$x$ is still adjacent to~$a_1$).
		Again, the thick, black edges denote a maximum matching~$M$ for the chain graph (containing all vertices except~$x$ and~$y$).
		The graph contains an augmenting path~$P = x, a_1, b_1, a_2, b_2, \ldots, a_7, b_7, y$ with~$a_1 \succ a_7$.
		In contrast to the example displayed in \cref{fig:chain-graph-augmenting-path-easy}, there is no augmenting~$x$--$y$-path of length five.
		The shortest augmenting~$x$--$y$-path is displayed.
		For the edge~$\{a_4,b_4\} \in M$ the vertices certifying that~$\lmv(b_4,M) = 1$ and~$\rmv(a_4,M) = 2$ are highlighted by dashed boxes (see \cref{def:lmv-rmv}).
	}
	\label{fig:chain-graph-augmenting-path-difficult}
\end{figure}
To this end, we next show that in certain ``areas'' of the chain graph~$G-X$ the number of augmenting paths ``passing through'' such an area is upper-bounded. 
To specify an ``area'', we need the following definition.

\begin{definition}\label[definition]{def:lmv-rmv}
	Let~$G=(A,B,E)$ be a chain graph and let~$M$ be a matching in~$G$.
	Furthermore let~$a \in A$, $b \in B$ with~$\{a,b\} \in M$.
	Then~$\lmv(b,M)$ (resp.~$\rmv(a,M)$) is the number of neighbors of~$b$ (resp.~of~$a$) that are to the left of~$a$ (resp. to the right of~$b$); formally:
	\begin{align*}
		\lmv(b,M) := |\{a' \in N(b) \mid a' \prec a\}|, && \rmv(a,M) := |\{b' \in N(a) \mid b' \prec b\}|.
	\end{align*}
\end{definition}

In~\cref{def:lmv-rmv} the terms ``left'' and ``right'' refer to the ordering of the vertices of $A$ and $B$ 
in the bipartite representation of $G$, as illustrated in \cref{fig:chain-graph-augmenting-path-difficult}. 
The abbreviation $\rmv$ ($\lmv$) stands for ``number of vertices \textbf{r}ight (\textbf{l}eft) of the \textbf{m}atched \textbf{v}ertex''.
We set $$\lmv(a,M) := \lmv(b,M)$$ and $$\rmv(b,M) := \rmv(a,M).$$
Finally, we define~$$\rmv(a_1,a_2,M) := \min_{a_1 \prec a' \prec a_2}\{\rmv(a',M)\}$$ for~$a_1, a_2 \in A$ and~$$\lmv(b_1,b_2,M) := \min_{b_2 \prec b' \prec b_1}\{\lmv(b',M)\}$$ for~$b_1, b_2 \in B$.
For example, in the graph displayed in \cref{fig:chain-graph-augmenting-path-difficult}, we have~$\lmv(b_2,b_3,M) = 2$ and~$\lmv(b_3,b_6,M) = 1$.

With these definitions, we can show a limit on the number of augmenting paths that can ``cross'' an edge in~$M_{G-X}$.

\begin{lemma}\label[lemma]{lem:min-lmv-rmv-upper-bounds-augm-paths}
	Let~$G=(A,B,E)$ be a chain graph and~$M$ be a maximum matching for~$G$ computed by \cref{alg:chain-graphs-linear}.
	Let~$a, b \in V$ with~$\{a,b\} \in M$.
	Then the number of vertex-disjoint alternating paths that (1) start and end with edges not in~$M$ and that (2) have endpoints left of~$a$ and right of~$b$ is at most~$\min\{\lmv(b,M),\rmv(a,M)\}$.
\end{lemma}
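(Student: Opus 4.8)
The plan is to turn the statement into a pure counting problem and bound the number of paths by exhibiting, for each of the two quantities $\rmv(a)$ and $\lmv(b)$, a vertex set that \emph{every} such path must meet. Since the paths are vertex-disjoint, meeting a common set $C$ immediately forces their number to be at most $|C|$ — this is only the trivial direction of Menger's theorem and needs no flow argument. First I would record the structural consequence of using the matching of \cref{alg:chain-graphs-linear}: writing $A=\{a_1,\dots,a_\alpha\}$ and $B=\{b_1,\dots,b_\beta\}$ in the orders $\prec$, there is a constant $c$ with $p+q=c$ for every matched edge $\{a_p,b_q\}\in M$; in particular, for the given edge $\{a,b\}$ with $a=a_i,\ b=b_j$ we have $i+j=c$, the $M$-partner of a matched $a_p$ is $b_{c-p}$, and that of a matched $b_q$ is $a_{c-q}$. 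Each path $P$ in question has one endpoint $a_{i'}\prec a$ (so $i'<i$) and one endpoint $b_{j'}\prec b$ (so $j'<j$), starts and ends with non-matching edges, and hence alternates between $A$- and $B$-vertices with every internal vertex matched.

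For the bound $\rmv(a)$ I would take $C_1:=\{b_q\in N(a): q<j\}$, which is exactly the set of neighbors of $a$ lying to the right of $b$, so $|C_1|=\rmv(a)$. To see that $P$ meets $C_1$, consider the first $B$-vertex $b_q$ on $P$ with $q<j$ (it exists, since the endpoint $b_{j'}$ qualifies), and let $a_p$ be its predecessor on $P$ via the non-matching edge. I claim $p\le i$: if $a_p$ is the start then $p=i'<i$; otherwise $a_p$ is internal, hence matched, with matching-predecessor $b_{c-p}$, so $p>i$ would give $c-p<j$ and make $b_{c-p}$ an earlier $B$-vertex of rank $<j$, contradicting minimality. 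From $p\le i$ we get $a_p\preceq a$, hence $N(a_p)\subseteq N(a)$ by the chain ordering, so $b_q\in N(a_p)\subseteq N(a)$; with $q<j$ this places $b_q\in C_1$.

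The symmetric bound $\lmv(b)$ uses $C_2:=\{a_p\in N(b): p<i\}$, the neighbors of $b$ left of $a$, with $|C_2|=\lmv(b)$; here I would instead track the \emph{last} $A$-vertex $a_p$ on $P$ with $p<i$ (which exists, since the start $a_{i'}$ qualifies) and its successor $b_q$ on $P$. If $b_q$ is the endpoint $b_{j'}$ then $q<j$; otherwise the $A$-vertex following $b_q$ via the matching edge is $a_{c-q}$, which must have rank $\ge i$ by maximality of $a_p$, giving $c-q\ge i$ and thus $q\le j$. In either case $b_q\preceq b$, so $N(b_q)\subseteq N(b)$ and $a_p\in N(b_q)\subseteq N(b)$; with $p<i$ this gives $a_p\in C_2$. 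Combining the two, the number of vertex-disjoint paths is at most $|C_1|=\rmv(a)$ and at most $|C_2|=\lmv(b)$, hence at most $\min\{\lmv(b),\rmv(a)\}$.

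I expect the main obstacle to be precisely the situation where a ``crossing'' from the $a_{i'}$-side to the $b_{j'}$-side could, a priori, happen far out through vertices $\succ a$ or $\succ b$, which would wreck the naive attempt to read off a neighbor of $a$ or $b$ directly at an endpoint. The resolution — and the one place where the special matching of \cref{alg:chain-graphs-linear} is indispensable — is the rank identity $p+q=c$: it makes each matching step move ranks complementarily, which is exactly what pins the first (respectively last) crossing vertex to the controlled side ($a_p\preceq a$, respectively a successor $b_q\preceq b$). An arbitrary maximum matching would not obey this identity, so confirming that \cref{alg:chain-graphs-linear} really produces $p+q=c$ is the step I would verify most carefully before trusting the cut arguments above.
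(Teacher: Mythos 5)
Your proof is correct and follows essentially the same route as the paper's: both bound the paths by the cut set of neighbors of~$b$ left of~$a$ (respectively, neighbors of~$a$ right of~$b$), exploiting the chain ordering together with the ``parallel'' rank structure of the matching from \cref{alg:chain-graphs-linear}. The only differences are presentational: you verify the rank identity and trace the first/last crossing vertex explicitly and prove both bounds directly, whereas the paper argues one case more tersely and invokes symmetry for the other.
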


\begin{proof}
	We prove the case~$\lmv(b,M) \le \rmv(a,M)$, that is, $\min \{\lmv(b,M), \allowbreak \rmv(a,M)\} = \lmv(b,M)$.
	The case~$\lmv(b,M) > \rmv(a,M)$ follows by symmetry (with switched roles of~$a$ and~$b$).
	Let~$\noAug$ denote the number of vertex-disjoint alternating paths from~$\{a' \in A \mid a' \prec a \}$ to~$\{b' \in B \mid b' \prec b\}$ such that the first and last edge are not in~$M$ (see \cref{fig:chain-graph-augmenting-path-difficult} for an example with~$\noAug = 1$ for~$a = a_4$ and~$b=b_4$).
	Furthermore, let~$a^b_1, \ldots, a^b_{\lmv(b,M)}$ be the neighbors of~$b$ that are to the left of~$a$, that is, $a^b_1 \prec a^b_2 \prec \ldots \prec a^b_{\lmv(b,M)} \prec a$. 
	Since~$G$ is a chain graph it follows that no vertex~$a' \in A$ with~$a' \prec a^b_1$ is adjacent to any vertex~$b' \in B$ with~$b' \preceq b$. 
	Furthermore, for any edge~$\{a',b'\} \in E$ with~$a \prec a'$ and~$b \prec b'$ it follows from the construction of~$M$ (see \cref{alg:chain-graphs-linear}) that~$\{a',b'\} \notin M$. 
	Hence, any of these alternating paths has to contain at least one vertex from~$a^b_1, \ldots, a^b_{\lmv(b,M)}$. 
	Since the alternating paths are vertex-disjoint it follows that~$\noAug \le \lmv(b,M)$. \qed
\end{proof}

From the previous lemma, we directly obtain the following.

\begin{lemma}\label[lemma]{lem:min-lmv-upper-bounds-augm-paths}
	Let~$G=(A,B,E)$ be a chain graph and let~$M$ be the maximum matching for~$G$ computed by \cref{alg:chain-graphs-linear}.
	Let~$a_1, a_2 \in A$ and $b_1, b_2 \in B$ with~$\{a_1,b_1\},\{a_2,b_2\} \in M$ with~$a_1 \prec a_2$.
	Then there are at most~$\lmv(b_1,b_2,M)$ vertex-disjoint alternating paths that (1) start and end with edges not in~$M$ and that (2) have endpoints left of~$a_1$ and right of~$b_2$.
\end{lemma}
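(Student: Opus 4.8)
The plan is to reduce the statement directly to \cref{lem:min-lmv-rmv-upper-bounds-augm-paths}, exploiting that $\lmv(b_1,b_2,M)$ is by definition a minimum over single matched edges, each of which already controls the relevant path count through the previous lemma. First I would record the geometric consequence of the parallel structure of $M$ produced by \cref{alg:chain-graphs-linear}: since the matched edges pair $a_{\alpha-\solSize+i}$ with $b_{\beta-i+1}$, the correspondence between matched $A$-vertices and matched $B$-vertices is order-reversing. Hence $a_1 \prec a_2$ forces $b_2 \prec b_1$, and moreover every vertex $b'$ with $b_2 \prec b' \prec b_1$ is matched: the matched $B$-vertices form the consecutive top block $b_{\beta-\solSize+1} \prec \cdots \prec b_\beta$, so any $b'$ squeezed strictly between two matched vertices is itself matched. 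This guarantees that $\lmv(b',M)$ is well defined for every term appearing in the minimum that defines $\lmv(b_1,b_2,M)$, and that each such $b'$ has a matched partner $a'$ with $a_1 \prec a' \prec a_2$.

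The core step is a monotonicity observation about the endpoint constraints. Fix an arbitrary matched edge $\{a',b'\} \in M$ with $b_2 \prec b' \prec b_1$ (equivalently $a_1 \prec a' \prec a_2$). Any alternating path $P$ counted in the present lemma has one endpoint left of $a_1$ (an $A$-vertex $a^\ast \prec a_1$) and one endpoint right of $b_2$ (recall that in the drawing $B$ is ordered right-to-left, so ``right of $b_2$'' means a $B$-vertex $b^\ast \prec b_2$). Because $a_1 \prec a'$ and $b_2 \prec b'$, this same path has its endpoints left of $a'$ and right of $b'$ as well. Consequently, every family of vertex-disjoint alternating paths satisfying conditions (1)--(2) of the present lemma is also a family of vertex-disjoint alternating paths of exactly the type counted in \cref{lem:min-lmv-rmv-upper-bounds-augm-paths} for the matched edge $\{a',b'\}$.

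Applying \cref{lem:min-lmv-rmv-upper-bounds-augm-paths} to the edge $\{a',b'\}$ then bounds the size of any such family by $\min\{\lmv(b',M),\rmv(a',M)\} \le \lmv(b',M)$. Since this bound holds simultaneously for every matched $b'$ in the open range $b_2 \prec b' \prec b_1$, the number of vertex-disjoint paths is at most $\min_{b_2 \prec b' \prec b_1}\lmv(b',M) = \lmv(b_1,b_2,M)$, which is the claim. If the range is empty (i.e.\ $b_1$ and $b_2$ are consecutive matched vertices), the minimum is vacuously $+\infty$ and there is nothing to prove.

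I expect the only delicate part to be bookkeeping rather than anything conceptually deep: pinning down the left/right conventions induced by the two oppositely directed orderings $\prec_A$ and $\prec_B$, and checking that the implications ``left of $a_1 \Rightarrow$ left of $a'$'' and ``right of $b_2 \Rightarrow$ right of $b'$'' are stated with the inequalities pointing the correct way. Once the order-reversal of the parallel matching and these conventions are fixed, the reduction to the previous lemma is immediate, which is precisely why the lemma can be advertised as following ``directly''.
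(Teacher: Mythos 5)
Your proof is correct and follows exactly the route the paper intends: the paper states that \cref{lem:min-lmv-upper-bounds-augm-paths} is ``directly obtained'' from \cref{lem:min-lmv-rmv-upper-bounds-augm-paths}, and your argument---applying that lemma to each matched edge $\{a',b'\}$ with $b_2 \prec b' \prec b_1$ after observing that the endpoint constraints only relax, then taking the minimum---is precisely the omitted justification. Your care with the order-reversing correspondence of the parallel matching and the left/right conventions is exactly the bookkeeping needed; nothing is missing.
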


\cref{lem:min-lmv-upper-bounds-augm-paths} states that the number of augmenting paths passing through the ``area'' between~$a_1$ and~$a_2$ is bounded. 
Using this, we want to replace this area by a gadget with~$O(\distPara)$ vertices.
To this end, we need further notation.
For each kept vertex~$v \in K$, we may also keep some bounded number of vertices to the right and to the left of~$v$.
We call these vertices the \emph{left buffer} (\emph{right buffer}) of~$v$.

\begin{definition}\label[definition]{def:buffer}
	Let~$G=(A,B,E)$ be a chain graph and let~$M$ be the maximum matching for~$G$ computed by \cref{alg:chain-graphs-linear}.
	Let~$a_1, a_2 \in A$ and $b_1, b_2 \in B$ with~$\{a_1,b_1\}, \{a_2,b_2\} \in M$ and~$a_1 \prec a_2$.
	Then the (at most)~$\lmv(b_1,b_2,M)$ vertices to the right of~$a_1$ form the \emph{right buffer}~$B^r(a_1,M)$ of~$a_1$; formally, 
	\begin{align*}
		B^r(a_1,M) := \{a \in A \mid {}& a_1 \prec a \wedge  \\ & |\{a' \in A \mid a_1 \prec a' \prec a\}| \le \min\{\lmv(b_1,b_2,M),\distPara\}\}.
	\end{align*}
	Analogously,
	\begin{align*}
		B^\ell(a_2,M) 	:= \{a \in A \mid {}& a \prec a_2 \wedge \\ & |\{a' \in A \mid a \prec a' \prec a_2\}| \le \min\{\lmv(b_1,b_2,M),\distPara\}\}, \\
		B^r(b_1,M) 	:= \{b \in B \mid {}& b \prec b_1 \wedge \\ & |\{b' \in B \mid b \prec b' \prec b_1\}| \le \min\{\lmv(b_1,b_2,M),\distPara\}\}, \\
		B^\ell(b_2,M) 	:= \{b \in B \mid {}& b_2 \prec b \wedge \\ & |\{b' \in B \mid b_2 \prec b' \prec b\}| \le \min\{\lmv(b_1,b_2,M),\distPara\}\}.
	\end{align*}
\end{definition}

Note that in~\cref{def:buffer} each of the sets $B^r(a_1,M)$, $B^\ell(a_2,M)$, $B^r(b_1,M)$, and $B^\ell(b_2,M)$ depends on all four vertices $a_1,a_2,b_1,b_2$; we omit these dependencies from the names for the sake of brevity.

The basic idea is now to delete vertices ``outside'' these buffers.
See \cref{fig:chain-graph-buffer-replacement} for an illustrating example of the following data reduction rule formalizing this idea.
\begin{figure}[t!]
	\centering
	\begin{tikzpicture}[scale=1]
		\tikzstyle{knoten}=[circle,draw,minimum size=17pt,inner sep=2pt,fill=white]

		\def\n{10}
		\def\xScale{1}
		
		\foreach \i in {1,...,\n} {
			\node[knoten] (a\i) at (\xScale * \i - \xScale,2.5) {};
			\node[knoten] (b\i) at (\xScale * \i + \xScale,0) {};
		}

		\foreach \x / \name in {3/{$a_1$},10/{$a_2$},4/{$a^r_1$},5/{$a^r_2$},8/{$a^\ell_1$},9/{$a^\ell_2$}} {
			\node at (\xScale * \x - \xScale,2.5) {\name};
		}
		\foreach \x / \name in {1/{$b_1$},8/{$b_2$},2/{$b^r_1$},3/{$b^r_2$},6/{$b^\ell_1$},7/{$b^\ell_2$}} {
			\node at (\xScale * \x + \xScale,0) {\name};
		}

		\foreach[count=\nr] \x in {1,2,3,5,5,6,7,8,9,10}
			\foreach \y in {1, ..., \x} {
				\path (a\nr) edge[-,draw=black!50] (b\y);
			}
		
		\foreach \x in {3, ..., \n}
		{
			\pgfmathtruncatemacro{\y}{\x - 2};
			\path (a\x) edge[-,very thick] (b\y);
		}
		\begin{pgfonlayer}{background}
			\draw[edge] (a1.center) -- (b1.center) %
			\foreach \i [remember=\i as \xi (initially 1)] in {3,5,7,9} {%
				-- (b\xi.center) -- (a\i.center) -- (b\i.center)%
			}
			-- (b9.center);
			\draw[edge] (a2.center) -- (b2.center) %
			\foreach \i [remember=\i as \xi (initially 2)] in {4,6,8,10} {%
				-- (b\xi.center) -- (a\i.center) -- (b\i.center)%
			}
			-- (b10.center);

			\draw[rounded corners,dashed] ($(a4.north west)+(-0.25,0.25)$) rectangle ($(a5.south east)+(0.25,-0.25)$);
			\draw[rounded corners,dashed] ($(a6.north west)+(-0.25,0.25)$) rectangle ($(a7.south east)+(0.25,-0.25)$);
			\draw[rounded corners,dashed] ($(a8.north west)+(-0.25,0.25)$) rectangle ($(a9.south east)+(0.25,-0.25)$);
			\node at ($(a4.north east)+(0.4,0.4)$) {$B^r(a_1,M_{G-X})$};
			\node at ($(a6.north east)+(0.4,0.4)$) {$A'_D$};
			\node at ($(a8.north east)+(0.4,0.4)$) {$B^\ell(a_{2},M_{G-X})$};
			
			\draw[rounded corners,dashed] ($(b2.north west)+(-0.25,0.25)$) rectangle ($(b3.south east)+(0.25,-0.25)$);
			\node at ($(b2.south east)+(0.4,-0.5)$) {$B^r(b_1,M_{G-X})$};

			\draw[rounded corners,dashed] ($(b6.north west)+(-0.25,0.25)$) rectangle ($(b7.south east)+(0.25,-0.25)$);
			\node at ($(b6.south east)+(0.4,-0.5)$) {$B^\ell(b_2,M_{G-X})$};
			\draw[rounded corners,dashed] ($(a4.north west)+(-0.4,0.8)$) rectangle ($(a9.south east)+(0.4,-0.4)$);
			\node at ($(a6.north east)+(0.4,1)$) {$A'$};
		\end{pgfonlayer}
	\end{tikzpicture}
	
	\bigskip
	
	\begin{tikzpicture}[scale=1]
		\tikzstyle{knoten}=[circle,draw,minimum size=17pt,inner sep=2pt,fill=white]

		\def\n{8}
		\def\xScale{1}
		
		\foreach \i in {1,...,5} {
			\node[knoten] (a\i) at (\xScale * \i - \xScale,2.5) {};
		}
		\foreach \i in {1,...,3} {
			\node[knoten] (b\i) at (\xScale * \i + \xScale,0) {};
		}

		\foreach \i in {6,...,\n} {
			\pgfmathtruncatemacro{\nr}{\i + 2}
			\node[knoten] (a\i) at (\xScale * \nr - \xScale,2.5) {};
		}
		\foreach \i in {4,...,\n} {
			\pgfmathtruncatemacro{\nr}{\i + 2}
			\node[knoten] (b\i) at (\xScale * \nr + \xScale,0) {};
		}

		\foreach \x / \name in {3/{$a_1$},10/{$a_2$},4/{$a^r_1$},5/{$a^r_2$},8/{$a^\ell_1$},9/{$a^\ell_2$}} {
			\node at (\xScale * \x - \xScale,2.5) {\name};
		}
		\foreach \x / \name in {1/{$b_1$},8/{$b_2$},2/{$b^r_1$},3/{$b^r_2$},6/{$b^\ell_1$},7/{$b^\ell_2$}} {
			\node at (\xScale * \x + \xScale,0) {\name};
		}

		\foreach[count=\nr] \x in {1,2,3,4,5,6,7,8}
			\foreach \y in {1, ..., \x} {
				\path (a\nr) edge[-,draw=black!50] (b\y);
			}

		\foreach \x in {3, ..., 8}
		{
			\pgfmathtruncatemacro{\y}{\x - 2};
			\path (a\x) edge[-,very thick] (b\y);
		}
		\begin{pgfonlayer}{background}
			\draw[edge] (a1.center) -- (b1.center) %
			\foreach \i [remember=\i as \xi (initially 1)] in {3,5,7} {%
				-- (b\xi.center) -- (a\i.center) -- (b\i.center)%
			}
			-- (b7.center);
			\draw[edge] (a2.center) -- (b2.center) %
			\foreach \i [remember=\i as \xi (initially 2)] in {4,6,8} {%
				-- (b\xi.center) -- (a\i.center) -- (b\i.center)%
			}
			-- (b8.center);
		\end{pgfonlayer}
	\end{tikzpicture}
	\caption{
		An example for the application of \cref{rule:BufferSizeLimit}.
		Top: A part of a chain graph~$G-X$ is displayed and the thick edges indicate parts of the maximum matching~$M_{G-X}$.
		There are two vertex-disjoint augmenting paths from the two vertices left of~$a_1$ to the two vertices right of~$b_2$ highlighted by gray background.
		Moreover, $\lmv(b_1,b_2,M_{G-X}) = 2$.
		Bottom: The part of the graph after applying \cref{rule:BufferSizeLimit} and the corresponding two vertex-disjoint augmenting paths.
		Note that the only edges that are in the graph below but not above are the edges from vertices in~$B^r(a_1,M_{G-X})$ to~$B^\ell(b_2,M_{G-X})$.
	}
	\label{fig:chain-graph-buffer-replacement}
\end{figure}

\begin{rrule}\label{rule:BufferSizeLimit}
	Let~$(G,\solSize)$ be an instance reduced with respect to \cref{rule:MatchingSizeBounds}.
	Let~$a_1, a_2 \in K \cap A$ with $a_1 \prec a_2$ and~$\{a_1,b_1\},\{a_2,b_2\} \in M_{G-X}$ such that~$A' := \{a \in A \mid a_1 \prec a \prec a_2\}$ is of size at least~$2\cdot\min\{\lmv(b_1,b_2,M),\distPara\} + 1$ and~$A' \cap K = \emptyset$.
	Then delete all vertices in~$A'_D := A' \setminus (B^r(a_1,M_{G-X}) \cup B^\ell(a_2,M_{G-X}))$ and their matched neighbors in~$B$, add all edges between the vertices in the right buffer of~$a_1$ and the vertices in the left buffer of~$b_2$, and decrease~$\solSize$ by~$|A'_D|$.
\end{rrule}

\begin{lemma}\label[lemma]{lem:rule-BufferSizeLimit}
	\cref{rule:BufferSizeLimit} is correct and can be exhaustively applied in~$O(n+m)$ time. 
\end{lemma}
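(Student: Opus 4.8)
The plan is to reduce correctness to an invariance statement about augmenting paths and then use \cref{lem:min-lmv-upper-bounds-augm-paths} as the capacity bound. Write $\nu(H)$ for the maximum matching size of a graph~$H$. For \emph{any} matching~$N$ in~$H$, the symmetric difference of~$N$ with a maximum matching decomposes into pairwise vertex-disjoint augmenting paths (together with alternating cycles and even paths), so $\nu(H) = |N| + \phi(H,N)$, where $\phi(H,N)$ is the maximum number of pairwise vertex-disjoint augmenting paths for~$N$ in~$H$. Let $M := M_{G-X}$ be the parallel matching of \cref{alg:chain-graphs-linear}, let $G'$ be the reduced graph, let~$r$ be the number of deleted matched pairs, and set $t := \min\{\lmv(b_1,b_2,M),\distPara\}$. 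The matching $M' := M \setminus \{\text{deleted pairs}\}$ is a valid matching of~$G'$ with $|M'| = |M| - r$, since the rule only deletes matched pairs and adds edges and thus destroys no surviving edge of~$M$. Hence $\nu(G) = |M| + \phi(G,M)$ and $\nu(G') = |M| - r + \phi(G',M')$, so $(G,\solSize)$ is equivalent to $(G',\solSize-r)$ as soon as $\phi(G,M) = \phi(G',M')$. By \cref{obs:MatchingSizeGandG-X} applied with the deletion set~$X$ we have $\phi(G,M) = \nu(G) - |M| \le \distPara$, which is exactly why capping the buffers and the gadget at $\min\{\cdot,\distPara\}$ loses nothing.

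Next I would establish $\phi(G,M) = \phi(G',M')$. Since~$M$ leaves every vertex of~$X$ free, every augmenting path decomposes into maximal segments lying inside the chain graph~$G-X$ (resp.\ $G'-X$), joined through vertices of~$X$; each such segment begins and ends with a non-matching edge (the edges touching~$X$ and the edges at free endpoints are non-matching). Outside the processed region~$R$ (the deleted middle $A$-vertices together with their $B$-partners) the two graphs and the two matchings coincide, so segments avoiding~$R$ are common to~$G$ and~$G'$. It therefore remains to show that~$R$ in~$G$ and the inserted gadget in~$G'$ carry the same number of pairwise vertex-disjoint \emph{crossing} segments, namely~$t$: a crossing segment enters near~$a_1$ and leaves near~$b_2$, i.e.\ is an alternating path with one endpoint left of~$a_1$ and the other right of~$b_2$ in the sense of \cref{def:buffer}. \cref{lem:min-lmv-upper-bounds-augm-paths} bounds the number of vertex-disjoint such segments in~$G$ by $\lmv(b_1,b_2,M)$, and $\phi(G,M)\le \distPara$ bounds it by~$\distPara$, hence by~$t$; conversely, as $\lmv(b',M)\ge t$ for every~$b'$ strictly between~$b_2$ and~$b_1$, the parallel structure of~$M$ lets one construct $t$ vertex-disjoint crossings explicitly. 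On the $G'$ side, the gadget is the complete bipartite graph between the right buffer~$B^r(a_1,M)$ and the left buffer~$B^\ell(b_2,M)$, each of size at most~$t$; using the kept partners~$B^r(b_1,M)$ and~$B^\ell(a_2,M)$, a parity check (each added edge is non-matching for~$M'$ and is flanked by the two matching edges of its endpoints) shows that the gadget realizes exactly~$t$ vertex-disjoint crossings and no more. Splicing crossing segments into and out of either model while leaving the $R$-avoiding segments untouched converts a maximum augmenting-path system for~$M$ in~$G$ into one for~$M'$ in~$G'$ and vice versa, yielding $\phi(G,M)=\phi(G',M')$.

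For the running time I would apply the rule to all gaps in a single linear scan of~$A$ and~$B$ in the order~$\prec$; processing a gap between two consecutive kept vertices $a_1 \prec a_2$ creates no new kept vertex and does not affect other gaps, so one pass exhausts the rule. Maintaining a running minimum of the $\lmv$-values (and symmetrically the $\rmv$-values) while scanning identifies~$t$ and both buffers for each gap in time linear in the gap, and the deletions are likewise linear in the gaps, totalling $O(n+m)$. Each gadget is a complete bipartite graph on two buffers of size at most $\min\{\lmv(b_1,b_2,M),\distPara\}$, whose vertices are among the kept vertices and which is built during the same scan, so these edges are inserted within the linear-time budget.

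The hard part is the two-sided capacity argument of the second paragraph: one must verify that the gadget neither loses an augmenting path (every crossing through~$R$ can be rerouted through the complete bipartite gadget, vertex-disjointly and with the correct alternation, and without colliding with the $R$-avoiding segments that may also touch the kept buffer vertices) nor gains one (the gadget cannot support more than~$t$ vertex-disjoint augmenting crossings, matching the upper bound of \cref{lem:min-lmv-upper-bounds-augm-paths} on~$R$). Getting the parities of these reroutings right, crucially exploiting that~$M$ is the parallel matching of \cref{alg:chain-graphs-linear} so that matched pairs are ``sorted'' and the buffers sit immediately next to~$a_1$ and~$b_2$, is where the real care is needed.
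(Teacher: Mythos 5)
Your overall strategy is the same as the paper's: reduce everything to counting vertex-disjoint augmenting paths for the parallel matching, use \cref{lem:min-lmv-upper-bounds-augm-paths} to cap the number of segments crossing the deleted region by $\min\{\lmv(b_1,b_2,M),\distPara\}$, and reroute those crossings through the buffers and the new complete bipartite gadget. The running-time argument is also essentially the paper's. However, the proof has a genuine gap: the entire content of the lemma lives in the step you label ``the hard part'' and then do not carry out. Asserting that both the deleted region and the gadget ``carry exactly $t$ vertex-disjoint crossings'' does not by itself yield $\phi(G,M)=\phi(G',M')$, because crossing segments and $R$-avoiding segments compete for the same interface vertices (the kept buffer vertices and the vertices adjacent to the region), so an equality of isolated capacities does not automatically splice into an equality of maximum augmenting-path systems. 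The paper has to do real work here, in both directions separately.

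Concretely, two ingredients are missing. First, for the forward direction you need each augmenting path to cross the region at most once, with a well-defined last vertex $a^{P}$ left of $a_1$ and first vertex $b^{P}$ right of $b_2$, and you need $a^{P}$ to actually be adjacent to a \emph{specific} buffer vertex. The paper gets this by working with the overlap-maximizing matching $M^{\max}_G(M_{G-X})$ so that \cref{obs:shortest-augmenting-paths} forces the paths to be monotone in $\prec$, and then orders the crossing paths as $a^{P_1}\prec\dots\prec a^{P_t}$ so that the $i$-th path's exit vertex has at least $i$ neighbors right of $b_1$ and can be routed through the $i$-th buffer vertex; your generic maximum family of disjoint augmenting paths need not be monotone, so ``enters near $a_1$ and leaves near $b_2$'' is not yet justified. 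Second, for the backward direction the paper does not argue via a capacity bound on the gadget at all: it takes a maximum matching of $G'$, restores the perfect matching on the deleted pairs, and for every gadget edge used that is not in $E$ it \emph{constructs} an augmenting path through the deleted region by an explicit iterative procedure whose termination and pairwise disjointness rely on every $b$ with $b_2\prec b\prec b_1$ having at least $\lmv(b_1,b_2,M_{G-X})$ neighbors left of its matched partner. Your ``parity check'' replaces this construction with an unproven claim. Until these two reroutings are written out (and shown not to collide with the segments that avoid the region), the proposal is a correct plan rather than a proof.
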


\begin{proof}
	We first introduce some notation and provide some general observations.
	Then we show the correctness in two separate claims.
	Finally, we discuss the running time.
	
	Let~$a_1$, $a_2$, $b_1$, and~$b_2$ be as stated in~\cref{rule:BufferSizeLimit}.
	Denote by~$A'$ (resp.~$B'$) the set of vertices between~$a_1$ and~$a_2$ (resp.~between~$b_1$ and~$b_2$).
	Further denote by~$A'_D \subseteq A'$ and~$B'_D \subseteq B'$ the sets of deleted vertices.
	Note that~$|A'| = |B'|$ and~$|A'_D| = |B'_D|$ since~$M_{G-X}$ was produced by \cref{alg:chain-graphs-linear}.
	Denote the vertices in the buffers of~$a_1, a_2, b_1$, and~$b_2$ by~$$B^x(y_z,M_{G-X}) := \{y^x_1, \ldots, y^x_{\min\{\lmv(b_1,b_2,M_{G-X}),\distPara\}}\}$$ for~$x \in \{r,\ell\}, y \in \{a,b\}, z \in [2]$, and~$x = r \iff z = 1$ (see \cref{fig:chain-graph-buffer-replacement} for examples of the concrete variable identifier).

	Since the input instance is reduced with respect to \cref{rule:MatchingSizeBounds}, it follows that~$\solSize-\distPara \le |M_{G-X}| < \solSize$.
	Denote by~$M_{G'-X} := M_{G-X} \cap E'$ the matching obtained from~$M_{G-X}$ by deleting all edges not in the reduced graph~$G'$.
	Recall that~$\solSize' = \solSize - |A'_D|$. %
	We next show in~\cref{claim:chain-1,claim:chain-2} that the input instance~$(G=(V,E),\solSize)$ is a yes-instance if and only if the produced instance~$(G'=(V',E'),\solSize')$ is a yes-instance. 
	Before we present these two claims, observe that there is a perfect matching between the vertices in~$A'_D$ and~$B'_D$, and thus
	\begin{align}
		\solSize - |M_{G-X}| = \solSize' - |M_{G'-X}|. \label{eq:matching-size-diff}
	\end{align}
	
	\begin{myclaim}\label[myclaim]{claim:chain-1}
		If~$(G,\solSize)$ is a yes-instance, then~$(G',\solSize')$ is a yes-instance.
	\end{myclaim}

	\begin{proofofclaim}
		Recall that~$M^{\max}_G(M_{G-X})$ is a maximum matching for~$G$ minimizing the size of~$M_{G-X} \bigtriangleup M^{\max}_G(M_{G-X})$.
		Since~$(G,\solSize)$ is a yes-instance it holds that~$|M^{\max}_G(M_{G-X})| \ge \solSize$.
		For brevity we set~$ G^M := G(M_{G-X},M^{\max}_G(M_{G-X})) = (V, M_{G-X} \bigtriangleup M^{\max}_G(M_{G-X}))$. 
		Note that $G^M$ is a graph that only contains odd-length paths (see \cref{sec:prelim}). 
		We will show that there are as many vertex-disjoint augmenting paths for~$M_{G'-X}$ in~$G'$ as there are paths in~$G^M$. 
		This will show that~$G'$ contains a matching of size
		\begin{align*}
			|M_{G'-X}| + |M^{\max}_G(M_{G-X})| - |M_{G-X}| 	& \ge |M_{G'-X}| + \solSize - |M_{G-X}| \\
															& \overset{\eqref{eq:matching-size-diff}}{=} |M_{G'-X}| + \solSize' - |M_{G'-X}| = \solSize'.		 
		\end{align*}
		To this end, observe that all paths that do not use vertices in~$V \setminus V' = A'_D\cup B'_D$ are also contained in~$G'$.
		Thus, consider the paths in~$G^M$ that use vertices in~$V \setminus V'$.
		Denote by~$\mathcal{P}^M$ the set of all paths in~$G^M$ using vertices in~$V \setminus V'$ and set~$t := |\mathcal{P}^M|$.
		Consider now an arbitrary~$i \in [t]$, and let~$P^M_i \in \mathcal{P}^M$. 
		Denote by~$v^i_1, v^i_2, \ldots, v^i_{p_i}$ the vertices in~$P^M_i$ in the corresponding order, that is, $v^i_1$ and~$v^i_{p_{i}}$ are the endpoints of~$P^M_i$ and we have~$\{v^i_{2j},v^i_{2j+1}\} \in M_{G-X}$ for all~$j \in [p_i/2]$.
		Observe that exactly one endpoint of~$P^M_i$ is in~$A$ and the other endpoint is in~$B$, since $P^M_i$ is an odd-length path.
		Assume without loss of generality that~$v^i_1 \in A$ and~$v^i_{p_i} \in B$.
		Thus, the vertices in~$P^M_i$ with odd (even) index are in~$A$ ($B$).

		We next show that for any two vertices~$v^i_j, v^i_\ell$ of~$P^M_i$ with~$j < \ell < p_i - 1$ and both being in~$A \setminus X$, it follows that~$v^i_j \prec v^i_\ell$.
		First, observe that if~$j = 1$, then~$v^i_j \in A \setminus X$ is a free vertex wrt.~$M_{G-X}$.
		Since~$v^i_\ell$ is matched wrt.~$M_{G-X}$ and since~$M_{G-X}$ is computed by \cref{alg:chain-graphs-linear}, it follows that~$v^i_1 \prec v^i_\ell$.
		Thus, assume that~$j > 1$ and~$\ell > 1$ (thus~$j \ge 3$ and~$j > 1$).
		Assume toward a contradiction that~$v^i_\ell \prec v^i_j$.
		Since~$\ell < p_i - 1$, we have~$v^i_{\ell + 1} \in B \setminus X$ and since~$G-X$ is a chain graph, it follows that~$\{v^i_j,v^i_{\ell+1}\} \in E$, a contradiction to \cref{obs:shortest-augmenting-paths}.
		Thus, $v^i_j \prec v^i_\ell$.
		
		We next show that the path~$P^M_i$ contains at least one vertex~$v^i_j$ left of~$a_1$ and at least one vertex~$v^i_\ell$ right of~$b_2$.
		Recall that~$M_{G-X}$ was computed by \cref{alg:chain-graphs-linear} and, thus, the free vertices are the smallest wrt.\ the ordering~$\prec$ (see also \cref{fig:chain-graph}).
		Thus, if one endpoint of~$P^M_i$ is in~$(A \cup B) \setminus X$, then this vertex is either~$v^i_1$ and left of~$a_1$ or it is~$v^i_{p_1}$ and right of~$b_2$.
		Thus, assume that the endpoints of~$P^M_i$ are in~$X$.
		We showed in the previous paragraph that~$v^i_3 \prec v^i_5 \prec \ldots \prec v^i_{p_1 - 3}$.
		Thus, we also have~$v^i_{p_i-2} \prec v^i_{p_i-4} \prec \ldots \prec v^i_2$ since~$M_{G-X}$ is computed by \cref{alg:chain-graphs-linear}.
		Since we assumed some vertices of~$P^M_i$ to be in~$V \setminus V'$, it follows that for at least one vertex~$v^i_{2j+1}$ it holds that~$a_1 \prec v^i_{2j+1} \prec a_2$.
		Furthermore since by assumption no vertex between~$a_1$ and~$a_2$ or between~$b_1$ and~$b_2$ is in~$K$, it follows that~$v^i_3 \prec a_1$ (since~$v^i_3 \in K$) and~$v^i_{p_i-2} \prec b_2$ (since~$v^i_{p_i-2} \in K$).

		For each~$i \in [t]$ denote by~$a^{P^M_i}$ the last vertex on the path~$P^M_i$ that is not right of~$a_1$, that is, $a^{P^M_i}$ is the vertex on~$P^M_i$ such that for each vertex~$a' \in A \setminus X$ that is in~$P^M_i$ it holds that~$a_1 \prec a'$ or~$a' \preceq a^{P^M_i}$. 
		It follows from the previous paragraph that~$a^{P^M_i}$ exists.
		Analogously to~$a^{P^M_i}$, for each~$i \in [t]$ denote by~$b^{P^M_i}$ the first vertex on the path~$P^M_i$ that is not left of~$b_2$, that is, $b^{P^M_i}$ is the vertex on~$P^M_i$ such that for each vertex~$b' \in B$ that is in~$P^M_i$ it holds that~$b_2 \prec b'$ or~$b' \preceq b^{P^M_i}$.
		This means that in~$G^M$ there is for each~$i \in [t]$ an alternating path from~$a^{P^M_i}$ to~$b^{P^M_i}$
		starting and ending with non-matched edges and all these paths are pairwise vertex-disjoint.
		We show that also in~$G'$ there are pairwise vertex-disjoint alternating paths from~$a^{P^M_i}$ to~$b^{P^M_i}$.
		Assume without loss of generality that~$a^{P^M_1} \prec a^{P^M_2} \prec \ldots \prec a^{P^M_t}$.
		Since in each path~$P^M_i$, $i \in [t]$, the successor of~$a^{P^M_i}$ is to the right of~$b_1$, it follows that 
		$a^{P^M_i}$ has at least~$i$ neighbors right of~$b_1$.
		Since the right buffer of~$b_1$ contains the~$\lmv(b_1,b_2,M_{G-X}) \ge t$ (see \cref{lem:min-lmv-upper-bounds-augm-paths}) vertices to the right of~$b_1$, we have~$\{a^{P^M_i},b^r_i\} \in E$.
		By symmetry, we have~$\{b^{P^M_i},a^\ell_i\} \in E$.
		Recall that~$M_{G-X}$ forms a perfect matching between~$B^r(b_1,M_{G-X})$ and~$B^r(a_1,M_{G-X})$ as well as between~$B^\ell(a_2,M_{G-X})$ and~$B^\ell(b_2,M_{G-X})$.
		Since \cref{rule:BufferSizeLimit} added all edges between~$B^r(a_1,M_{G-X})$ and~$B^\ell(b_2,M_{G-X})$ to~$E'$, it follows that each path~$P^M_i$ can be completed as follows: $a^{P^M_i},b^r_i,a^r_i,b^\ell_i,a^\ell_i,b^{P^M_i}$; note that exactly the edges~$\{b^r_i,a^r_i\}$ and~$\{b^\ell_i,a^\ell_i\}$ are in~$M_{G-X}$.
		Thus, each path in~$\mathcal{P}^M$ can be replaced by an augmenting path for~$M_{G'-X}$ in~$G'$ and all these augmenting paths are vertex-disjoint.
		Thus, there are as many augmenting paths for~$M_{G'-x}$ in~$G'$ as there are paths in~$G^M$ and therefore $(G',\solSize')$ is a yes-instance.
	\end{proofofclaim}

	\begin{myclaim}\label[myclaim]{claim:chain-2}
		If~$(G',\solSize')$ is a yes-instance, then~$(G,\solSize)$ is a yes-instance.
	\end{myclaim}
	\begin{proofofclaim}
		Let~$M_{G'}$ a maximum matching for~$G'$.
		Observe that~$|M_{G'}| \ge \solSize'$.
		We construct a matching~$M_G$ for~$G$ as follows.
		First, copy all edges from~$M_{G'} \cap E$ into~$M_G$.
		Second, add all edges from~$M_{G-X} \cap \binom{V \setminus V'}{2}$, that is, a perfect matching between~$A'_D$ and~$B'_D$ is added to~$M_G$.
		Observe that if all edges in~$M_{G'}$ are also in~$E$, then~$M_G$ is a matching of size~$\solSize$ in~$G$.
		Thus, assume that some edges in~$M_{G'}$ are not in~$E$, that is, $\{a^r_{i_1},b^\ell_{j_1}\}, \ldots, \{a^r_{i_t},b^\ell_{j_t}\} \in M_{G'}\setminus E$ for some~$t \in [\lmv(b_1,b_2,M_{G-X})]$.
		Observe that~$\solSize - |M_G| \le t$.
		Clearly, the vertices~$a^r_{i_1}, \ldots, a^r_{i_t},b^r_{j_1}, \ldots, b^r_{j_t}$ are free with respect to~$M_G$.

		We show that there are $t$~pairwise vertex-disjoint augmenting paths from~$\{a^r_{i_{1}},\ldots,a^r_{i_{t}}\}$ to~$\{b^\ell_{j_{1}},\ldots,b^\ell_{j_{t}}\}$; note, however, that these paths are not necessarily from~$a^r_{i_r}$ to $b^\ell_{j_r}$, where $r \in [t]$.
		To this end, recall that, by definition of~$\lmv(b_1,b_2,M_{G-X})$, each vertex~$b \in B$ with $b_2 \prec b \prec b_1$ has at least~$\lmv(b_1,b_2,M_{G-X})$ neighbors to the left of its matched neighbor.
		This allows us to iteratively find augmenting paths as follows:
		To create the~$q^\text{th}$ augmenting path~$P_q$ start with some vertex~$b^\ell_{j_{q}}$.
		Denote by~$v$ the last vertex added to~$P_q$ (in the beginning we have~$v = b^\ell_{j_{q}}$).
		If~$v \in A$, then add to~$P_q$ the neighbor matched to~$v$.
		If~$v \in B$, then do the following: if~$v$ is adjacent to a vertex~$a\in\{a^r_{i_{1}},\ldots,a^r_{i_{t}}\}$, then add~$a$ to~$P_q$, otherwise add the leftmost neighbor of~$v$ to~$P_q$.
		Repeat this process until~$P_q$ contains a vertex from~$\{a^r_{i_{1}},\ldots,a^r_{i_{t}}\}$.
		After we found~$P_q$, remove all vertices of~$P_q$ from~$G$.
		If~$q < t$, then continue with~$P_{q+1}$.
		Observe that any two vertices of~$P_q$ that are in~$A$ have at least~$\lmv(b_1,b_2,M_{G-X})-1$ other vertices of~$A$ between them (in the ordering of the vertices of $A$, see Figure~\ref{fig:chain-graph}). 
		Thus, after a finite number of steps, $P_q$ will reach a vertex in~$\{a^r_{i_{1}},\ldots,a^r_{i_{t}}\}$.
		Furthermore, it follows that after removing the vertices of~$P_q$ it holds that~$\lmv(b_1,b_2,M_{G-X})$ is decreased by exactly one: $P_q$~contains for each vertex~$b \in B$ at most one vertex among the~$\lmv(b_1,b_2,M_{G-X})$ neighbors of~$b$ that are directly to the left of its matched neighbor in $M_{G-X}$.
		Thus, in each iteration we have~$\lmv(b_1,b_2,M_{G-X}) > 0$.
		It follows that the above procedure constructs $t$~vertex-disjoint augmenting paths from~$\{a^r_{i_{1}},\ldots,a^r_{i_{t}}\}$ to~$\{b^\ell_{j_{1}},\ldots,b^\ell_{j_{t}}\}$.
		Hence, $G$ contains a matching of size~$\solSize$ and thus~$(G,\solSize)$ is a yes-instance.
	\end{proofofclaim}
	\medskip
	
	The correctness of the data reduction rule follows from the previous two claims.
	It remains to prove the running time.
	To this end, observe that the matching~$M_{G-X}$ is given. 
	Computing all degrees of~$G$ can be done in~$O(n+m)$ time.
	Also~$\lmv(v,M_{G-X})$ can be computed in linear time: For each vertex~$b \in B$ one has to check for each neighbor of $b$ whether it is to the left of~$b$'s matched neighbor and to adjust~$\lmv(b,M_{G-X})$ accordingly.
	Furthermore, computing $\lmv(b_1,b_2,M_{G-X})$ and removing the vertices in~$A'_D$ and~$B'_D$ can be done in~$O(\sum_{b \in B'} \deg(b) + \sum_{a \in A'} \deg(a))$~time.
	Thus, \cref{rule:BufferSizeLimit} can be exhaustively applied in~$O(n+m)$ time.  \qed 
\end{proof}

We next upper-bound the number of free vertices with respect to~$M_{G-X}$.
Let $$A_\text{free} := \{ a \in A \mid a \text{ is free with respect to }M_{G-X}\}$$ and~$$A^\distPara_\text{free} := \{a \in A_\text{free} \mid |\{a' \in A_\text{free} \mid a \preceq a'\}| \le \distPara\},$$ that is, $A^\distPara_\text{free}$ contains the~$k$ rightmost free vertices in~$A \setminus X$.
Observe that all vertices in~$A^\distPara_\text{free}$ are left of~$M_{G-X}$.
Analogously, denote by~$B^\distPara_\text{free}$ the set containing the~$k$ leftmost free vertices in~$B \setminus X$.

\begin{rrule}\label{rule:bound-free-vertices}
	Let~$(G,\solSize)$ be an instance reduced with respect to \cref{rule:MatchingSizeBounds}.
	Then delete all vertices in~$\left(A_\text{free} \setminus (K \cup A^\distPara_\text{free})\right) \cup \left(B_\text{free} \setminus (K \cup B^\distPara_\text{free})\right)$.
\end{rrule}

\begin{lemma}\label[lemma]{lem:rule-bound-free-vertices}
	\cref{rule:bound-free-vertices} is correct and can be applied in~$O(n+m)$ time.
\end{lemma}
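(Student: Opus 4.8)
The plan is to prove the two implications of the equivalence separately and then bound the running time, with essentially all the work concentrated in the ``yes in $G$ implies yes in $G'$'' direction, which I would settle by an exchange (rerouting) argument on augmenting paths that exploits the neighborhood-inclusion order of the chain graph together with the fact (\cref{obs:MatchingSizeGandG-X}) that at most~$\distPara$ vertex-disjoint augmenting paths suffice to turn~$M_{G-X}$ into a maximum matching of~$G$.

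Since \cref{rule:bound-free-vertices} only deletes vertices, the reduced graph~$G'$ is an induced subgraph of~$G$, so every matching of~$G'$ is a matching of~$G$; this already gives that a yes-instance~$(G',\solSize)$ yields a yes-instance~$(G,\solSize)$. For the converse I would take~$M_{G-X}$ to be the matching produced by \cref{alg:chain-graphs-linear} (so that all matched edges are parallel and the free vertices of each side are exactly the extremal ones in~$\prec$), and set~$M_G:=M^{\max}_G(M_{G-X})$. Then~$G(M_{G-X},M_G)$ consists of at most~$\distPara$ pairwise vertex-disjoint augmenting paths plus isolated vertices. The decisive structural observations are that every \emph{internal} vertex of such a path is matched by~$M_{G-X}$ and is therefore never touched by the rule (the rule deletes only free vertices), while every free endpoint lying in~$V\setminus X$ sits on the $A$- or $B$-side of the chain graph. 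The target is thus to reroute the free endpoints into~$A^\distPara_\text{free}\cup B^\distPara_\text{free}$ (the kept, largest-neighborhood free vertices on each side) without disturbing the path interiors; keeping the extra vertices in~$K$ is harmless.

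The core step is the rerouting on the $A$-side (the $B$-side being symmetric). Let~$f_1\prec\dots\prec f_t$ be the free $A$-side vertices; by \cref{def:chain} their neighborhoods are nested, $N(f_1)\subseteq\dots\subseteq N(f_t)$, and $A^\distPara_\text{free}=\{f_{t-\distPara+1},\dots,f_t\}$. Let~$Q_1,\dots,Q_q$ be the augmenting paths with a free $A$-endpoint, sorted so that these endpoints are $f_{\pi(1)}\prec\dots\prec f_{\pi(q)}$ with respective successors $c_1,\dots,c_q\in B$; here $q\le\distPara$ and, by distinctness, $\pi(r)\le t-q+r$ for every~$r$. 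Since the first edge of an augmenting path is unmatched we have $c_r\in N(f_{\pi(r)})$, and because $t-q+r\ge\pi(r)$ nestedness gives $c_r\in N(f_{t-q+r})$; hence reattaching~$Q_r$ to the new endpoint~$f_{t-q+r}$ again yields an augmenting path. The new endpoints $f_{t-q+1},\dots,f_t$ are distinct, free, and all lie in~$A^\distPara_\text{free}$, and being free they never occur in any path interior, so the rerouted paths remain pairwise vertex-disjoint. Performing the symmetric surgery on the free $B$-endpoints touches only free $B$-vertices and is therefore independent of the $A$-side surgery, so the two can be combined; augmenting~$M_{G-X}$ along the resulting paths produces a matching of size~$|M_G|\ge\solSize$ that uses only surviving vertices, i.e.\ a matching in~$G'$. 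This establishes the remaining implication.

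For the running time I would note that the rule needs only a constant number of linear passes: marking the vertices free with respect to the supplied~$M_{G-X}$, reading off~$A^\distPara_\text{free}$ and~$B^\distPara_\text{free}$ as the last~$\distPara$ free vertices in the already-computed order~$\prec$, and deleting every free vertex outside~$K\cup A^\distPara_\text{free}\cup B^\distPara_\text{free}$; all of this is $O(n+m)$. The step I expect to be the main obstacle is the rerouting: one must guarantee simultaneously that each replacement endpoint can reach the old successor---secured by the inequality $\pi(r)\le t-q+r$ together with neighborhood nestedness---and that the replacement endpoints collide neither with one another nor with the untouched interiors, for which the key point is that every replacement vertex is free with respect to~$M_{G-X}$ and hence cannot appear inside any augmenting path. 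A secondary point requiring care is verifying that the $A$-side and $B$-side reroutings are genuinely independent, which holds because they modify disjoint sets of free vertices on opposite sides of the bipartition.
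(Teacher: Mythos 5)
Your overall strategy---reroute the free endpoints of the at most $\distPara$ augmenting paths in $G(M_{G-X},M^{\max}_G(M_{G-X}))$ onto the kept extremal free vertices using neighborhood nestedness, noting that path interiors consist of matched vertices and are therefore untouched by the rule---is essentially the paper's, which performs the rerouting one endpoint at a time rather than via your simultaneous assignment $Q_r\mapsto f_{t-q+r}$. However, there is a genuine gap in the step ``$c_r\in N(f_{\pi(r)})$, and nestedness gives $c_r\in N(f_{t-q+r})$.'' The inclusion $N(f_{\pi(r)})\subseteq N(f_{t-q+r})$ from \cref{def:chain} holds only inside the chain graph $G-X$; it says nothing about neighbors in $X$. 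If the successor $c_r$ of the old endpoint on its augmenting path lies in $X$, the rerouted path need not exist. This actually bites in your scheme because you reroute \emph{all} paths with a free $A$-endpoint, including those ending at vertices of $K$: such an endpoint is by construction a small-degree neighbor of some $x\in X$, its path may continue directly into $x$, and the replacement vertex $f_{t-q+r}$ need not be adjacent to $x$ at all.

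The repair, which is what the paper does, is to reroute only the paths whose endpoint is actually deleted, i.e.\ lies in $A_\text{free}\setminus(K\cup A^\distPara_\text{free})$, and to argue that for such an endpoint the successor is guaranteed to be in $B\setminus X$: after \cref{rule:reduce-neighborhood-of-X} every $x\in X$ is adjacent only to vertices of $N^{V\setminus X}_\text{small}(x)\subseteq K$, so a vertex outside $X\cup K$ has no neighbor in $X$ (the paper phrases this via \cref{lem:XMatchedToSmallestDegree}, choosing $M_G$ so that no $x\in X$ is matched outside $N^{V\setminus X}_\text{small}(x)\cup X$). Only then does chain-graph nestedness yield the edge to the replacement endpoint. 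With at most $\distPara$ paths in total and $\distPara$ kept vertices in $A^\distPara_\text{free}$, a free replacement not used by any other path always exists; the rest of your argument (disjointness because replacements are free and hence never internal, independence of the two sides, linear-time implementation) then goes through.
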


\begin{proof}
	The running time is clear. It remains to show the correctness.
	Let~$(G,\solSize)$ be the input instance reduced with respect to \cref{rule:MatchingSizeBounds} and let~$(G',\solSize)$ be the instance produced by \cref{rule:bound-free-vertices}. 
	We show that deleting the vertices in~$A_\text{free} \setminus (K \cup A^\distPara_\text{free})$ yields an equivalent instance.
	It then follows from symmetry that deleting the vertices in~$B_\text{free} \setminus (K \cup B^\distPara_\text{free})$ yields also an equivalent instance.

	We first show that if~$(G,\solSize)$ is a yes-instance, then also the produced instance~$(G',\solSize)$ is a yes-instance.
	Let~$(G,\solSize)$ be a yes-instance and~$M_G$ be a maximum matching for~$G$.
	Clearly, $|M_G| \ge \solSize$.
	Observe that for each removed vertex~$a \in A_\text{free} \setminus (K \cup A^\distPara_\text{free})$ it holds that every vertex~$a' \in A^{k}_\text{free}$ is to the right of~$a$, that is, $a \prec a'$ and thus~$N_{G-X}(a) \subseteq N_{G-X}(a')$.
	Since~$(G,\solSize)$ is reduced with respect to \cref{rule:MatchingSizeBounds}, it follows that~$|M_{G-X}| \ge |M_G| - \distPara$.
	Thus, there exist at most~$\distPara$ augmenting paths for~$M_{G-X}$ in~$G$.
	If none of these augmenting paths ends in a vertex~$a \in A_\text{free} \setminus (K \cup A^\distPara_\text{free})$, then all augmenting paths exist also in~$G'$ and thus~$(G',\solSize)$ is a yes-instance.
	If one of these augmenting paths, say~$P$, ends in~$a$, then at least one vertex~$a' \in A^\distPara_\text{free}$ is not endpoint of any of these augmenting paths.
	Since~$a \notin K$, it follows from the definition of~$K$ 
	that the neighbor~$b$ of~$a$ on~$P$ is indeed in~$B \setminus X$.
	Since~$N_{G-X}(a) \subseteq N_{G-X}(a')$, it follows that~$\{a',b\} \in E$ and thus we can replace~$a$ by~$a'$ in the augmenting path.
	By exhaustively applying the above exchange argument, it follows that we can assume that none of the augmenting paths uses a vertex in~$A_\text{free} \setminus (K \cup A^\distPara_\text{free})$. 
	Thus, all augmenting paths are also contained in~$G'$ and hence the resulting instance~$(G',\solSize)$ is still a yes-instance.
	
	Finally observe that if~$(G',\solSize)$ is a yes-instance, then also~$(G,\solSize)$ is a yes-instance: any matching of size~$\solSize$ in~$G'$ is also a matching in~$G$ since~$G'$ is a subgraph of~$G$.
	\qed
\end{proof}

We have now all statements that we need to show our second main result.

\begin{theorem}
\label{thm:cubic-kernel-chain-graphs}
	\Match on bipartite graphs admits a cubic-vertex kernel with respect to the vertex deletion distance to chain graphs. 
	The kernel can be computed in linear time.
\end{theorem}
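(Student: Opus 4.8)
The plan is to run the data reduction rules of this section in sequence and then bound the number of surviving vertices by $O(\distPara^3)$. First I would use \cref{lem:linTimeApproxDistChainGraphs} to compute, in linear time, a set $X$ with $|X| \le 4\distPara$ and $G-X$ a chain graph, together with the parallel maximum matching $M_{G-X}$ from \cref{alg:chain-graphs-linear}. Applying \cref{rule:MatchingSizeBounds} either returns a trivial instance or guarantees $\solSize - \distPara \le |M_{G-X}| < \solSize$; in the latter case at most $\distPara$ augmenting paths separate $M_{G-X}$ from an optimal matching, which is the invariant that every subsequent rule exploits.

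I would then bound the interface between $X$ and the chain graph. Applying \cref{rule:reduce-neighborhood-of-X} (correct by \cref{lem:XMatchedToSmallestDegree}) leaves each $x \in X$ with at most $\distPara$ neighbors in $V \setminus X$, namely $N^{V\setminus X}_\text{small}(x)$. Since $|X| = O(\distPara)$, only $O(\distPara^2)$ edges leave $X$, so the set $K \subseteq V\setminus X$ formed by these neighbors together with their $M_{G-X}$-partners has size $O(\distPara^2)$. Next I would apply \cref{rule:bound-free-vertices}, which is correct and linear-time by \cref{lem:rule-bound-free-vertices}, so that, besides the vertices of $K$, only the at most $2\distPara$ free vertices of $A^{\distPara}_\text{free} \cup B^{\distPara}_\text{free}$ survive.

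The final shrinking is the exhaustive application of \cref{rule:BufferSizeLimit}. The $O(\distPara^2)$ matched vertices of $K$ split the linear orders of the two color classes of $G-X$ into $O(\distPara^2)$ consecutive areas free of further $K$-vertices, and \cref{rule:BufferSizeLimit} replaces each such area by its two buffers, leaving at most $2\min\{\lmv(b_1,b_2,M_{G-X}),\distPara\}+1 \le 2\distPara+1$ vertices per side. Correctness here is the conceptual crux, but it is already delivered by \cref{lem:rule-BufferSizeLimit}: \cref{lem:min-lmv-upper-bounds-augm-paths} caps the number of vertex-disjoint augmenting paths crossing an area by $\lmv(b_1,b_2,M_{G-X})$, and the freshly added buffer edges let each of them be rerouted. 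Multiplying the $O(\distPara^2)$ areas by the $O(\distPara)$ surviving vertices each, and adding $|X| = O(\distPara)$ and the $O(\distPara^2)$ delimiter and free vertices, yields a reduced graph on $O(\distPara^3)$ vertices. Every rule runs in $O(n+m)$ time and each is applied only a constant number of times (the degree-bounding must precede the buffer rule so that $K$ stays small), so the entire kernelization takes linear time, completing the proof.
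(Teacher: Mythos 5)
Your overall route coincides with the paper's: compute $X$ and the parallel matching, bound the interface with $X$ via $N^{V\setminus X}_{\text{small}}(x)$ to get $|K|=O(\distPara^2)$, trim the free vertices, and then let \cref{rule:BufferSizeLimit} compress the stretches of the chain graph delimited by $K$-vertices down to their buffers, giving $O(\distPara^2)\cdot O(\distPara)$ surviving vertices. Up to that point your accounting is sound.

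There is, however, one genuine gap: \cref{rule:BufferSizeLimit} requires \emph{two} vertices $a_1,a_2\in K\cap A$ with $a_1\prec a_2$ as delimiters, so it only compresses the regions lying \emph{between} consecutive $K$-vertices. The matched vertices lying outside the span of $K$ --- to the left of the leftmost $K$-vertex of $B$ and to the right of the rightmost $K$-vertex of $A$ --- are touched by none of your rules: they are matched, so \cref{rule:bound-free-vertices} does not remove them, and no pair of $K$-delimiters encloses them, so \cref{rule:BufferSizeLimit} never fires there. Since these boundary stretches can be arbitrarily long, your final count of $O(\distPara^3)$ vertices does not yet follow. The paper closes exactly this hole with an explicit gadget: it temporarily adds a matched pair $a_\ell,b_\ell$ at the extreme end of the orderings (each pinned by a fresh private neighbor $x^a_\ell,x^b_\ell$ placed in $X$, with $\solSize$ increased by two), declares them members of $K$ so that they serve as artificial delimiters, applies \cref{rule:BufferSizeLimit} once more to compress the boundary region to $O(\distPara)$ vertices, and then deletes the four auxiliary vertices and restores $\solSize$. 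You would need this step (or an analogous one-sided variant of the buffer rule, with its own correctness proof) to complete the size bound; the rest of your argument, including the linear running time, then goes through as in the paper.
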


\begin{proof}
	Let~$(G,\solSize)$ be the input instance with~$G = (V,E)$, the two partitions $V = A \cup B$, and~$X \subseteq V$ such that~$G-X$ is a chain graph.
	If~$X$ is not given explicitly, then use the linear-time factor-four approximation provided in \cref{lem:linTimeApproxDistChainGraphs} to compute~$X$.
	The kernelization is as follows:
	First, compute the matching~$M_{G-X}$ in linear time with \cref{alg:chain-graphs-linear}.
	Next compute the set of kept vertices~$K$.
	Then, apply \cref{rule:reduce-neighborhood-of-X,rule:MatchingSizeBounds,rule:BufferSizeLimit,rule:bound-free-vertices}.
	By \cref{lem:rule-BufferSizeLimit,lem:rule-bound-free-vertices}, this can be done in linear time.
	Let~$b^K_\ell$ the leftmost vertex in~$K \cap B$ and~$a^K_r$ rightmost vertex in~$A \cap K$.
	Let~$a^K_\ell$ and~$b^K_r$ be their matched neighbors.
	Since we reduced the instance with respect to \cref{rule:reduce-neighborhood-of-X}, we have~$|K| \le 2\distPara^2$.
	Moreover, as we reduced the instance with respect to \cref{rule:BufferSizeLimit}, it follows that the number of vertices between~$a^K_\ell$ and~$a^K_r$ as well as the number of vertices between~$b^K_\ell$ and~$b^K_r$ is at most~$4\distPara^3$, respectively.
	Furthermore, there are at most~$2\distPara$ free vertices left in~$V \setminus X$ since we reduced the instance with respect to \cref{rule:bound-free-vertices}.
	It remains to upper-bound the number of matched vertices left of~$b^K_\ell$ and right of~$a^K_r$ (see \cref{fig:chain-graph-borders}).
	
\begin{figure}
	\centering
	\begin{tikzpicture}[scale=1]
		\tikzstyle{knoten}=[circle,draw,minimum size=17pt,inner sep=2pt,fill=white]

		\def\n{10}
		\def\xScale{1.2}
		
		\foreach \i in {1,...,\n} {
			\node[knoten] (a\i) at (\xScale * \i,2.5) {};
			\node[knoten] (b\i) at (\xScale * \i,0) {};
		}

		\foreach[count=\nr] \x in {1,...,\n}
			\foreach \y in {1, ..., \x} {
				\path (a\nr) edge[-,draw=black!50] (b\y);
			}
		
		\foreach \x in {3, ..., \n}
		{
			\pgfmathtruncatemacro{\y}{\x - 2};
			\path (a\x) edge[-,very thick] (b\y);
		}
		
		\foreach \x / \name in {5/{$a^K_\ell$},8/{$a^K_r$}} {
			\node[] at (\xScale * \x,2.5) {\name};
		}
		\foreach \x / \name in {3/{$b^K_\ell$},6/{$b^K_r$}} {
			\node[] at (\xScale * \x,0) {\name};
		}

		\begin{pgfonlayer}{background}
			\draw[rounded corners,dashed] ($(a1.north west)+(-0.25,0.25)$) rectangle ($(a2.south east)+(0.25,-0.25)$);
			\node at ($(a1.north east)+(0.4,0.4)$) {$\le 2k$ free vertices};
			\draw[rounded corners,dashed] ($(a5.north west)+(-0.25,0.25)$) rectangle ($(a8.south east)+(0.25,-0.25)$);
			\node at ($(a6.north east)+(0.4,0.4)$) {$\le 4k^3$ vertices};
			
			\draw[rounded corners,dashed] ($(b9.north west)+(-0.25,0.25)$) rectangle ($(b10.south east)+(0.25,-0.25)$);
			\node at ($(b9.south east)+(0.4,-0.5)$) {$\le 2k$ free vertices};
			\draw[rounded corners,dashed] ($(b3.north west)+(-0.25,0.25)$) rectangle ($(b6.south east)+(0.25,-0.25)$);
			\node at ($(b4.south east)+(0.4,-0.5)$) {$\le 4k^3$ vertices};
		\end{pgfonlayer}
	\end{tikzpicture}
	\caption{
		Schematic representation of the situation in the proof of \cref{thm:cubic-kernel-chain-graphs}; only the chain graph~$G-X$ is shown.
		The vertices within the dashed boxes are bounded by the applications of \cref{rule:reduce-neighborhood-of-X,rule:bound-free-vertices,rule:MatchingSizeBounds,rule:BufferSizeLimit}.
		Moreover, the vertices~$a^K_r$, $a^K_\ell$, $b^K_r$, and~$b^K_\ell$ are all adjacent to vertices in~$X$.
		It remains to upper-bound the vertices right of~$a^K+r$ and left of~$b^K_\ell$.
	}
	\label{fig:chain-graph-borders}
\end{figure}

	Observe that all vertices left of~$b^K_\ell$ are matched with respect to~$M_{G-X}$.
	If there are more than~$2\distPara$ vertices to the left of~$b^K_\ell$, then do the following:
	Add four vertices  $a_\ell, b_\ell, x^a_\ell, x^b_\ell$ to~$V$.
	The idea is that~$\{a_\ell, b_\ell\}$ should be an edge in~$M_{G-X}$ such that~$a_\ell \in A$ and~$b_\ell \in B$ are in~$K$ and there is no vertex left of~$b_\ell$.
	This means we add these vertices to simulate the situation where the leftmost vertex in~$B \setminus X$ is also in~$K$.
	To ensure that~$a_\ell$ and~$b_\ell$ are in~$K$ and that they are not matched with some vertices in~$G$, we add~$x^a_\ell$ and $x^b_\ell$ to~$X$ and make~$x^a_\ell$ respectively $x^b_\ell$ to their sole neighbors.
	In this way, we ensure that there is maximum matching in the new graph that is exactly two edges larger than the maximum matching in the old graph.
	In this new graph we can then apply \cref{rule:BufferSizeLimit} to reduce the number of vertices between~$b_\ell$ and~$b^K_\ell$.
	Formally, we add the following edges.
	Add~$\{a_\ell,x^a_\ell\}, \{b_\ell,x^b_\ell\}$ to~$E$. 
	Add all edges between~$b_\ell$ and the vertices in~$B \setminus X$.
	Let~$a$ be the rightmost vertex in~$A^\distPara_\text{free}$.
	Then, add edges between~$a_\ell$ and~$N_{G-X}(a)$.
	Set~$b' \prec b_\ell$ for each~$b' \in B\setminus X$, set~$a \prec a_\ell$ for each vertex~$a \in A_\text{free}$, and set~$a_\ell \prec a'$ for each matched vertex in~$A \setminus X$.
	Furthermore, add~$\{a_\ell, b_\ell\}$ to~$M_{G-X}$ and add~$a_\ell$ and~$b_\ell$ to~$K$.
	Finally, increase $\solSize$ by two.
	Next, apply \cref{rule:BufferSizeLimit} in linear time, then remove~$a_\ell, b_\ell, x^a_\ell, x^b_\ell$ and reduce~$\solSize$ by two.
	After this procedure, it follows that there are at most~$2\distPara$ vertices left of~$b^K_\ell$.
	If there are more than~$2\distPara$ vertices right of the rightmost vertex~$a^K_r$ in~$A \cap K$, then use the same procedure as above.
	Thus, the total number of vertices in the remaining graph is at most~$|X| + 2\distPara + 4\distPara^3 = O(\distPara^3)$.
	Furthermore, observe that adding and removing the four vertices as well as applying \cref{rule:BufferSizeLimit} can be done in linear time.
	Thus, the overall running time of the kernelization is~$O(n+m)$. \qed
\end{proof}

Applying an~$O(n^{2.5})$-time algorithm for \BipMatch~\cite{HK73} on the kernel yields the following.

\begin{corollary}\label[corollary]{cor:chain-graph}
	\BipMatch can be solved in~$O(k^{7.5} + n + m)$ time, where~$k$ is the vertex deletion distance to chain graphs.
\end{corollary}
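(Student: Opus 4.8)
The plan is to treat the corollary as a direct two-phase combination of \cref{thm:cubic-kernel-chain-graphs} with the $O(n^{2.5})$-time bipartite matching algorithm of Hopcroft and Karp~\cite{HK73}. Given an instance $(G,\solSize)$ of \BipMatch, I would first (if no deletion set is supplied) compute in linear time a set $X$ with $G-X$ a chain graph using the factor-$4$ approximation of \cref{lem:linTimeApproxDistChainGraphs}, and then invoke \cref{thm:cubic-kernel-chain-graphs} to produce, in $O(n+m)$ time, an equivalent instance $(G',\solSize')$ whose graph $G'$ has $N = O(k^3)$ vertices. The second phase runs the Hopcroft--Karp algorithm on $G'$ to decide (and, if desired, exhibit) a matching of size $\solSize'$.

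The running-time bound then follows by adding the costs of the two phases. The kernelization costs $O(n+m)$ by \cref{thm:cubic-kernel-chain-graphs}. On the kernel, whose vertex count is $N = O(k^3)$, the $O(N^{2.5})$-time algorithm of~\cite{HK73} runs in $O\bigl((k^3)^{2.5}\bigr) = O(k^{7.5})$ time; since $G'$ has at most $O(N^2) = O(k^6)$ edges, the $O(N^{2.5})$ bound already dominates any edge-dependent term. Summing the two phases gives the claimed $O(k^{7.5} + n + m)$. Using only a $4$-approximate $X$ changes the kernel size to $O((4k)^3) = O(k^3)$, so the extra constant is absorbed into the $O$-notation and the bound is unaffected.

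Correctness is immediate from the equivalence guaranteed by \cref{thm:cubic-kernel-chain-graphs}: a maximum matching of $G'$ has size $\solSize'$ if and only if $G$ has one of size $\solSize$, so the decision is answered correctly. To report an actual maximum matching of $G$ rather than merely its cardinality, I would reverse the applied reduction rules (\cref{rule:MatchingSizeBounds,rule:BufferSizeLimit,rule:bound-free-vertices}), lifting an optimal matching of $G'$ back to one of $G$ by reinstating the deleted matched pairs and re-routing augmenting-path segments through the buffers; this reconstruction can be organized to run within the $O(n+m)$ budget of the first phase, and the cases where \cref{rule:MatchingSizeBounds} already returns a trivial instance are handled by reading off a matching of size $\solSize$ directly from $M_{G-X}$. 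The \emph{main obstacle} here is not combinatorial---all the structural work resides in \cref{thm:cubic-kernel-chain-graphs}---but lies in two pieces of accounting: verifying that the $N^{2.5}$ cost on $N = O(k^3)$ vertices yields exactly the $k^{7.5}$ exponent and dominates the edge term, and confirming that lifting the kernel solution back to $G$ does not inflate the additive $O(n+m)$ term.
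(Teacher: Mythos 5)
Your proposal is correct and is essentially identical to the paper's (implicit) proof: the corollary follows directly by computing the linear-time $O(k^3)$-vertex kernel of \cref{thm:cubic-kernel-chain-graphs} and then running the $O(n^{2.5})$-time Hopcroft--Karp algorithm on it, giving $O\bigl((k^3)^{2.5}\bigr) = O(k^{7.5})$ plus the $O(n+m)$ kernelization cost. Your additional remarks on edge counts, the factor-$4$ approximation constant, and lifting a matching back to $G$ are sound but go beyond what the paper needs, since it works with the decision version.
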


Using the randomized~$O(n^{\omega})$-time algorithm for \Match~\cite{MS04}, one would obtain a randomized algorithm with running time~$O(k^{3\omega} + n + m)$.
Here, $\omega < 2.373$ is the matrix multiplication coefficient, that is, two $n \times n$ matrices can be multiplied in~$O(n^\omega)$ time.

\section{Conclusion}
We focussed on kernelization results for unweighted (\textsc{Bipartite}) \Match.
There remain numerous challenges for future research as discussed in the second part of this concluding section. 
First, however, let us discuss the closely connected issue of FPTP~algorithms for \Match{}. 
There is a generic augmenting path-based approach to provide FPTP algorithms for \Match:
Note that one can find an augmenting path in linear time~\cite{Blu90,GT91,MV80}. 
Now the solving algorithm for \Match{} parameterized by some vertex deletion distance~$k$ works as follows:
\begin{enumerate}
	\item Use a constant-factor linear-time (approximation) algorithm to compute a vertex set~$X$ such that~$G-X$ is a ``trivial'' graph (where \Match is linear-time solvable).
	\item Compute in linear time an initial maximum matching~$M$ in~$G-X$.
	\item Start with~$M$ as a matching in~$G$ and increase the size at most~$|X| = \distPara$ times to obtain in~$O(\distPara \cdot (n+m))$ time a maximum matching for~$G$.
\end{enumerate}
From this we can directly derive that
\Match can be solved in~$O(k(n+m))$ time, where~$k$ is one of the following parameters: feedback vertex number, feedback edge number, and vertex cover number. 
Moreover, \BipMatch can be solved in~$O(k(n+m))$ time, where~$k$ is the vertex deletion distance to chain graphs. 
Using our kernelization results, the multiplicative dependence of the running time on parameter~$k$ can now be made an additive one.
For instance, in this way the running time for \textsc{Bipartite Matching} parameterized by vertex deletion distance to chain graphs ``improves'' from $O(k(n+m))$ to $O(k^{7.5}+n+m)$.

We conclude with some questions and tasks for future research.
Can the size or the running time of the kernel with respect to feedback vertex set (see \cref{sec:general-case}) be improved?
In particular, can the exponential upper bound on the kernel size be decreased to a polynomial upper bound?
Is there a linear-time computable kernel for \Match parameterized by the treedepth~$\ell$ (assuming that $\ell$ is given)?
This would complement the recent $O(\ell m)$~time algorithm~\cite{IOO18}.
Can one extend the kernel of \cref{sec:bipartite-case} from \BipMatch to \Match parameterized by the distance to chain graphs?

\paragraph{Acknowledgment.} We thank anonymous reviewers of \emph{Algorithmica} for their valuable feedback. 

\bibliographystyle{abbrvnat}
\bibliography{bib} %

\newcommand{\bibremark}[1]{}
\begin{thebibliography}{27}
\providecommand{\natexlab}[1]{#1}
\providecommand{\url}[1]{\texttt{#1}}
\expandafter\ifx\csname urlstyle\endcsname\relax
  \providecommand{\doi}[1]{doi: #1}\else
  \providecommand{\doi}{doi: \begingroup \urlstyle{rm}\Url}\fi

\bibitem[Bar-Yehuda et~al.(1998)Bar-Yehuda, Geiger, Naor, and Roth]{BGNR98}
R.~Bar-Yehuda, D.~Geiger, J.~Naor, and R.~M. Roth.
\newblock Approximation algorithms for the feedback vertex set problem with
  applications to constraint satisfaction and {B}ayesian inference.
\newblock \emph{SIAM Journal on Computing}, 27\penalty0 (4):\penalty0 942--959,
  1998.

\bibitem[Bartha and Kresz(2009)]{BK09a}
M.~Bartha and M.~Kresz.
\newblock A depth-first algorithm to reduce graphs in linear time.
\newblock In \emph{Proceedings of the 11th International Symposium on Symbolic
  and Numeric Algorithms for Scientific Computing (SYNASC'~09)}, pages
  273--281. {IEEE}, 2009.

\bibitem[Blum(1990)]{Blu90}
N.~Blum.
\newblock A new approach to maximum matching in general graphs.
\newblock In \emph{Proceedings of the 17th International Colloquium on
  Automata, Languages, and Programming (ICALP~'90)}, volume 443 of \emph{LNCS},
  pages 586--597. Springer, 1990.

\bibitem[Brandst{\"a}dt et~al.(1999)Brandst{\"a}dt, Le, and Spinrad]{BLS99}
A.~Brandst{\"a}dt, V.~B. Le, and J.~P. Spinrad.
\newblock \emph{Graph Classes: a Survey}, volume~3 of \emph{SIAM Monographs on
  Discrete Mathematics and Applications}.
\newblock SIAM, 1999.

\bibitem[Cai(2003)]{Cai03}
L.~Cai.
\newblock Parameterized complexity of {V}ertex {C}olouring.
\newblock \emph{Discrete Applied Mathematics}, 127\penalty0 (1):\penalty0
  415--429, 2003.

\bibitem[Chang(1996)]{Cha96}
M.~Chang.
\newblock Algorithms for maximum matching and minimum fill-in on chordal
  bipartite graphs.
\newblock In \emph{Proceedings of the 7th International Symposium on Algorithms
  and Computation (ISAAC~'96)}, volume 1178 of \emph{LNCS}, pages 146--155.
  Springer, 1996.

\bibitem[Coudert et~al.(2019)Coudert, Ducoffe, and Popa]{CDP19}
D.~Coudert, G.~Ducoffe, and A.~Popa.
\newblock Fully polynomial {FPT} algorithms for some classes of bounded
  clique-width graphs.
\newblock \emph{{ACM} Transactions on Algorithms}, 15\penalty0 (3):\penalty0
  33:1--33:57, 2019.

\bibitem[Dahlhaus and Karpinski(1998)]{DK98}
E.~Dahlhaus and M.~Karpinski.
\newblock Matching and multidimensional matching in chordal and strongly
  chordal graphs.
\newblock \emph{Discrete Applied Mathematics}, 84\penalty0 (1–3):\penalty0
  79--91, 1998.

\bibitem[Duan and Pettie(2014)]{DP14}
R.~Duan and S.~Pettie.
\newblock Linear-time approximation for maximum weight matching.
\newblock \emph{Journal of the ACM}, 61\penalty0 (1):\penalty0 1:1--1:23, 2014.

\bibitem[Fomin et~al.(2018)Fomin, Lokshtanov, Pilipczuk, Saurabh, and
  Wrochna]{FLPSW18}
F.~V. Fomin, D.~Lokshtanov, M.~Pilipczuk, S.~Saurabh, and M.~Wrochna.
\newblock Fully polynomial-time parameterized computations for graphs and
  matrices of low treewidth.
\newblock \emph{{ACM} Transactions on Algorithms}, 14\penalty0 (3):\penalty0
  34:1--34:45, 2018.

\bibitem[Gabow and Tarjan(1985)]{GT85}
H.~N. Gabow and R.~E. Tarjan.
\newblock A linear-time algorithm for a special case of disjoint set union.
\newblock \emph{Journal of Computer and System Sciences}, 30\penalty0
  (2):\penalty0 209--221, 1985.

\bibitem[Gabow and Tarjan(1991)]{GT91}
H.~N. Gabow and R.~E. Tarjan.
\newblock Faster scaling algorithms for general graph-matching problems.
\newblock \emph{Journal of the ACM}, 38\penalty0 (4):\penalty0 815--853, 1991.

\bibitem[Giannopoulou et~al.(2017)Giannopoulou, Mertzios, and
  Niedermeier]{GMN17}
A.~C. Giannopoulou, G.~B. Mertzios, and R.~Niedermeier.
\newblock Polynomial fixed-parameter algorithms: {A} case study for longest
  path on interval graphs.
\newblock \emph{Theoretical Computer Science}, 689:\penalty0 67--95, 2017.

\bibitem[Guo et~al.(2004)Guo, H{\"u}ffner, and Niedermeier]{GHN04}
J.~Guo, F.~H{\"u}ffner, and R.~Niedermeier.
\newblock A structural view on parameterizing problems: Distance from
  triviality.
\newblock In \emph{Proceedings of the 1st International Workshop on
  Parameterized and Exact Computation (IWPEC~'04)}, volume 3162 of \emph{LNCS},
  pages 162--173. Springer, 2004.

\bibitem[Gupta and Peng(2013)]{GP13}
M.~Gupta and R.~Peng.
\newblock Fully dynamic {(1+} e)-approximate matchings.
\newblock In \emph{Proceedings of the 54th Annual IEEE Symposium on Foundations
  of Computer Science (FOCS~'13)}, pages 548--557. IEEE, 2013.

\bibitem[Hopcroft and Karp(1973)]{HK73}
J.~E. Hopcroft and R.~M. Karp.
\newblock An $n^{5/2}$ algorithm for maximum matchings in bipartite graphs.
\newblock \emph{SIAM Journal on Computing}, 2\penalty0 (4):\penalty0 225--231,
  1973.

\bibitem[Iwata et~al.(2018)Iwata, Ogasawara, and Ohsaka]{IOO18}
Y.~Iwata, T.~Ogasawara, and N.~Ohsaka.
\newblock On the power of tree-depth for fully polynomial {FPT} algorithms.
\newblock In \emph{35th Symposium on Theoretical Aspects of Computer Science
  (STACS '18)}, volume~96 of \emph{LIPIcs}, pages 41:1--41:14. Schloss Dagstuhl
  - Leibniz-Zentrum fuer Informatik, 2018.

\bibitem[Karp and Sipser(1981)]{KS81}
R.~M. Karp and M.~Sipser.
\newblock Maximum matchings in sparse random graphs.
\newblock In \emph{Proceedings of the 22nd Annual IEEE Symposium on Foundations
  of Computer Science (FOCS~'81)}, pages 364--375. IEEE, 1981.

\bibitem[Korenwein et~al.(2018)Korenwein, Nichterlein, Niedermeier, and
  Zschoche]{KNNZ18}
V.~Korenwein, A.~Nichterlein, R.~Niedermeier, and P.~Zschoche.
\newblock Data reduction for maximum matching on real-world graphs: Theory and
  experiments.
\newblock In \emph{Proceedings of the 26th Annual European Symposium on
  Algorithms ({ESA} 2018)}, volume 112 of \emph{LIPIcs}, pages 53:1--53:13.
  Schloss Dagstuhl - Leibniz-Zentrum fuer Informatik, 2018.

\bibitem[Kratsch and Nelles(2018)]{KN18}
S.~Kratsch and F.~Nelles.
\newblock Efficient and adaptive parameterized algorithms on modular
  decompositions.
\newblock In \emph{Proceedings of the 26th Annual European Symposium on
  Algorithms ({ESA} 2018)}, volume 112 of \emph{LIPIcs}, pages 55:1--55:15.
  Schloss Dagstuhl - Leibniz-Zentrum fuer Informatik, 2018.

\bibitem[Mertzios et~al.(2018)Mertzios, Nichterlein, and Niedermeier]{MNN17}
G.~B. Mertzios, A.~Nichterlein, and R.~Niedermeier.
\newblock Linear-time algorithm for maximum-cardinality matching on
  cocomparability graphs.
\newblock \emph{SIAM Journal on Discrete Mathematics}, 32\penalty0
  (4):\penalty0 2820--2835, 2018.

\bibitem[Micali and Vazirani(1980)]{MV80}
S.~Micali and V.~V. Vazirani.
\newblock An~{$O(\sqrt{|V|} |E|)$} algorithm for finding maximum matching in
  general graphs.
\newblock In \emph{Proceedings of the 21st Annual IEEE Symposium on Foundations
  of Computer Science (FOCS~'80)}, pages 17--27. IEEE, 1980.

\bibitem[Mucha and Sankowski(2004)]{MS04}
M.~Mucha and P.~Sankowski.
\newblock Maximum matchings via {G}aussian elimination.
\newblock In \emph{Proceedings of the 45th Annual IEEE Symposium on Foundations
  of Computer Science (FOCS~'04)}, pages 248--255. IEEE, 2004.

\bibitem[Nešetřil and de~Mendez(2012)]{NO12}
J.~Nešetřil and P.~O. de~Mendez.
\newblock \emph{Sparsity - Graphs, Structures, and Algorithms}, volume~28 of
  \emph{Algorithms and Combinatorics}.
\newblock Springer, 2012.

\bibitem[Skiena(2010)]{Ski10}
S.~S. Skiena.
\newblock \emph{The Algorithm Design Manual}.
\newblock Springer, 2010.

\bibitem[Steiner and Yeomans(1996)]{SY96}
G.~Steiner and J.~S. Yeomans.
\newblock A linear time algorithm for maximum matchings in convex bipartite
  graphs.
\newblock \emph{Computers {\&} Mathematics with Applications}, 31:\penalty0
  91--96, 1996.

\bibitem[Yuster(2013)]{Yus13}
R.~Yuster.
\newblock Maximum matching in regular and almost regular graphs.
\newblock \emph{Algorithmica}, 66\penalty0 (1):\penalty0 87--92, 2013.

\end{thebibliography}

\end{document}